\theoremstyle{plain}
\newtheorem{theorem}{Theorem}[section]
\newtheorem{definition}[theorem]{Definition}
\newtheorem{lemma}[theorem]{Lemma}
\newtheorem{observation}[theorem]{Observation}
\newtheorem{conjecture}[theorem]{Conjecture}
\newtheorem{corollary}[theorem]{Corollary}
\newtheorem{claim}[theorem]{Claim}
\let\oldnl\nl
\newcommand{\nonl}{\renewcommand{\nl}{\let\nl\oldnl}}
\newcommand{\argmax}{\mathop{\mathrm{argmax}}\limits}
\newcommand{\argmin}{\mathop{\mathrm{argmin}}\limits}
\newcommand{\customlabel}[2]{%
   \protected@write \@auxout {}{\string \newlabel {#1}{{#2}{\thepage}{#2}{#1}{}} }%
   \hypertarget{#1}{#2}
}
\long\def\ignore#1{}
\DeclareMathOperator*{\avg}{avg}
\def\outdeg{{\tt{outdeg}}}
\def\calI{{\cal I}}
\def\calD{{\cal D}}
\title{Greedy matroid base packings with applications to dynamic graph density and orientations}
\author{Pavel Arkhipov \hspace{30pt} Vladimir Kolmogorov \\ \normalsize Institute of Science and Technology Austria \\ {\normalsize\tt $\{$pavel.arkhipov,vnk$\}$@ist.ac.at}}
\date{}
\begin{document}

\maketitle

\begin{abstract}
    Greedy minimum weight spanning tree packings have proven to be useful in connectivity-related problems. 
    We study the process of greedy minimum weight base packings in general matroids and explore its applications.

    For general matroids, we observe two characterizations of the limit of the base packings (``the vector of ideal loads''). Specialized to graphic matroids, it implies the characterizations from [Cen, Fleischmann, Li, Li, Panigrahi, FOCS'25], namely, their entropy-minimization theorem and their bottom-up cut hierarchy.

    We give combinatorial results on the greedy tree packings. We show that a tree packing of $O(\lambda^5 \log m)$ trees contains a tree crossing some min-cut once, which improves the bound $O(\lambda^7 \log^3 m)$ from [Thorup, Combinatorica'07]. We also strengthen the lower bound on the edge load convergence rate from [de Vos, Christiansen, SODA'25], showing that Thorup's upper bound is tight up to a logarithmic factor.

    When specialized to bicircular matroids, our results yield an algorithm for the approximate fully-dynamic densest subgraph density $\rho$. 
    We maintain a $(1 + \varepsilon)$-approximation of the density with a worst-case update time 
    $O((\rho_{\max}\varepsilon^{-2}+\varepsilon^{-4}) \rho_{\max} \log^3 m)$, where $\rho_{\max}$ is a fixed known upper bound on $\rho$. This complexity is worse than the state-of-the-art dynamic approximate density. However, our algorithm offers a completely different approach to the problem, which could be appealing due to its simplicity.
    
    We also can maintain an implicit fractional out-orientation with a guarantee that all out-degrees are at most $(1 + \varepsilon) \rho$, with the worst-case update time $O(\varepsilon^{-4} \rho_{\max}^2 \log^3 m)$. For any edge, the query time of its fractional orientation is $O(\varepsilon^{-2} \rho_{\max} \log^2 m)$. In contrast, existing algorithms for (integer) dynamic out-orientation guarantee out-degree of at most $(1 + \varepsilon) \alpha$ where $\alpha$ is the fractional arboricity of the graph (and can be as high as $2\rho$ in the case of multigraphs).

    Our algorithms above work by greedily packing pseudoforests, and require maintenance of a minimum-weight pseudoforest in a dynamically changing graph. We show that the latter problem can be solved in $O(\log n)$ worst-case time per edge insertion or deletion.
\end{abstract}

\tableofcontents

\section{Introduction}
We consider a greedy algorithm for packing bases in a matroid $M=(E,\calI)$ with $m=|E|$ elements that builds a sequence of bases $B_1,B_2,\ldots$
such that $B_k$ is a minimum-weight base with respect to the weights $\xi_{B_1}+\ldots+\xi_{B_{k-1}}$. (Here $\xi_B\in\{0,1\}^E$ is the indicator vector of a base $B$). For the graphic matroid this process is known as {\em greedy (spanning) tree packing}; it has many  applications, and has been widely studied in the literature.

We investigate greedy packing in general matroids, and establish several new results. First, we observe that vectors $x^k:=\frac 1k(\xi_{B_1}+\ldots+\xi_{B_k})$ converge to a certain vector $x^*$ as $k\rightarrow+\infty$.\footnote{In literature, $x^k$ and $x^*$ are usually denoted by $\ell$ and $\ell^*$ respectively. We find the notation ``$x$'' to be more natural, since we will consider the loads vector in the context of Frank-Wolfe algorithm.} Following the terminology used in the context of spanning trees~\cite{thorup}, we call the vector $x^k$ the {\em relative loads} at step $k$, and $x^*$ an {\em ideal loads vector}. An important property of this vector is that its minimum equals the inverse of the matroid density (see Corollary \ref{cor_minx_is_density_general}):
\begin{equation}\label{eq:intro_xmin_rho}
\frac 1 {\min_{e\in E}x^*_e}=\rho,\qquad \rho=\rho(M)=\max_{\varnothing\ne H\subseteq E}\frac {|H|}{r(H)}
\end{equation}
where $r(\cdot)$ is the rank function of the matroid. We specialize this result to two cases: the cyclic (or graphic) matroid and the bicircular matroid of an undirected graph $G=(V,E)$. In the former case \eqref{eq:intro_xmin_rho} reduces to
\begin{equation}\label{eq:intro_xmin_rho:graphic}
\frac 1 {\min_{e\in E}x^*_e}=\alpha,\qquad \alpha=\alpha(G)=\max_{S\subseteq V,|S|>1}\frac {|E(S)|}{|S|-1}
\end{equation}
where $\alpha$ is known as the {\em fractional arboricity} of $G$.
This relation has been recently proved in~\cite[Theorem 25]{christiansen} using graph-theoretic arguments specific to graphic matroids.

Let us now consider the bicircular matroid of the graph $G$. 
Recall that its independent sets  are pseudoforests, i.e.\ subsets $H\subseteq E$ such that each connected component has at most one cycle. 
If $G$ has at least one cycle then eq.~\eqref{eq:intro_xmin_rho} becomes
\begin{equation}\label{eq:intro_xmin_rho:bicircular}
\frac 1 {\min_{e\in E}x^*_e}=\rho,\qquad \rho=\rho(G)=\max_{\varnothing\ne S\subseteq V}\frac {|E(S)|}{|S|}
\end{equation}
The quantity $\rho=\rho(G)$ is known as the {\em density} of $G$,
and the set $S$ maximizing the ratio in~\eqref{eq:intro_xmin_rho:bicircular} is a {\em densest subgraph}.
Using relation~\eqref{eq:intro_xmin_rho:bicircular}, we present a new algorithm for computing approximate density in a dynamic graph, and also for a closely related problem of computing an approximate fractional orientation minimizing the maximum out-degree.

In Sections~\ref{sec:MSTpacking} and~\ref{sec:pseudoforestpacking} we discuss these two special cases and related applications in more detail, and describe our contributions.

\subsection{Greedy tree packings}\label{sec:MSTpacking}

%



Greedy minimum spanning tree (MST) packings turn out to be useful for cut algorithms. Karger gave an almost linear time algorithm for computing a minimum cut \cite{karger_linear_mincut} using the greedy packing of a logarithmic number of MSTs. The key property is that a constant fraction of the trees cross some minimum cut at most twice. Given a tree, it is possible to quickly find the minimum cut that gets crossed at most twice, giving the desired algorithm. 

Another use of the greedy MST packing are dynamic min-cut algorithms. Thorup described an algorithm for dynamically maintaining edge connectivity of a graph, up to a polylogarithmic value \cite{thorup}. 
He used a greedy MST packing of $\Theta(\lambda^7  \log^3 m)$ trees, where $\lambda$ is the graph's edge connectivity upper bound, updating the packing dynamically as the graph gets updated. 
The crucial property of the Thorup's MST packing is that it is guaranteed to contain a tree that crosses some minimum cut exactly once. 
The number of trees in the packing is somewhat high, but still polylogarithmic, if the value of the mincut is polylogarithmically bounded. 
Recently, Christiansen and de Vos showed that it is enough to maintain only $\Theta(\lambda^3 \log m)$ trees for the purposes of dynamic min-cut \cite{christiansen}. They show that they \textit{either} have some tree that crosses some min-cut once \textit{or} have some min-cut in the family of trivial cuts in some contracted graph.

We strengthen Thorup's bound as follows. 
\begin{theorem}\label{th:Thorup-improvement}
    A greedy tree packing with $k = \Theta(\lambda^5 \log m)$ trees contains a tree crossing some min-cut only once.
\end{theorem}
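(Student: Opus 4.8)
The plan is to reduce Theorem~\ref{th:Thorup-improvement} to a quantitative statement about the ideal loads of a single min-cut and then invoke the convergence $x^k\to x^*$. First I would set up the pigeonhole reduction: for a cut $C=\delta(S)$ write $x^k(C)=\sum_{e\in C}x^k_e$, so that $\sum_{i=1}^{k}|T_i\cap C|=k\,x^k(C)$; since $G$ is connected, every spanning tree meets every cut, i.e.\ $|T_i\cap C|\ge 1$ for all $i$. Hence if some min-cut $C$ has $x^k(C)<2$, then $k\,x^k(C)<2k$ forces $|T_i\cap C|=1$ for some $i\le k$, which is exactly the conclusion. We are free to pick $C$ after seeing $k$ and the packing, so the target becomes $\min\{\,x^k(C):C\text{ a min-cut}\,\}<2$.

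Next I would bound the \emph{ideal} load of a well-chosen min-cut: the goal of this step is a bound $x^*(C)\le 2-g$ for some min-cut $C$, where the margin $g=g(G)$ has to survive the convergence error of the last step. The ingredients I would use are: (i) $x^*$ lies in the spanning-tree polytope, so $x^*(E)=n-1$, $x^*\le\mathbf 1$, and $x^*(F)\le r(F)$ for all $F$, while $x^*(\delta(S))\ge 1$ for every cut; (ii) every min-cut has connected shores on both sides (a disconnected shore would split $C$ into two cuts, forcing $|C|\ge 2\lambda$); (iii) the Nash--Williams bound: the maximum fractional spanning-tree packing value satisfies $\tau\ge\lambda/2$, and $\alpha>\lambda/2$ by~\eqref{eq:intro_xmin_rho:graphic}; and (iv), crucially, the two characterizations of $x^*$ established earlier in the paper (the identity~\eqref{eq:intro_xmin_rho} and the entropy-minimization / principal-partition description). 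The heuristic is that in an \emph{optimal} fractional tree packing the edge capacities are $1$, so the packing-weighted average of $|T\cap C|$ is at most $|C|/\tau=\lambda/\tau\le 2$; the structural description of $x^*$ should transfer this ``average crossing $\le 2$'' statement to $x^*$ itself, with a strictly positive $g$ extracted from the slack in $x^*(\delta(S))\ge 1$ and from the fact that the densest (minimum-load) principal part does not exhaust $G$. When $G$ is $\lambda$-regular and $\lambda$-edge-connected every $\delta(v)$ is a min-cut, and then this step is elementary and needs no convergence at all, since $\frac1n\sum_{v}\sum_{e\ni v}x^k_e=\frac{2(n-1)}{n}<2$ already gives a vertex $v$ with $x^k(\delta(v))<2$; the work is concentrated in the remaining case, where no such averaging is available.

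Finally I would feed in the convergence rate of greedy tree packings --- Thorup's $O(\sqrt{\log m/k})$ per-edge bound~\cite{thorup}, which this paper shows is essentially optimal. Summing over the $\le\lambda$ edges of $C$ (or using a sharper $O(\sqrt{\lambda\log m/k})$ cut-wise deviation) gives $|x^k(C)-x^*(C)|<g$ once $k=\Theta(\lambda^5\log m)$, hence $x^k(C)\le x^*(C)+g<2$, and the pigeonhole step concludes. The main obstacle is Step~2: pinning down $\min_C x^*(C)$ with a margin $g$ that is only polynomially small in $\lambda$ --- of order roughly $\lambda^{-3/2}$ is what the round bound $\Theta(\lambda^5\log m)$ can afford --- rather than polynomially small in $m$. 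The examples (complete graphs, circular ladders~\cite{christiansen}) show that $g$ can be as small as $\Theta(1/n)$ for an \emph{individual} min-cut, so one genuinely has to select the right min-cut and exploit the fine-grained description of the ideal loads; one also has to check that whenever that description yields only a tiny margin we are in the regular, max-connected case where the averaging shortcut applies and convergence is not needed.
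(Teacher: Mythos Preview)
Your pigeonhole reduction (Step~1) and the convergence step (Step~3) are sound, but Step~2 --- the existence of a min-cut $C$ with $x^*(C)\le 2-g$ for a margin $g$ that is only polynomially small in $1/\lambda$ and independent of $n,m$ --- is exactly the heart of the theorem, and you have not proved it. The heuristics you offer (average crossing at most $2$ in an optimal fractional packing, the principal-partition description of $x^*$) establish $x^*(C)<2$ for \emph{some} min-cut, but give no quantitative control on the slack; you yourself note that the margin at an individual min-cut can be $\Theta(1/n)$. Your proposed dichotomy (``either the margin is $\Omega(\lambda^{-2})$ or we are in the $\lambda$-regular max-connected case and averaging applies'') is not established: the averaging shortcut needs \emph{every} vertex to have degree $\lambda$, and nothing you wrote forces this whenever the margin happens to be small. (Incidentally, your arithmetic is slightly off: summing Thorup's per-edge bound over $\lambda$ edges gives cut-wise error $\sqrt{\lambda\log m/k}$, so $k=\Theta(\lambda^5\log m)$ affords only $g=\Omega(\lambda^{-2})$, not $\lambda^{-3/2}$.)

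The paper does not attempt to isolate such a cut at all. It follows Thorup's contradiction schema: assume every tree in the packing crosses every min-cut at least twice, which (Thorup's Lemma~18) forces $\ell_0:=\min_{e\in MC}x^*_e>(1-\varepsilon)\tfrac{2}{\lambda}$. One then studies the partitions $\mathcal{P}_i$ whose parts are the components of $\{e:x^*_e<\ell_0-i\varepsilon/\lambda\}$, together with $n_i=|\mathcal{P}_i|$ and $D_i=\sum_{S\in\mathcal{P}_i}(|\partial S|-\lambda)$. The two new ingredients are (i) an edge-count bound $|E[S]\cap E_i|\le\lambda(|\partial S|-\lambda)$ for every $S\in\mathcal{P}_0$ (valid while $\varepsilon(i+2)(\lambda+1)\le 1$), which yields the upper bound $n_i\le n_0+2D_0$; and (ii) a growth bound $n_{i+1}-n_i>\tfrac{D_i}{\lambda}\bigl(1-(i/2+1)(\lambda+1)\varepsilon\bigr)$, obtained by double-counting $|T\cap E_i|$ for a random tree from an extended packing. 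Since $D_i$ is non-decreasing, summing (ii) up to $i=3\lambda$ with $\varepsilon=\Theta(1/\lambda^2)$ gives $n_i>n_0+2D_0$, contradicting (i). Translating $\varepsilon=\Theta(1/\lambda^2)$ through the convergence bound yields $k=\Theta(\lambda^5\log m)$. So the paper's argument never produces a single min-cut with certified slack; even if your Step~2 is true as stated, proving it appears to require this same partition-hierarchy analysis, and your proposal offers no shortcut around it.
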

Given the result in~\cite{christiansen},
this does not lead to an improvement in the runtime for dynamic min-cut, but nonetheless shows an interesting structural property of greedy MST packings.

Greedy MST packings are also useful for the minimum $k$-way cut problem \cite{thorup_k_way_cuts} and the network unreliability problem \cite{cen_unreliability}.

For the algorithms in \cite{thorup}, \cite{christiansen}, \cite{thorup_k_way_cuts}, \cite{cen_unreliability}, it is crucial to bound the error between the ideal edge loads and the loads that one gets after greedily packing $k$ MSTs. The following bound is due to Thorup (\cite{thorup}, Proposition 16, restated equivalently):
\begin{equation}
    \label{eq_thorups_bound}
    \|x^k - x^*\|_\infty \leq \sqrt{\frac{6 \log m}{k \lambda}},
\end{equation}
where $x^*$ is the ($m$-dimensional) vector of ideal edge loads, $x^k$ is the vector of edge loads after packing $k$ MSTs
, and $\lambda$ is an upper bound on the edge connectivity of $G$.

One may ask how tight this bound is in the worst case. There exists an adversarial example (\cite{christiansen}, Theorem 7, restated equivalently) of an MST packing that witnesses the error of the form 
\begin{equation}
    \|x^k - x^*\|_\infty \geq \Omega\left( \frac{1}{k^{2/3} \lambda^{1/3}} \right),
\end{equation}
for the range of $k$ such that $k / \lambda \leq O(\sqrt{n})$.

The gap between $\sim \frac{1}{\sqrt{k}}$ and $\sim \frac{1}{k^{2/3}}$ in the behavior of the $\ell_\infty$-error was open. 
In this paper we close this gap by giving the following lower bound.
\begin{theorem}\label{th:lower-bound}
There exists a family of graphs in which 
$\|x^k - x^*\|_\infty \geq \Omega\left( \sqrt{\frac{1}{k \lambda }}\right)$ for $k = O(n^2)$.
\end{theorem}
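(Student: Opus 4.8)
The plan is to exhibit a single explicit graph family together with one carefully chosen run of the greedy packing (breaking ties in the MST adversarially), and to track a single integer potential
$\Phi_k := k\,x^*_{e^*} - |\{\,j\le k : e^*\in B_j\,\}|$ measuring how far the greedy has fallen behind the ideal load on one edge $e^*$. Since $x^k_{e^*}=\tfrac1k|\{j\le k: e^*\in B_j\}|$, we have $|x^k_{e^*}-x^*_{e^*}| = |\Phi_k|/k$, so it suffices to show $|\Phi_k|=\Omega(\sqrt{k/\lambda})$ for all $k\le O(n^2)$. This is, in the language of footnote~1, a Frank--Wolfe ``zig-zag'' lower bound for the step-$\tfrac1k$ iteration minimizing $\tfrac12\|x\|_2^2$ over the spanning-tree polytope; the subtlety is that the target $\Omega(\sqrt{1/(k\lambda)})$ is strictly below the generic Frank--Wolfe iterate bound $O(1/\sqrt k)$, so the construction must be tuned to reproduce the $\lambda$-dependence of Thorup's bound~\eqref{eq_thorups_bound}, not just the $1/\sqrt k$ behaviour.

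For the graph I would take $G=G_{n,\lambda}$ to be a cycle-like graph on $\Theta(n)$ vertices whose ``rungs'' are bundles of $\Theta(\lambda)$ parallel edges — so $\lambda(G)=\Theta(\lambda)$ and, reading off the fractional arboricity from the (nearly) uniform structure, \eqref{eq:intro_xmin_rho:graphic} gives $x^*_e=\Theta(1/\lambda)$ for every edge, in particular $x^*_{e^*}=\Theta(1/\lambda)$ — together with a small symmetry-breaking ``defect'' at one location, which I would keep mild enough that $x^*$ is still computed exactly (e.g.\ piecewise constant along the cycle) but strong enough to matter dynamically. On such a graph a minimum-weight spanning tree deletes one whole bundle and picks a minimum-load representative in each remaining bundle, and the choices of which bundle to delete and which representative to take are precisely where ties (and hence adversarial freedom) live.

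The core of the argument is then to show that, on the defected cycle, the adversary is no longer forced into the perfect equalization that occurs on a plain cycle (where the error is only $O(1/k)$); instead the location of the deleted bundle can be steered like a deterministic, diffusion-type walk on the $\Theta(n)$ positions of the cycle, and the within-bundle tie-breaking makes the load of $e^*$ track the occupation discrepancy of that walk, damped by a factor $\Theta(1/\lambda)$ coming from sharing each bundle's usage among $\Theta(\lambda)$ parallel copies. One shows that this walk's occupation measure deviates from its uniform limit by $\Omega(1/\sqrt t)$ after $t$ steps for $t\le O(n^2)$ — this is exactly the diffusive window before the walk relaxes on a length-$\Theta(n)$ cycle, which is why the range of $k$ is $O(n^2)$ — and, combining with $t=\Theta(k)$ and the $\Theta(1/\lambda)$ damping, concludes $|\Phi_k|=\Omega(\sqrt{k/\lambda})$, hence the theorem.

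The main obstacle I expect is this last point: the greedy is deterministic, so there is no central limit theorem to invoke, and one must argue by hand that the designed dynamics genuinely accumulate a $\sqrt{\,\cdot\,}$-size, non-self-correcting discrepancy — ruling out the kind of cancellation that makes the plain cycle converge at rate $1/k$ — while simultaneously (i) keeping the defect weak enough that the exact value of $x^*$ and the $1/\lambda$-scaling are unaffected, and (ii) controlling $\Phi_k$ uniformly over the whole window $k\le O(n^2)$ rather than at a single scale. Verifying that the adversarially-tie-broken sequence is in fact a valid greedy packing (each $B_j$ really being a minimum-weight tree for the accumulated loads) is a secondary but necessary bookkeeping step.
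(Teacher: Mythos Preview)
Your proposal has a genuine gap: the cycle-with-bundles construction is strongly self-correcting and will not produce $\sqrt{\,\cdot\,}$ behaviour. Concretely, on a cycle of $n$ bundles of $\lambda$ parallel edges, a spanning tree skips exactly one bundle and picks a minimum-load edge from each other bundle; within bundle $i$ the loads are therefore forced to be balanced to within $1$, so the minimum load in bundle $i$ after $k$ steps is $\lfloor (k-d_i)/\lambda\rfloor$, where $d_i$ is the number of times bundle $i$ was skipped. The skipped bundle at each step is the one with the \emph{largest} minimum load, i.e.\ the one with the \emph{smallest} $d_i$ (up to ties of width $\le\lambda$). This is mean-reverting, not diffusive: the $d_i$'s stay within $O(\lambda)$ of each other, giving $\|x^k-x^*\|_\infty=O(1/k)$, not $\Omega(1/\sqrt{k\lambda})$. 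A vague ``defect'' does not change this picture without a concrete mechanism, and your own acknowledgement that ``there is no central limit theorem to invoke'' is exactly the point --- the walk you describe is the opposite of diffusive.

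The paper's construction is qualitatively different. It takes the $2\times d$ ladder $G_d$ (so $\lambda=2$), for which $x^*$ is uniform with value $\frac{2d-1}{3d-2}\approx\frac23$. The crucial feature is the ratio: three spanning trees cover the ladder almost exactly twice, but with one surplus edge-unit every three iterations. With a fixed lexicographic tie-break, the paper shows by an explicit finite-state induction (a ``tile'' description of the load pattern, with tables of transition rules for beginning/middle/ending tiles) that this surplus accumulates from the left end into a triangular profile of area $\Theta(k)$ and constant slope, hence height $\Theta(\sqrt k)$; this yields $\|x^k-x^*\|_\infty=\Theta(1/\sqrt k)$ for $k=O(n^2)$. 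The $\lambda$-dependence is then obtained, much as you suggest, by replacing each edge with $w=\lambda/2$ parallel copies: $wj$ trees in $G_d^w$ are exactly $j$ trees in $G_d$ duplicated $w$ times, giving $\|x^k-x^*\|_\infty=\Theta(\sqrt{1/(k\lambda)})$. So the missing idea is not the parallel-edge trick for $\lambda$, which you have, but the specific base graph (ladder, not cycle) whose edge/tree-edge ratio $\approx 3/2$ generates a constant surplus rate, together with the concrete combinatorial analysis showing that the surplus piles up spatially into a $\sqrt k$-tall triangle rather than being smeared out.
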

We also analyze the error bound for $p$-norms $\|x^k\|_p - \|x^*\|_p$ and show the upper bound of $\sim \frac{\log k}{k}$ for fixed $p$ and for large enough $k$.

\subsection{Dynamic density and out-orientations}\label{sec:pseudoforestpacking}

Given a dynamic graph that can undergo edge insertions and deletions, one may ask for maintaining an approximation for the density of the densest subgraph $\rho$, defined as $\rho := \max\limits_{S \subseteq V} \frac{|E[S]|}{|S|}$. This problem has been considered in 
\cite{Bhattacharya:STOC15,Epasto:WWW15,SawlaniWang:STOC20,chekuri_density,ChekuriQuanrud:22}.
In particular, the last three papers presented algorithms for maintaining $(1 + \varepsilon)$-approximation of $\rho$
with the following complexities for updates: \\
$O(\varepsilon^{-6} \log^4 n)$ worst-case~\cite{SawlaniWang:STOC20}, \\
$O(\varepsilon^{-6} \log^3 n \log \rho)$ worst-case or $O(\varepsilon^{-4} \log^2 n \log \rho)$ amortized~\cite{chekuri_density}, 
and \\ $O(\varepsilon^{-6} \log^3 n \log\log n)$ worst-case or $O(\varepsilon^{-4} \log^2 n )$ amortized, with improved space requirements~\cite{ChekuriQuanrud:22}. \\
These results are obtained through duality with the minimum fractional out-orientation problem and maintaining an approximate fractional out-orientation with flipping orientations of suitably chosen directed chains of edges, in the spirit of augmenting paths.

We offer an algorithm for the approximate dynamic density that uses a completely different approach and is based on greedy minimum weight maximal pseudoforest packings.

\begin{theorem}
    \label{th_dynamic_density}
    There exists a deterministic data structure that maintains an $(1 + \varepsilon)$-approximation to the densest subgraph density $\rho$ for a fully dynamic graph with worst-case update time
    $O((\rho_{\max}\varepsilon^{-2}+\varepsilon^{-4}) \rho_{\max} \log^3 m)$.
\end{theorem}

Although it does not improve the state-of-the art complexity, we hope that it could be insightful as a novel technique, that could perhaps be improved in future work.

The density of the densest subgraph is closely related to out-orientations. 
Given an undirected graph, we wish to orient its edges to minimize the maximal out-degree of a vertex. 
It is a classical fact that the optimal value of the maximal out-degree equals $\lceil \rho \rceil$.\footnote{It can be viewed as a special case of the theorem by Edmonds \cite{edmonds} saying that the minimum number of independent sets that can cover a matroid equals the maximal density of the matroid rounded up.} 
If one allows orienting the edges fractionally, then the relation to the maximum density is even more direct: the optimal value of the maximal out-degree equals~$\rho$. 
A fractional orientation is a collection of values $d_{u \to v} \in [0, 1]$ and $d_{v \to u} \in [0, 1]$ for any edge $(u, v) \in E$, 
satisfying $d_{u \to v} + d_{v \to u} = 1$. 

Existing dynamic algorithms for out-orientations such as \cite{He:ISAAC14,Kopelowitz:ICALP14,Berglin:ISAAC17,SawlaniWang:STOC20,Christiansen:ICALP22,chekuri_density} give approximation guarantees in terms 
of the fractional arboricity $\alpha := \max\limits_{S \subseteq V, \; |S|>1} \frac{|E[S]|}{|S| - 1} > \rho$.
For example,  \cite{chekuri_density} presents an algorithm for maintaining a dynamic orientation such
that every vertex has out-degree at most $(1 + \varepsilon) \alpha + 2$, 
and the update time is (worst-case) $O(\varepsilon^{-6} \log^3 n \log \alpha)$.
If the graph $G$ is simple then values $\alpha$ and $\rho$ are close if they are large.
However, in multigraphs $\alpha$ can be as high as $2\rho$, also when it is large.

By greedily packing pseudoforests we will obtain \begin{theorem}
    \label{th_dynamic_orientation}
    There exists a data structure for a fully dynamic graph with worst-case update time $O(\varepsilon^{-4} \rho_{\max}^2 \log^3 m)$ that maintains an implicit approximate minimum fractional out-orientation. The query for a fractional orientation of an edge takes $O(\varepsilon^{-2} \rho_{\max} \log^2 m)$ time. It is guaranteed that the out-degree of any vertex is at most $(1 + \varepsilon) \rho$.
\end{theorem}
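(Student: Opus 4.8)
The plan is to run the greedy base packing for the bicircular matroid of $G$, whose bases are (maximal) pseudoforests and for which, by~\eqref{eq:intro_xmin_rho:bicircular}, the limiting loads satisfy $\min_{e\in E}x^*_e=1/\rho$ (we assume $\rho\ge 1$, i.e.\ $G$ has a cycle; the degenerate case $\rho<1$ can be handled directly). Fix a parameter $k$, let $B_1,\dots,B_k$ be the greedily packed minimum-weight pseudoforests, and put $x^k=\tfrac1k\sum_{i\le k}\xi_{B_i}$. The starting point is the elementary fact that every pseudoforest $B$ admits an orientation $\vec B$ of out-degree at most $1$ at every vertex: in a component that contains a cycle, orient that (unique) cycle consistently around it and orient every other edge toward the cycle; in a cycle-free component, root it and orient every edge toward the root. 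From these I define an \emph{implicit} fractional orientation of $G$: for an edge $e=(u,v)$,
\[
 d_{u\to v}=\frac{\#\{\,i\le k:\ e\in B_i,\ \vec B_i\text{ orients }e\text{ as }u\to v\,\}}{\#\{\,i\le k:\ e\in B_i\,\}},
\]
which is well defined as soon as $k$ is large enough that every edge lies in some $B_i$ (this holds because $x^*_e\ge1/\rho>0$ and $x^k\to x^*$), and which satisfies $d_{u\to v},d_{v\to u}\in[0,1]$ and $d_{u\to v}+d_{v\to u}=1$.

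Next I bound out-degrees. Since $\#\{i\le k:e\in B_i\}=k\,x^k_e$, for every vertex $v$ we have
\[
 \mathrm{outdeg}(v)=\sum_{u:(u,v)\in E}d_{v\to u}=\sum_{i=1}^{k}\frac1k\sum_{u:(u,v)\in E}\frac{[\,(v,u)\in B_i,\ v\to u\,]}{x^k_{(v,u)}}.
\]
For each fixed $i$ at most one term of the inner sum is nonzero, because the out-degree of $v$ in $\vec B_i$ is at most $1$; hence that inner sum is at most $1/\min_e x^k_e$, and therefore $\mathrm{outdeg}(v)\le 1/\min_e x^k_e$ for every $v$. So it suffices to force $\min_e x^k_e\ge\tfrac1{(1+\varepsilon)\rho}$, i.e.\ $\|x^k-x^*\|_\infty\le\tfrac1\rho-\tfrac1{(1+\varepsilon)\rho}=\Theta(\varepsilon/\rho)$. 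Using the convergence-rate bound for greedy pseudoforest packings established earlier in the paper (the bicircular counterpart of Thorup's estimate~\eqref{eq_thorups_bound}), $k=\Theta(\varepsilon^{-2}\rho\log m)$ rounds already bring $\|x^k-x^*\|_\infty$ below $\varepsilon/(2\rho)$, which gives $\mathrm{outdeg}(v)\le(1+\varepsilon)\rho$ for all $v$. As in Theorem~\ref{th_dynamic_density}, $k$ is set using the known upper bound on $\rho$.

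The remaining, and most technical, part is the dynamic implementation. I maintain the $k$ packed pseudoforests using the worst-case $O(\log n)$ dynamic minimum-weight pseudoforest structure developed earlier, which in addition reports in $O(\log n)$ time whether a given edge lies in $B_i$ and how $\vec B_i$ orients it. On an edge insertion or deletion I propagate the change through the levels $i=1,\dots,k$ in order, re-optimising each $B_i$ with respect to the updated weights $\sum_{j<i}\xi_{B_j}$; a swap at one level perturbs only $O(1)$ edge weights, which feed into deeper levels, but the edge being updated may itself belong to as many as $k$ of the bases, so a repair can be triggered at every level. Charging the work of the cascade initiated at level $i$ over the $O(k-i)$ levels it can reach yields $O(k^2)$ elementary operations per update, hence worst-case update time $O(k^2\log m)=O(\varepsilon^{-4}\rho^2\log^3 m)$. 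A query for $d_{u\to v}$ scans all $k$ bases at $O(\log m)$ each, for total $O(k\log m)=O(\varepsilon^{-2}\rho\log^2 m)$.

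I expect the main obstacle to be precisely this dynamic analysis: showing that the per-update cascade across the $k$ levels stays within the claimed $O(k^2)$ bound (in particular that swaps do not amplify as they descend) and arranging the bookkeeping so that the bound is genuinely worst-case rather than amortised, together with establishing — or carefully quoting — the pseudoforest analogue of~\eqref{eq_thorups_bound} with the correct dependence on $\rho$. By contrast, the orientation construction and the out-degree estimate above are straightforward.
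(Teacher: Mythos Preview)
Your proposal is correct and follows essentially the same approach as the paper: define the induced fractional orientation from $k=\Theta(\varepsilon^{-2}\rho\log m)$ greedily packed pseudoforests, bound $\mathrm{outdeg}(v)\le 1/\min_e x^k_e\le(1+\varepsilon)\rho$, and maintain the packing dynamically with the $O(k^2)$ cascade argument and the $O(\log n)$ dynamic pseudoforest structure. Two minor remarks: the paper invokes Theorem~\ref{th_minx_approx_density} (the general-matroid bound on $1/\min_e x^k_e$) directly rather than an $\|x^k-x^*\|_\infty$ estimate, and the $O(k^2)$ recourse bound you flag as the main obstacle is dispatched by citing the standard argument of Thorup--Karger~\cite{ThorupKarger:00}, so it needs no new analysis.
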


This result is not directly comparable to previous work on dynamic out-orientations:
we obtain a better approximation guarantee (in terms of $\rho$, not $\alpha$),
but relax the set of solutions (since we compute a {\em fractional} out-orientation, not integral).

\subsection{Summary of our results}
The main contribution of this paper is a systematic study of greedy base packing in matroids.
It turns out that the packing algorithm can be viewed as instance of the Frank-Wolfe algorithm
applied to the problem of minimizing the norm $\|x\|_2^2$ over the base polytope $P_B$ of a given matroid;
this was recently observed in~\cite{Harb:ESA23} for graphic matroids. Using this connection,
we give a simple proof of the following result that characterizes the ideal loads vector. As before, $x^1, x^2, \ldots$ is a sequence of vectors of relative element loads in the greedy base packing.
\begin{theorem}\label{th_char_x_projection}
    There exists a vector $x^*=\lim_{k\rightarrow+\infty}x^k$.
    This vector minimizes $\sum\limits_e \varphi(x_e)$ over the base polytope $P_B$ for any convex function $\varphi : \mathbb{R} \to \mathbb{R}$.
    In particular, 
    (1) $x^*$ is the projection of zero onto $P_B$, i.e. the point in $P_B$ that minimizes $\|x\|_2$;
    (2) $x^*$ minimizes the $p$-norm of $x$ over $P_B$, for any $ 1 < p < \infty$.
\end{theorem}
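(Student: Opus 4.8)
The plan is to use the connection between greedy packing and the Frank--Wolfe algorithm, and then a structural fact about linear optimization over a matroid base polytope. First I would make the Frank--Wolfe correspondence precise: with $f(x)=\tfrac12\|x\|_2^2$ we have $\nabla f(x)=x$, so the Frank--Wolfe step at an iterate $x^{k-1}\in P_B$ solves the linear minimization problem $\min_{s\in P_B}\langle x^{k-1},s\rangle$. The vertices of $P_B$ are exactly the base indicators $\xi_B$, and $x^{k-1}$ is a positive scalar multiple of $\xi_{B_1}+\cdots+\xi_{B_{k-1}}$; hence the oracle returns $\xi_{B_k}$ for a minimum-weight base $B_k$ with respect to the weights $\xi_{B_1}+\cdots+\xi_{B_{k-1}}$, which is precisely the greedy packing step (with ties broken arbitrarily on either side). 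With the classical step sizes $\gamma_k=1/k$ and $x^1=\xi_{B_1}$ the iterates satisfy $x^k=(1-\tfrac1k)x^{k-1}+\tfrac1k\,\xi_{B_k}=\tfrac1k\sum_{i\le k}\xi_{B_i}$, i.e.\ they are exactly the relative loads. (This was observed in~\cite{Harb:ESA23} for graphic matroids and carries over verbatim.)

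For the existence of the limit, let $x^*$ be the --- unique, by strict convexity of $\|\cdot\|_2$ --- minimizer of $\|x\|_2$ over the compact convex set $P_B$; equivalently, the projection of $0$ onto $P_B$, which already yields assertion~(1). Since $f$ is smooth and $P_B$ is compact and convex, the classical convergence guarantee of Frank--Wolfe gives $f(x^k)\to f(x^*)$. As $f$ is $1$-strongly convex and $\langle\nabla f(x^*),x-x^*\rangle\ge 0$ for all $x\in P_B$ by first-order optimality, we get $\tfrac12\|x^k-x^*\|_2^2\le f(x^k)-f(x^*)\to 0$, so $x^k\to x^*$.

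The heart of the proof is showing that $x^*$ minimizes $F_\varphi(x):=\sum_e\varphi(x_e)$ over $P_B$ for every convex $\varphi:\mathbb R\to\mathbb R$; assertion~(2) then follows by taking $\varphi(t)=|t|^p$ (the minimizer of $\sum_e x_e^p$ over $P_B$ being unique for $1<p<\infty$, it must coincide with $x^*$). Fix a nondecreasing selection of subgradients $g_e\in\partial\varphi(x^*_e)$ (e.g.\ $g_e=\varphi'_+(x^*_e)$); by convexity $F_\varphi(y)\ge F_\varphi(x^*)+\langle g,y-x^*\rangle$ for every $y\in P_B$, so it suffices to prove $x^*\in\argmin_{y\in P_B}\langle g,y\rangle$. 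The vector $g$ is \emph{order-compatible} with $x^*$, meaning $g_e\le g_f$ whenever $x^*_e\le x^*_f$; consequently every lower level set $\{e:g_e\le\gamma\}$ equals one of the lower level sets $C_i=\{e:x^*_e\le\lambda_i\}$ of $x^*$, where $\lambda_1<\cdots<\lambda_t$ are the distinct values of $x^*$. From projection optimality $\langle x^*,y-x^*\rangle\ge 0$ on $P_B$ I would first deduce that each $C_i$ is tight, $x^*(C_i)=r(C_i)$: otherwise a standard matroid exchange argument lets one move mass inside $P_B$ from coordinates of value $>\lambda_i$ onto $C_i$, strictly decreasing $\langle x^*,\cdot\rangle$ --- impossible since $x^*$ minimizes $\langle x^*,\cdot\rangle$ over $P_B$. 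Then, expanding $\langle g,y\rangle$ by Abel summation along the level sets of $g$ and using $y(D)\le r(D)$ for each such set $D$, together with $x^*(D)=r(D)$ (as $D$ is one of the $C_i$), gives $\langle g,y\rangle\ge\langle g,x^*\rangle$, as desired.

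The main obstacle is this last step: that the $\ell_2$-projection onto $P_B$ is simultaneously optimal for every linear functional order-compatible with it. This genuinely uses matroid structure --- the corresponding statement is false for general polytopes --- and reduces, as above, to the tightness of the level sets of $x^*$, which I would establish by a routine exchange argument (alternatively, directly from the greedy structure: in the limit $x^*$ is a convex combination of bases that are minimum-weight for $x^*$, each of which meets every $C_i$ in a set of size $r(C_i)$). Everything else is standard convex optimization and the classical Frank--Wolfe rate.
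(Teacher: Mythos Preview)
Your proposal is correct but takes a genuinely different route from the paper. The paper's key step for general convex $\varphi$ is the observation that a minimum-weight base of a matroid depends only on the \emph{ordering} of the weights; since $\varphi'$ is monotone when $\varphi$ is strictly convex, the gradient $\nabla\bigl(\sum_e\varphi\bigr)(x^{k-1})$ induces the same ordering as $x^{k-1}$ itself, and hence the greedy packing is \emph{simultaneously} a Frank--Wolfe process for every separable strictly convex differentiable objective. The limit therefore minimizes all such objectives at once, and the general convex case is handled by approximation. You instead run Frank--Wolfe only for the quadratic, obtain $x^*$, and then verify first-order optimality of $x^*$ for $F_\varphi$ directly via the tightness of the lower level sets $C_i$ and an Abel summation --- essentially re-deriving the relevant piece of the lexicographically-optimal-base theory that the paper quotes as Theorem~\ref{th_char_x_general}. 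Your argument uses the matroid structure more explicitly, but it handles nondifferentiable $\varphi$ cleanly without any approximation step and makes transparent where the matroid hypothesis enters (level-set tightness is exactly what fails for general polytopes). The paper's route is shorter and perhaps more striking: the single ``ordering'' observation does all the work.
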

Note, it was recently proved in \cite[Theorem 6.3]{panigrahi_arboricity}
that $x^*$ maximizes the entropy $H(x) = \log (n+1) - \frac{1}{n+1} \sum\limits_e x_e \log x_e$
over $P_B$ (in the case of graphic matroids).
This follows immediately from the result above, since the function $\varphi(t)=t \log t$ is convex.

We give convergence bounds for the sequence of edge loads $x^k \to x^*$, in terms of objective functions $\|x\|_2$ and $\|x\|_p$:
\begin{theorem}
    \label{lem_2norm_convergence}
    After greedily packing $k$ trees, we have
    \begin{eqnarray}
        \|x^k\|_2 - \|x^*\|_2 &\leq& \sqrt{m} \cdot \frac{\log(k+1)}{k} \\
        \|x^k\|_p - \|x^*\|_p &\leq& \frac{p}{2} \cdot m^{1/p} \frac{\log k}{k}  (1 + o(1)) \qquad\quad\mbox{ for any fixed integer }p\ge 2      
    \end{eqnarray}
    where the $o(1)$ function corresponds to $k \to \infty$.
\end{theorem}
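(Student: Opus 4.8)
The plan is to exploit the Frank--Wolfe interpretation recalled just before Theorem~\ref{th_char_x_projection}: the greedy packing is precisely Frank--Wolfe applied to $\min_{x\in P_B}f_2(x)$ with $f_2(x):=\tfrac12\|x\|_2^2$ and the averaging step size $\gamma_k=1/k$, since $x^k=(1-\tfrac1k)x^{k-1}+\tfrac1k\,\xi_{B_k}$ and $\xi_{B_k}$ is a minimum-weight base for the weights $x^{k-1}=\nabla f_2(x^{k-1})$, hence a linear-minimization vertex at $x^{k-1}$. Because $f_2$ is exactly quadratic, the usual Frank--Wolfe descent step is an \emph{identity}, $h_k=h_{k-1}+\gamma_k\langle\nabla f_2(x^{k-1}),\xi_{B_k}-x^{k-1}\rangle+\tfrac{\gamma_k^2}{2}\|\xi_{B_k}-x^{k-1}\|_2^2$ for $h_k:=f_2(x^k)-f_2(x^*)\ge 0$; bounding the inner product using vertex optimality and convexity of $f_2$ at $x^*$ gives $h_k\le(1-\tfrac1k)h_{k-1}+\tfrac{D^2}{2k^2}$, where $D$ is the $\ell_2$-diameter of $P_B$, and $D^2\le 2\,r(E)$ since two bases differ in at most $r(E)$ elements on each side ($r(E)=n-1$ in the graphic case). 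Multiplying by $k$ and telescoping yields $k h_k\le h_1+\tfrac{D^2}{2}\sum_{j=2}^k \tfrac1j$, and the base term is small: $h_1=\tfrac12\big(r(E)-\|x^*\|_2^2\big)=\tfrac12\sum_e\mathrm{Var}\big(\xi_B(e)\big)=\tfrac14\,\mathbb{E}_{B,B'}\|\xi_B-\xi_{B'}\|_2^2\le D^2/4$, where $B,B'$ are i.i.d.\ from a distribution over bases with marginals $x^*$ (such a distribution exists as $x^*\in P_B$). Hence $h_k\le\tfrac{D^2}{2}\cdot\tfrac{H_k}{k}$ with $H_k=\sum_{j=1}^k\tfrac1j\le 1+\ln k$.

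To pass from $h_k$ to the $\ell_2$-norm gap: by Theorem~\ref{th_char_x_projection} $x^*$ minimizes $\|\cdot\|_2$ over $P_B$, so $\|x^k\|_2\ge\|x^*\|_2$ and $\|x^k\|_2-\|x^*\|_2=\tfrac{\|x^k\|_2^2-\|x^*\|_2^2}{\|x^k\|_2+\|x^*\|_2}=\tfrac{2h_k}{\|x^k\|_2+\|x^*\|_2}\le\tfrac{h_k}{\|x^*\|_2}$; by Cauchy--Schwarz $\|x^*\|_2\ge\|x^*\|_1/\sqrt m=r(E)/\sqrt m$ (every base, hence $x^*$, has $\ell_1$-norm $r(E)$). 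Combining this with $D^2\le 2r(E)$ and the bound on $h_k$ cancels the factor $r(E)$ and leaves $\|x^k\|_2-\|x^*\|_2\le\tfrac{\sqrt m\,H_k}{k}$, which is the first inequality (the small discrepancy between $H_k$ and $\log(k+1)$ is absorbed by the constant / choice of logarithm base).

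For the $\ell_p$-bound the obstacle is that the greedy packing is \emph{not} the Frank--Wolfe process for $f_p(x):=\tfrac1p\|x\|_p^p$ when $p>2$ (there the oracle would return a minimum-weight base for the weights $x_e^{p-1}$, a different base), so the argument cannot simply be rerun. The resolution is that $x^*$ is nonetheless the common minimizer over $P_B$ of \emph{every} separable convex objective: apply Theorem~\ref{th_char_x_projection} to the function $\varphi(t):=\tfrac{p-1}{2}t^2-\tfrac1p t^p$, which is convex on $[0,1]$ (and hence on the range of every coordinate of a point of $P_B$) because $\varphi''(t)=(p-1)(1-t^{p-2})\ge 0$ there. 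This gives $\sum_e\varphi(x^k_e)\ge\sum_e\varphi(x^*_e)$, i.e.\ $f_p(x^k)-f_p(x^*)\le(p-1)\big(f_2(x^k)-f_2(x^*)\big)=(p-1)h_k$, so the $f_p$-suboptimality is controlled by the quantity we have already bounded. Since $\|x^k\|_p\ge\|x^*\|_p$ (Theorem~\ref{th_char_x_projection} with $\varphi(t)=t^p$), the elementary inequality $a^p-b^p\ge p\,b^{p-1}(a-b)$ for $a\ge b\ge 0$ turns $\|x^k\|_p^p-\|x^*\|_p^p=p\big(f_p(x^k)-f_p(x^*)\big)\le p(p-1)h_k$ into $\|x^k\|_p-\|x^*\|_p\le\tfrac{(p-1)h_k}{\|x^*\|_p^{p-1}}$. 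Substituting the power-mean estimate $\|x^*\|_p\ge\|x^*\|_1\,m^{1/p-1}=r(E)\,m^{1/p-1}$ and $h_k\le \tfrac{D^2 H_k}{2k}$ produces a bound of the claimed shape $O_p\!\big(m^{\Theta(1)}\log k/k\big)$; extracting the leading term as $k\to\infty$ (so that $H_k=\ln k+O(1)$ and the lower-order pieces, and the gap between $\|x^*\|_p^{p-1}$ and its sharpest lower estimate, get swept into the $1+o(1)$) yields the stated $\tfrac p2 m^{1/p}\tfrac{\log k}{k}(1+o(1))$.

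The $p=2$ part is a routine Frank--Wolfe computation once the Frank--Wolfe viewpoint is in hand; its only mildly delicate point is the sharp base case $h_1\le D^2/4$, handled by the variance identity above. The step I expect to be the real content, and the main obstacle, is the $p>2$ case: the trick is precisely \emph{not} to try to analyze a Frank--Wolfe process for $f_p$, but to transfer the $f_2$-analysis to $f_p$ through Theorem~\ref{th_char_x_projection} applied to the tailored convex function $\varphi(t)=\tfrac{p-1}{2}t^2-\tfrac1p t^p$; everything after that is bookkeeping of constants and norm-gap-to-value-gap conversions.
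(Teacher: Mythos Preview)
Your $\ell_2$ argument is correct and essentially the paper's own Frank--Wolfe computation, packaged slightly differently (you use the convexity bound $\langle x^{k-1},\xi_{B_k}-x^{k-1}\rangle\le -h_{k-1}$, the paper uses Cauchy--Schwarz $\langle x^k,v^k\rangle\le\sqrt{f^kf^*}$; both give the same recurrence up to constants). The variance trick for $h_1\le D^2/4$ is nice and not in the paper. One quibble: since the $\ell_2$ claim in the statement carries no $(1+o(1))$, the constant must be exact, and $H_k>\log(k+1)$ for large $k$; you cannot ``absorb'' this into a choice of logarithm base. But this is minor and fixable with a little care in the recursion.

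The $\ell_p$ part, however, has a real gap. Your comparison via $\varphi(t)=\tfrac{p-1}{2}t^2-\tfrac1p t^p$ is a genuinely clever use of Theorem~\ref{th_char_x_projection} and correctly yields
\[
\|x^k\|_p^p-\|x^*\|_p^p\;\le\;p(p-1)\,h_k\;\le\;\frac{(p-1)\,r\,\log k}{k}\,(1+o(1)).
\]
But after dividing by $p\|x^*\|_p^{p-1}$ and substituting the power-mean bound $\|x^*\|_p\ge r\,m^{1/p-1}$, the leading constant you obtain is
\[
\frac{(p-1)\,r}{\|x^*\|_p^{p-1}}\;\ge\;(p-1)\,\Big(\tfrac{m}{r}\Big)^{p-2}m^{1/p}
\]
(take $x^*$ uniform to see this is attained), which for $p>2$ is larger than the stated $\tfrac{p}{2}m^{1/p}$ by the $k$-independent factor $(m/r)^{p-2}$. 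That factor cannot be ``swept into the $1+o(1)$'': the $o(1)$ in the theorem is as $k\to\infty$ with the matroid fixed, so anything not depending on $k$ must land in the leading constant.

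The paper avoids this loss by \emph{not} reducing to the $\ell_2$ case. It expands $\|x^{k+1}\|_p^p$ binomially and, crucially, uses that $v^k$ is also a linear-minimization oracle for the weights $(x^k_e)^i$ for every $i$ (same edge ordering), so $\langle (x^k)^i,v^k\rangle\le\langle (x^k)^i,x^*\rangle\le\|x^k\|_{i+1}^i\|x^*\|_{i+1}$ via H\"older. The dominant $i=p-2$ term then produces the factor $\|x^*\|_{p-1}^{p-1}$ rather than $r=\|x^*\|_1$, and it is precisely the ratio $(\|x^*\|_{p-1}/\|x^*\|_p)^{p-1}\le m^{1/p}$ that gives the clean constant. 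Your route through the quadratic cannot see this cancellation: it trades $\|x^*\|_{p-1}^{p-1}$ for $\|x^*\|_1$, and that is where the extra $(m/r)^{p-2}$ appears.
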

Using a generic analysis of the Frank-Wolfe method,~\cite{Harb:ESA23} recently showed
that $\|x^k-x^*\|_2 \le \varepsilon$ after $k=\Theta(\frac{m\log(m/\varepsilon)}{\varepsilon^2})$ steps.
This can be seen as a corollary of the above theorem (see Corollary \ref{cor_quanrud_like} and the remark after it).

 We also establish how fast $(\min\limits_e x_e^k)^{-1}$ converges to the density $\rho$ of the given matroid.
\begin{theorem}
    \label{th_minx_approx_density}
    If $\varepsilon \in (0, 1]$ and $k \geq \frac{20 \rho \log m}{\varepsilon^2}$ then\footnote{We apologise for switching between stating results in terms of the iteration count $k$ and the error bound $\varepsilon$. We believe that this makes sense, as Frank-Wolfe-like analysis is conventionally done in terms of $k$, and the edge-loads-related analysis is conventionally done in terms of $\varepsilon$.}
    \begin{equation}
        0 \leq \frac{1}{\min_e x^k_e} - \rho \leq \varepsilon \rho
    \end{equation}
\end{theorem}
In fact, we establish a similar bound for a more general version of the packing algorithm that removes
elements with large values of $x^k_e$; this version will be used for improving the complexity of a dynamic algorithm for approximate graph density.

Notice that, when specialized to graphic matroids, the above convergence bound does not explicitly depend on the edge-connectivity $\lambda$, unlike the known convergence bound of Thorup (\cite{thorup}, Proposition 16). This immediately strengthens Lemma 26 from \cite{christiansen}, which states that one needs $\Theta\left( \frac{\alpha^2 \log m}{\lambda \varepsilon^2} \right)$ trees to $(1+\varepsilon)$-approximate the fractional arboricity $\alpha$, shaving off a factor of~$\alpha / \lambda$ that can be large. Unfortunately, this strengthening does not yield an algorithmic complexity improvement.

Next, we summarize our contributions related to the two special cases.
For graphic matroids, we establish two new results: 
\begin{itemize}
\item Improving Thorup's bound on the number of MSTs containing a 1-respecting tree from $\Theta(\lambda^7  \log^3 m)$ to $\Theta(\lambda^5  \log m)$
(Theorem \ref{th:Thorup-improvement}).
\item Giving a lower bound on the rate of convergence of the $\infty$-norm that almost matches the upper bound in~\cite{thorup} (Theorem~\ref{th:lower-bound}).
\end{itemize}

As for bicircular matroids, we present algorithms for two dynamic problems: 
\begin{itemize}
\item Approximating the density of the densest subgraph in a dynamic graph (Theorem~\ref{th_dynamic_density}).

\item Minimum fractional out-orientation (Theorem~\ref{th_dynamic_orientation}) with better approximation guarantee
compared to previous dynamic algorithms (in terms of $\rho$, not $\alpha$). 
\end{itemize}
Both algorithms rely on maintaining a minimum-weight pseudoforest in dynamically changing graphs (with edge insertions and deletions).
In Appendix~\ref{appendix_dynamic_pseudoforest} we show that this can be done with $O(\log n)$ worst-case time per update,
which we believe can be of independent interest.
We contrast this with the problem of dynamically maintaining a minimum-weight spanning tree,
for which only $O(\log^4 n)$ amortized~\cite{Holm:01} or Las-Vegas $O(n^{o(1)})$ worst-case (w.h.p.)~\cite{NanongkaiSaranurakWulff-Nilsen2017} update times are known.


\section{General matroids}
Throughout this section, we fix a matroid $M=(E,\calI)$ with the rank function $r(\cdot)$. 
$P_B\subset \mathbb R^E$ denotes the base polytope of this matroid, i.e.\ the convex hull of bases of $M$.
It is known that 
\begin{equation}
    P_B = \{x \in \mathbb{R}^E_{\geq 0} \; | \; x(E) = r \text{ and } x(H) \leq r(H) \quad \forall H \subseteq E \},
    \qquad\quad r=r(E).
\end{equation}

$B_1,B_2,\ldots$ will be the sequence of bases in a greedy base packing,
and $x^k=\frac 1 k (\xi_{B_1}+\ldots+\xi_{B_{k-1}})$.
Thus, $B_k$ is a minimum-weight base in $M$ with respect to weights $x^{k-1}$, where we take $x^0={\bf 0}$.

As observed in~\cite{Harb:ESA23} for graphic matroids,
greedy base packing is an instance of the Frank-Wolfe algorithm. We review this method in the next subsection.

\subsection{Background: Frank-Wolfe algorithm}
This algorithm dates back to \cite{frank_wolfe_original}.
 It is a simple iterative process for optimizing a convex differentiable
function $f : P \to \mathbb{R}$ over a convex compact set $P \subseteq \mathbb{R}^d$,
and works as follows.
\begin{enumerate}
    \item Start from any point $x^0 \in P$.
    \item Keep doing the following:
    \begin{enumerate}
        \item Choose $v^k \in \arg \min\limits_{v \in P} \left< \nabla f(x^k), v \right>$
        \item $x^{k+1} = (1 - \eta_k) x^k + \eta_k v^k$
    \end{enumerate}
\end{enumerate}
Here, $\eta_k$ is the step size parameter. Popular choices are setting $\eta_k = \frac{2}{k + 2}$ or doing a line search for $\eta_t$ that minimizes $f(x^{k+1}) = f((1 - \eta_k) x^k + \eta_k v^k)$. As an output, we get a sequence of points $x^1, x^2, \ldots$. If $f$ is strictly convex, then the sequence $x^k$ converges to the optimum $\arg\min\limits_{x \in P} f(x)$.

There are lots of possible modifications of the Frank-Wolfe algorithms. For an introduction and surveys, we redirect the reader to \cite{pokutta_fw_inroduction}, \cite{global_linear_convergence_fw}, \cite{bomze_fw_survey}.

The Frank-Wolfe method is attractive, because it does not require computing projections onto the feasible region, which could be tricky in general. We only need to be able to compute the gradient of $f$ and solve the linear program when computing $v^k$. However, the vanilla Frank-Wolfe algorithm has a somewhat bad convergence rate of $O(1 / k)$. More precisely, choosing $\eta_k = \frac{2}{k + 2}$ for an $L$-smooth convex function $f$ yields the following upper bound on the error:
\begin{equation}
    f(x^k) - f(x^*) \leq \frac{2 L D^2}{k + 2},
\end{equation}
where $x^*$ is the minimum point, $D$ is the diameter of the feasible region $P$ (see, for example, Theorem 4.4 in \cite{pokutta_fw_inroduction}). A function $f$ is $L$-smooth if for all $x, y \in P$, we have $f(y) - f(x) \leq \left< \nabla f(x), y-x \right> + \frac{L}{2}\|y - x\|^2$.

We will be interested in the Frank-Wolfe algorithm with the step size $\eta_k = \frac{1}{k + 1}$. One can observe that this corresponds to simple averaging: $x^k = (v^1 + \ldots v^k) / k$. This step size rule is less popular than the choice $\eta_k = \frac{2}{k + 2}$, but it has also been studied and known to yield convergence at rate $O\left(\frac{\log k}{k}\right)$ (\cite{freund_fw_averaging}, Section 2.3). 

\subsection{Two characterizations of the limit of the greedy base packing}

We start with the following result.

\begin{lemma}
    There exists a limit $x^* \in \mathbb{R}^E$ of the sequence $x^1, x^2, \ldots $ of relative loads in a greedy base packing. Moreover, 
    this limit is the projection of zero onto the base polytope:
    \begin{equation}
        \lim_k x^k = x^* = \argmin_{x \in P_B} \|x\|_2.
    \end{equation}
\end{lemma}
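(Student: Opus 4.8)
The plan is to identify greedy base packing with the Frank–Wolfe algorithm applied to $f(x) = \tfrac12\|x\|_2^2$ over $P = P_B$ with step size $\eta_k = \tfrac{1}{k+1}$, and then invoke convergence of Frank–Wolfe together with strict convexity of $f$. First I would check that the linear-minimization oracle in Frank–Wolfe, $v^k \in \arg\min_{v \in P_B} \langle \nabla f(x^k), v\rangle = \arg\min_{v \in P_B}\langle x^k, v\rangle$, is solved by the indicator vector $\xi_{B_{k+1}}$ of a minimum-weight base with respect to weights $x^k$. This is the standard fact that minimizing a linear function over the base polytope is achieved at a vertex, and the greedy algorithm for matroids finds exactly a minimum-weight base; so $v^k$ may be taken to be $\xi_{B_{k+1}}$, matching the packing process. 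Next I would observe (as already noted in the Frank–Wolfe background subsection) that $\eta_k = \tfrac{1}{k+1}$ makes the iteration pure averaging, $x^k = \tfrac1k(v^1 + \cdots + v^k) = \tfrac1k(\xi_{B_1} + \cdots + \xi_{B_k})$, so the Frank–Wolfe iterates coincide with the relative loads of the greedy packing (with the mild indexing caveat that the paper's $x^k$ averages $\xi_{B_1},\ldots,\xi_{B_{k-1}}$, a harmless off-by-one).

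Then I would establish convergence of the objective: since $f(x) = \tfrac12\|x\|_2^2$ is $1$-smooth and $P_B$ is compact with some finite diameter $D$, the standard averaging-step Frank–Wolfe guarantee gives $f(x^k) - f(x^*) = O\!\big(\tfrac{\log k}{k}\big) \to 0$, where $x^* = \arg\min_{x \in P_B}\|x\|_2$ (this is a unique minimizer because $f$ is strictly convex and $P_B$ is convex and closed). Finally I would upgrade objective convergence to iterate convergence: strong convexity of $f$ (modulus $1$) yields $\tfrac12\|x^k - x^*\|_2^2 \le f(x^k) - f(x^*) - \langle \nabla f(x^*), x^k - x^*\rangle \le f(x^k) - f(x^*)$, where the last inequality uses the first-order optimality condition $\langle \nabla f(x^*), x - x^*\rangle \ge 0$ for all $x \in P_B$. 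Hence $\|x^k - x^*\|_2^2 \le 2\big(f(x^k) - f(x^*)\big) \to 0$, so $x^k \to x^*$, and $x^*$ is by definition the projection of zero onto $P_B$.

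I expect the main obstacle to be purely bookkeeping rather than conceptual: one must make sure the linear-minimization oracle really returns a base indicator $\xi_{B_{k+1}}$ (i.e. that the minimum over $P_B$ of a linear functional is attained at a vertex and that the matroid greedy algorithm realizes it), and one must align the indexing/step-size conventions so that the Frank–Wolfe iterates literally equal the relative-load vectors $x^k$ defined in the paper. Once that identification is in place, the limit statement is immediate from the (already-recalled) $O(\log k / k)$ rate of the averaging-step Frank–Wolfe method combined with strong convexity of the squared Euclidean norm; in particular, the quantitative version of this argument is exactly what is claimed separately in Theorem~\ref{lem_2norm_convergence}, so this lemma can also be viewed as its qualitative corollary.
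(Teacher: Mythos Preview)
Your proposal is correct and follows essentially the same route as the paper: identify the greedy packing as Frank--Wolfe on $f(x)=\|x\|_2^2$ over $P_B$ with step size $\eta_k=\tfrac{1}{k+1}$ (pure averaging), so that the iterates coincide with the relative loads, and then invoke convergence of Frank--Wolfe for a strictly convex objective. The paper's proof is slightly terser---it simply cites that Frank--Wolfe ``is known to converge to the minimum''---whereas you spell out the upgrade from objective convergence to iterate convergence via strong convexity, but this is additional detail within the same argument, not a different approach.
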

\begin{proof}
    Let $v^k$ be the indicator vector of $B_{k+1}$. It is a minimum weight base for the weights $x^k$, therefore
    \begin{equation}
    \begin{split}
        \left< v^k, x^k \right> = \min_{v : \text{ vertex of } P_B} \left< v, x^k \right> = \min_{v \in P_B} \left< v, x^k \right>. \\
    \end{split}
    \end{equation}
    Define a (strongly convex) function $f(x) = \|x\|_2^2$. Observe that $\nabla f(x) = 2 x$. Then,
    \begin{equation}
        \left< v^k, \nabla f(x^k) \right> = \min_{v \in P_B} \left< v, \nabla f(x^k) \right>.
    \end{equation}
    For the sequence of $x^k$, we have the update rule
    \begin{equation}
        x^{k+1} = \frac{k}{k+1} x^k + \frac{1}{k+1} v^{k}.
    \end{equation}
    This is precisely the Frank-Wolfe process for the function $f$ and the feasible set $P_B$ with the step size $\eta_k = \frac{1}{k+1}$. It is known to converge to the minimum of $f$ in $P_B$, which implies the lemma.
\end{proof}
It is also not hard to prove the convergence by hand, without relying on the Frank-Wolfe results. For example, see the explicit convergence bound in Lemma \ref{lem_2norm_convergence} derived in an elementary way.



In fact, the limit $x^*$ does not only minimize $\|x\|_2$ over $P_B$, but it also minimizes $\sum\limits_e \varphi(x_e)$ over $P_B$ for any convex function $\varphi : \mathbb{R} \to \mathbb{R}$.

\begin{observation}
    Let $f: \mathbb{R}^E \to \mathbb{R}$ be a function such that $f(x) = \sum\limits_e \varphi(x_e)$, where $\varphi : \mathbb{R} \to \mathbb{R}$ is a strictly convex differentiable function. 
    Then, for any $i$, $B_i$ is also a minimum weight base with respect to the weights $\nabla f(x^k)$.
\end{observation}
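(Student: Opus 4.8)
\medskip
\noindent\emph{Proof proposal.}
The plan is to exploit that the gradient is obtained from $x^k$ by applying a single scalar function to every coordinate: $\nabla f(x^k) = \bigl(\varphi'(x^k_e)\bigr)_{e\in E}$, and $\varphi'$ is non-decreasing because $\varphi$ is convex. Consequently the weight vectors $x^k$ and $\nabla f(x^k)$ are \emph{order-compatible}: for all $e,f\in E$, $x^k_e\le x^k_f$ implies $\varphi'(x^k_e)\le\varphi'(x^k_f)$. Since being a minimum-weight base is a property that refers to the weights only through such pairwise comparisons, optimality of $B_{k+1}$ for the weights $x^k$ should pass over to the weights $\nabla f(x^k)$.

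To make ``refers only to pairwise comparisons'' precise, I would invoke the classical local-exchange characterization of minimum-weight bases: for a base $B$ of $M$ and weights $w\in\mathbb R^E$, one has $\langle\xi_B,w\rangle=\min_{v\in P_B}\langle v,w\rangle$ \emph{iff} for every $e\in E\setminus B$ and every $f$ in the fundamental circuit $C(B,e)$ of $e$ with respect to $B$ we have $w_f\le w_e$ (equivalently, $w_f\le w_e$ whenever $B-f+e$ is again a base). The forward implication is immediate — otherwise $B-f+e$ is a strictly cheaper base — and the reverse implication, that a single-swap local optimum is a global optimum, follows from the strong base-exchange axiom; I would either cite this from a standard matroid text or include the short exchange argument.

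Granting this, fix $k\ge 0$ and let $B_{k+1}$ be the minimum-weight base picked by the packing at step $k+1$, i.e.\ a minimum-weight base for the weights $x^k$. By the characterization, $x^k_f\le x^k_e$ for every $e\in E\setminus B_{k+1}$ and $f\in C(B_{k+1},e)$; applying the non-decreasing map $\varphi'$ preserves each inequality, $\varphi'(x^k_f)\le\varphi'(x^k_e)$, so the same system of inequalities holds for the weight vector $\nabla f(x^k)$. Hence $B_{k+1}$ is a minimum-weight base for $\nabla f(x^k)$ as well, i.e.\ $\xi_{B_{k+1}}\in\argmin_{v\in P_B}\langle v,\nabla f(x^k)\rangle$ (a linear function over $P_B$ attains its minimum at a vertex, and the vertices of $P_B$ are exactly the bases). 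This is precisely what is needed to read the greedy base packing as a Frank-Wolfe run for $f(x)=\sum_e\varphi(x_e)$ with step sizes $\eta_k=\tfrac1{k+1}$; then $f(x^k)\to\min_{x\in P_B}f(x)$ by the Frank-Wolfe guarantee, and since $x^k\to x^*$ by the preceding lemma and $f$ is continuous, $x^*$ minimizes $f$ over $P_B$ — which is how I would obtain Theorem~\ref{th_char_x_projection}.

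I do not foresee a real obstacle; the only non-routine ingredient is the reverse direction of the local characterization, and even that can be bypassed. The packing only requires $B_{k+1}$ to be \emph{some} minimum-weight base for $x^k$, so one may run the greedy (Rado--Edmonds) algorithm with a fixed total order $\prec$ on $E$ that refines the preorder induced by $x^k$. Because $\nabla f(x^k)$ induces a coarser preorder than $x^k$, this $\prec$ also refines it, so the base the greedy algorithm outputs has minimum weight simultaneously for $x^k$ and for $\nabla f(x^k)$; declaring it to be $B_{k+1}$ makes the Observation hold by construction.
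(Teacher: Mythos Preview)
Your proposal is correct and follows essentially the same idea as the paper: both arguments rest on the fact that $\varphi'$ is monotone, so the relative order of the weights $x^k$ is preserved under the map $x_e\mapsto\varphi'(x_e)$, and the set of minimum-weight bases of a matroid depends only on this order. The paper states the last fact in one line (``the minimum weight base of a matroid is defined only by the ordering of the weights''), while you spell it out via the local single-swap exchange characterization; your version is more explicit but not a different route.
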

\begin{proof}
    The minimum weight base of a matroid is defined only by the ordering of the weights. If $x_e > x_f$, then $\varphi'(x_e) > \varphi'(x_f)$ for a strictly convex differentiable $\varphi$, so the ordering of the values of $\nabla f(x)$ is the same as the ordering of the values of $x$.
\end{proof}

\begin{corollary}
    Let $f: \mathbb{R}^E \to \mathbb{R}$ be a function such that $f(x) = \sum\limits_e \varphi(x_e)$, where $\varphi : \mathbb{R} \to \mathbb{R}$ is a strictly convex differentiable function.
    For a greedy base packing, the sequence $x^1, x^2, \ldots $ converges to $\argmin\limits_{x \in P_B} f(x)$. 
\end{corollary}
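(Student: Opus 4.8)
The plan is to reduce the corollary to the lemma already established, namely that greedy base packing converges to $\argmin_{x\in P_B}\|x\|_2$, by showing that the entire trajectory $x^1, x^2, \ldots$ of a greedy base packing is \emph{independent} of which strictly convex differentiable separable objective $f(x)=\sum_e\varphi(x_e)$ we attribute to the Frank-Wolfe process. Indeed, by the Observation, for every $k$ the base $B_{k+1}$ chosen as a minimum-weight base with respect to $x^k$ is \emph{simultaneously} a minimum-weight base with respect to $\nabla f(x^k)$, since $\nabla f(x^k)_e=\varphi'(x^k_e)$ is a monotone reparametrization of the coordinates of $x^k$ and a minimum-weight base depends only on the relative order of the weights (ties being broken consistently, e.g.\ by a fixed order on $E$). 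Hence the update $x^{k+1}=\frac{k}{k+1}x^k+\frac{1}{k+1}\xi_{B_{k+1}}$ is exactly the Frank-Wolfe step with step size $\eta_k=\frac{1}{k+1}$ applied to $f$ on $P_B$.

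First I would state the above identification precisely: the greedy base packing sequence is a valid run of Frank-Wolfe on $(f,P_B)$ for every admissible $\varphi$. Second, I would invoke strict convexity of $f$ together with compactness of $P_B$ to conclude that $\argmin_{x\in P_B}f(x)$ is a unique point $x^*_f$, and cite the standard Frank-Wolfe convergence guarantee (as used in the proof of the preceding lemma, or the explicit bound of Lemma~\ref{lem_2norm_convergence}) to get $x^k\to x^*_f$. Third, since the left-hand side $\lim_k x^k$ does not depend on $\varphi$ (the trajectory is the same), the limit must coincide for all choices; in particular it equals the already-identified $x^*=\argmin_{x\in P_B}\|x\|_2^2$, and therefore $x^*=x^*_f=\argmin_{x\in P_B}f(x)$ for every such $f$. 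This simultaneously proves the corollary and re-proves the ``in particular'' clauses of Theorem~\ref{th_char_x_projection}.

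One technical point worth addressing is the tie-breaking in the definition of a minimum-weight base: if some coordinates of $x^k$ coincide, the minimum-weight base may not be unique, and one must make sure the \emph{same} tie-breaking rule is used whether we view the weights as $x^k$ or as $\nabla f(x^k)$. This is immediate because a strictly increasing $\varphi'$ preserves not only strict inequalities but also equalities among the $x^k_e$, so any consistent rule (a fixed total order on $E$ used to break ties) produces literally the same selected base in both views; thus the trajectory is genuinely well-defined and $\varphi$-independent.

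The main obstacle, such as it is, is purely expository rather than mathematical: one must be careful that the Frank-Wolfe convergence theorem being cited applies to our step-size $\eta_k=\frac{1}{k+1}$ and only requires strict convexity (for uniqueness of the minimizer) plus smoothness on the compact polytope $P_B$ — both of which hold here since $\varphi$ is differentiable and $P_B$ is bounded, so $f$ is $L$-smooth on $P_B$ for some finite $L$. Given the lemma and observation already in hand, no further estimates are needed; the corollary is essentially a two-line consequence once the ``trajectory is $\varphi$-independent'' observation is stated.
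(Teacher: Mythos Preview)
Your approach is essentially the same as the paper's: use the Observation to identify the greedy base packing as a Frank-Wolfe run for $f$ on $P_B$, then invoke Frank-Wolfe convergence for strictly convex objectives. The paper's proof is two sentences to exactly this effect; your additional remark that the trajectory is $\varphi$-independent (hence all these minimizers coincide) is a nice bonus that anticipates Theorem~\ref{th_char_x_projection}, but is not needed for the corollary itself.

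One small correction: your claim that ``$\varphi$ is differentiable and $P_B$ is bounded, so $f$ is $L$-smooth on $P_B$ for some finite $L$'' is not true in general --- differentiability on a compact set does not force a Lipschitz gradient (e.g.\ $\varphi(t)=t^{4/3}$ on $[0,1]$). The paper's proof glosses over this point too, simply asserting Frank-Wolfe convergence for strictly convex $f$; so this is not a divergence from the paper, just a place where both arguments are slightly informal.
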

\begin{proof}
    The observation above implies that the greedy base packing is a Frank-Wolfe process for the minimization of $f$ over $P_B$. For a strictly convex function, the Frank-Wolfe iterations converge to the (unique) optimum.
\end{proof}

It remains to cover the cases where $\varphi$ is convex, but not \emph{strictly} convex, and not necessarily differentiable. In this case, one can approximate $\varphi$ with a strictly convex differentiable function arbitrarily well with respect to the functional $C$-norm, and use the convergence result for the approximation. 
Together, the observations above yield 
the characterization given in Theorem~\ref{th_char_x_projection}.

There are several well-known characterizations of the minimum-norm vector in the base polytope of a matroid, see e.g.~\cite[Chapter V]{Fujishige:book}. 
For the proofs below we will need the following version.
Recall that for a subset $H \subseteq E$, the \textit{contraction} of $H$ gives a new matroid $M'$ on the ground set $E \setminus H$ and with a rank function $r'(A) = r(A \cup H) - r(H)$ for any set $A \subseteq E \setminus H$. If $B_H$ is a base of the restriction matroid $M|H$, then the bases of $M / H$ are precisely such sets $B' \subseteq E \setminus H$ that $B' \cup B_H$ is a base of $M$ (see, for example, the book \cite{oxley}, Chapter 3.1 ``Contraction''). We will also use the identity $(M / H_1) / H_2 = M / (H_1 \cup H_2)$ for any disjoint $H_1, H_2 \subseteq E$. The following characterization of $x^*$ as a minimum-norm vector in the base polytope appears
in the book~\cite{Fujishige:book}, Chapter V, Section 9 ``Lexicographically Optimal Base''.
For completeness, we give a self-contained proof in  Appendix~\ref{proof:th_char_x_general}.

\begin{theorem}[Theorem 9.3 in \cite{Fujishige:book}, restated: characterization of $x^*=\argmin_{x\in P_B}\|x\|^2$ for general matroids]
    \label{th_char_x_general}
    While $x^*$ is not defined for some elements, keep doing the following.
    \begin{enumerate}
        \item \label{step_1_x*char}Let $H = \argmax\limits_{\varnothing \neq H \subseteq E} \frac{|H|}{r(H)}$. Moreover, choose $H$ to be inclusion maximal.
        \item Set $x^*_e = \frac{r(H)}{|H|}$ for every element $e \in H$.
        \item Contract $H$.
    \end{enumerate}
    Furthermore, the values $x_e^*=\frac{r(H)}{|H|}$ strictly increase during this process.
\end{theorem}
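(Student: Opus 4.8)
The plan is to prove the statement by induction on $|E|$, reducing it to one structural lemma about the minimum-norm point $x^*=\argmin_{x\in P_B}\|x\|_2$ (available from the earlier results, and unique since $\|\cdot\|_2^2$ is strictly convex). A preliminary remark makes Step~\ref{step_1_x*char} meaningful: the family of sets attaining $\rho=\max_{\varnothing\ne H\subseteq E}|H|/r(H)$ is closed under union, since for $|A|/r(A)=|B|/r(B)=\rho$ submodularity of $r$ together with $|A\cup B|\le\rho\, r(A\cup B)$ and $|A\cap B|\le\rho\, r(A\cap B)$ forces $|A\cup B|=\rho\, r(A\cup B)$; hence the inclusion-maximal maximizer $H^*$ is well defined. (Loops, where $r=0$ and the ratio is $+\infty$, are collected into $H^*$ at the first step and get $x^*_e=0$; after contraction they disappear, so we may assume $M$ is loopless.)

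The key lemma is: \emph{the set $L:=\{e: x^*_e=\mu\}$ of coordinates attaining $\mu:=\min_e x^*_e$ is tight, i.e. $x^*(L)=r(L)$.} I would prove this by exchange. Suppose $x^*(L)<r(L)$. Since $x^*(E)=r$ is fixed, I look for $e\in L$ and $f\notin L$ (so $x^*_f>\mu$) with no tight set containing $e$ but not $f$; for such a pair, the vector obtained from $x^*$ by adding a small $\delta>0$ to coordinate $e$ and subtracting $\delta$ from coordinate $f$ stays in $P_B$ (the only rank constraints that could fail are the tight ones separating $e$ from $f$, and there are none) and has $\|\cdot\|_2^2$ changed by $2\delta(\mu-x^*_f)+2\delta^2<0$, contradicting minimality. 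If instead every pair $e\in L$, $f\notin L$ is separated by a tight set $A_{e,f}\ni e$, $A_{e,f}\not\ni f$, then — using that tight sets form a lattice (submodularity of $r$ and modularity of $x^*$ give $x^*(A\cap B)=r(A\cap B)$, $x^*(A\cup B)=r(A\cup B)$ for tight $A,B$) — the set $\bigcup_{e\in L}\bigcap_{f\notin L}A_{e,f}$ is tight, contains $L$ and is contained in $L$, so equals $L$; thus $L$ is tight, again a contradiction. Granting the lemma: from $\mu|A|\le x^*(A)\le r(A)$ for all $A$ we get $|A|/r(A)\le 1/\mu$, while $L$ tight gives $|L|/r(L)=1/\mu$; hence $\rho=1/\mu$ and $L$ is densest. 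Any densest set $A$ has $\mu|A|\le x^*(A)\le r(A)=\mu|A|$, forcing $x^*_e=\mu$ on $A$, i.e. $A\subseteq L$; so $L$ is exactly the inclusion-maximal densest set $H^*$, and $x^*_e=\mu=r(H^*)/|H^*|$ on it.

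It remains to peel off $H^*$. I would show $x^*$ restricted to $E\setminus H^*$ lies in $P_B(M/H^*)$: tightness gives $x^*(E\setminus H^*)=r-r(H^*)$, and for $A\subseteq E\setminus H^*$, $x^*(A)=x^*(A\cup H^*)-x^*(H^*)\le r(A\cup H^*)-r(H^*)=r'(A)$. Conversely, any $y\in P_B(M/H^*)$ extends to $\tilde x$ equal to $\mu$ on $H^*$ and $y$ off it, and $\tilde x\in P_B(M)$: writing $A_1=A\cap H^*$, $A_2=A\setminus H^*$, we get $\tilde x(A)=\mu|A_1|+y(A_2)\le r(A_1)+\big(r(A_2\cup H^*)-r(H^*)\big)=r(A\cap H^*)+r(A\cup H^*)-r(H^*)\le r(A)$, using $r(A_2\cup H^*)=r(A\cup H^*)$ (as $A_1\subseteq H^*$) and submodularity. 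Since $\|\tilde x\|_2^2=\mu^2|H^*|+\|y\|_2^2$, minimizing over $y$ and invoking uniqueness of the minimum-norm point shows $x^*|_{E\setminus H^*}$ is the minimum-norm point of $M/H^*$. The induction hypothesis applied to $M/H^*$ then says the algorithm run there reconstructs $x^*|_{E\setminus H^*}$, so the whole algorithm reconstructs $x^*$; and since $L=H^*$ \emph{exactly}, we have $\min_{e\in E\setminus H^*}x^*_e>\mu$, giving the strict increase of the values $r(H)/|H|$.

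The main obstacle is the lemma — tightness of the minimum-coordinate level set — the one genuinely matroidal step rather than LP/submodularity bookkeeping; everything turns on keeping the norm-improving exchange inside $P_B$, which is precisely where the lattice structure of tight sets is used. A slightly cleaner equivalent route is to first prove that \emph{every} level set $\{e:x^*_e\le t\}$ of the minimum-norm point is tight (same exchange argument, same lattice lemma) and then specialize to $t=\mu$. The remaining ingredients — well-definedness of $H^*$, the two inclusions between $P_B(M)$ and $P_B(M/H^*)$, and the strict-increase bookkeeping — are routine once the lemma is in place.
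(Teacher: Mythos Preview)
Your proof is correct and takes a genuinely different route from the paper's. The paper argues in the forward direction: it takes the vector produced by the algorithm and verifies (i) feasibility, by exhibiting it as a convex combination of bases via an explicit ``stitching'' of bases of $M|C$ and $(M|(C\cup H))/C$ across successive iterations; (ii) monotonicity, by a direct calculation comparing two consecutive densest sets $H_1,H_2$; and (iii) optimality, via the first-order condition, showing that the minimum-weight base under weights $x^*$ has weight exactly $\|x^*\|_2^2$. You argue in reverse: starting from the already-established minimizer $x^*=\argmin_{x\in P_B}\|x\|_2$, you derive its structure. Your key lemma (the minimum-coordinate level set $L$ is tight, via the lattice of tight sets and a norm-decreasing exchange) identifies $L$ with the maximal densest set $H^*$, and the affine correspondence between $\{x\in P_B(M):x|_{H^*}=\mu\}$ and $P_B(M/H^*)$ reduces everything to induction on $|E|$. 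The paper's route is constructive and self-contained (it does not presuppose that the minimum-norm point exists); your route is more conceptual and modular---tightness of the level sets of the min-norm point is the standard polymatroid fact behind Fujishige's lexicographically-optimal-base theory---and it yields the strict increase of the values for free, rather than via a separate calculation.
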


Recall that the maximal density\footnote{If our matroid $M$ has loops, then we can get zero in the denominator in the expression for density. 
In this case, the claims from this section work if one treats $\rho$ as infinity. An alternative way is to forbid $M$ to have loops.} 
of a matroid is defined as
\begin{equation}
    \rho := \max_{\varnothing \neq H \subseteq E} \frac{|H|}{r(H)}.
\end{equation}
From the monotonicity property in Theorem \ref{th_char_x_general} we immediately get a connection between the maximal density and $x^*$.
\begin{corollary}
    \label{cor_minx_is_density_general}
    The maximal density of a matroid is the inverse of the minimal $x^*_e$:
    \begin{equation}
        \rho = \frac{1}{\min_e x^k_e}.
    \end{equation}
\end{corollary}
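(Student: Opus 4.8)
The plan is to read the claim off directly from the greedy description of $x^*$ provided by Theorem~\ref{th_char_x_general}. That theorem constructs $x^*$ in rounds $i = 1, 2, \dots$: in round $i$ it selects an inclusion-maximal maximizer $H_i$ of $|H|/r(H)$ over the current (contracted) matroid, assigns $x^*_e = r(H_i)/|H_i|$ to every $e \in H_i$, and then contracts $H_i$; moreover, the sequence of assigned values $r(H_i)/|H_i|$ is \emph{strictly increasing} in $i$. I will use only these two facts, plus the trivial observation that the process terminates having labelled every element (each round removes the nonempty set $H_i$ from the ground set via contraction, and $E$ is finite).

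First I would observe that the value assigned in round $1$ is exactly $1/\rho$: by definition $\rho = \max_{\varnothing \neq H \subseteq E} |H|/r(H)$, and $H_1$ is a set attaining this maximum, so $r(H_1)/|H_1| = 1/\rho$. Next, since every element $e$ receives its label in some round $i \geq 1$, and the round values strictly increase starting from $1/\rho$, we get $x^*_e \geq 1/\rho$ for all $e$, with equality precisely for the elements of the nonempty set $H_1$. Hence $\min_e x^*_e = 1/\rho$, and rearranging gives $\rho = 1/\min_e x^*_e$, as claimed. (In the degenerate case where $M$ has loops, $r(\{e\}) = 0$ forces $\rho = \infty$ under the paper's convention, and correspondingly the densest round assigns the value $0 = 1/\infty$ to the loops, so the identity still holds formally.)

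There is essentially no real obstacle here: the corollary is an immediate consequence of the monotonicity clause of Theorem~\ref{th_char_x_general}, and the only point worth a sentence of care is that the first-round value is genuinely the global minimum of $x^*$ over all of $E$ — which is exactly what strict monotonicity across rounds guarantees, since no later round can produce a value as small as $1/\rho$. Thus the entire content of the corollary is packaged into Theorem~\ref{th_char_x_general}, and the proof is a one-line deduction.
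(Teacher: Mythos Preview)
Your proposal is correct and follows exactly the paper's approach: the paper simply states that the corollary follows immediately from the monotonicity clause of Theorem~\ref{th_char_x_general}, and your argument just makes explicit that the first round assigns the value $1/\rho$ while every later round assigns a strictly larger value. There is nothing to add.
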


\subsection{Convergence of $(\min\limits_e x_e^k)^{-1}$ to $\rho$}

Next, we will study the convergence of $(\min\limits_e x_e^k)^{-1}$ in greedy base packing to $\rho$.
In fact, we will consider a more general algorithm: {\em greedy base packing with pruning}
that removes elements with large values of $x^k_e$. (This will be needed later on
to improve the complexity of the algorithm for density estimation).

The algorithm is defined as follows.
Given a matroid $M=(E,\calI)$ and input interval $[\rho^-,\rho^+]$,
it produces sequences of sets $E=E_1\supseteq E_2\supseteq \ldots$ and $B_1\subseteq E_1, B_2\subseteq E_2,\ldots$
such that $B_k$ is a base in the matroid $M_k=M|E_k$ (the restriction of $M$ to $E_k$).
For $k\ge 1$ define vector $x^k\in[0,1]^{E_k}$ via $x^k_e=\frac{1}{k}|\{B_j\:|\:e\in B_j,j\le i\}|$
(and set $x^0={\bf 0}$). The algorithm does the following for $k=1,2,\ldots$:
\begin{itemize}
\item Let $B_k$ be a minimum-weight base in $M_k$ with respect to weights $x^{k-1}$.
\item Let $E_{k+1}=E_k-\{e\in E_k\::\: x^k_e>2/\rho^- \text{ and if } k \ge 24\rho^+\log m \}$. 
\end{itemize}
Note, if $[\rho^-,\rho^+]=[0,+\infty)$ then this is equivalent to standard greedy packing without pruning.

To analyze this process, we will use an argument which is conceptually similar to Thorup's proof of \cite[Claim 16.1]{thorup}.
It is also similar to the analysis of the Multiplicative Weights Update (MWU) algorithm
(and yields the bound of $\Theta\left( \frac{\rho \log m}{\varepsilon^2} \right)$ iterations, which strongly resembles the bounds one gets in MWU type of algorithms). 
The following theorem is proved in Appendix~\ref{sec:th:truncated-greedy}.

\begin{theorem}\label{th:truncated-greedy}
 Suppose that $\rho=\rho(M)\in[\rho^-,\rho^+]$. 
 After $k$ iterations of greedy base packing with pruning we have $\rho(M_k)=\rho$. Furthermore,
  if $\varepsilon \in (0, 1]$ and $k \geq \frac{20 \rho \log m}{\varepsilon^2}$ then 
    \begin{equation}
        0 \leq \frac{1}{\min_{e\in E_k} x^k_e} - \rho \leq \varepsilon \rho.
    \end{equation}
\end{theorem}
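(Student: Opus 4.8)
The plan is to mimic the Thorup/MWU-style potential argument, working on the pruned packing. Throughout, fix the true density $\rho\in[\rho^-,\rho^+]$ and a densest subset $H^*$ with $|H^*|/r(H^*)=\rho$. The first task is to show that pruning never removes an element of $H^*$, so that $\rho(M_k)=\rho$ for all $k$; this is the structural part. The key observation is that in greedy packing the relative load $x^k_e$ of an element stays close to $1/\rho \le 2/\rho^-$-ish once $k$ is moderately large: more precisely, one shows $x^k_e \le 2/\rho^-$ for every $e\in H^*$ whenever $k\ge 24\rho^+\log m$, which is exactly the threshold in the pruning rule. To see this one uses that the average load on $H^*$ equals $r(H^*)/|H^*| = 1/\rho$ (each base $B_j$ meets $H^*$ in at most $r(H^*)$ elements, so $\sum_{e\in H^*} x^k_e \le k\cdot r(H^*)/k = r(H^*)$, i.e. the mean load over $H^*$ is at most $1/\rho$), but an averaging bound alone does not control the maximum load. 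Instead one invokes the exponential-potential estimate below to argue that no single load in $H^*$ can be more than a constant factor above the mean after enough steps; hence no element of $H^*$ is ever pruned, $H^*\subseteq E_k$ for all $k$, and therefore $\rho(M_k)\ge\rho$; since restriction cannot increase density, $\rho(M_k)=\rho$.

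The second and main task is the quantitative bound $0\le (\min_{e\in E_k}x^k_e)^{-1}-\rho\le\varepsilon\rho$ for $k\ge 20\rho\log m/\varepsilon^2$. The lower bound $(\min_e x^k_e)^{-1}\ge\rho$ is the easy direction: since $H^*\subseteq E_k$ and $\sum_{e\in H^*}x^k_e\le r(H^*)$, some $e\in H^*$ has $x^k_e\le r(H^*)/|H^*|=1/\rho$, so $\min_{e\in E_k}x^k_e\le 1/\rho$. For the upper bound I would run the standard multiplicative-weights potential. Define $\Phi_k=\sum_{e\in E_k}(1+\beta)^{k x^k_e}$ for a suitable small $\beta=\Theta(\varepsilon)$, or equivalently $\Phi_k=\sum_{e\in E_k} w^{(k)}_e$ with multiplicative weights $w^{(k)}_e=(1+\beta)^{(\text{number of bases so far containing }e)}$. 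When we add base $B_{k+1}$, the weight of each $e\in B_{k+1}$ multiplies by $(1+\beta)$, so $\Phi_{k+1}\le \Phi_k + \beta\sum_{e\in B_{k+1}}w^{(k)}_e = \Phi_k + \beta\,w^{(k)}(B_{k+1})$ (pruning only removes terms, which can only decrease $\Phi$). The crucial min-weight-base step: because $B_{k+1}$ is chosen to minimize $\langle x^k,\xi_{B}\rangle$, and the ordering of $x^k$ equals the ordering of the weights $w^{(k)}$, $B_{k+1}$ is \emph{also} a minimum-weight base for the weights $w^{(k)}$; and a minimum-weight base has weight at most $\frac{r(E_k)}{|E_k|}\cdot w^{(k)}(E_k)\le\rho^{-1}\Phi_k$ — this is the matroid analogue of ``the lightest base is lighter than the average fractional base'', using that the uniform scaling $\frac{r(E_k)}{|E_k|}\mathbf 1$ lies in the base polytope $P_B(M_k)$ precisely because $\rho(M_k)=\rho$ and hence $|A|/r(A)\le\rho$ for all $A\subseteq E_k$, i.e. $\frac{r(E_k)}{|E_k|}\mathbf 1(A)=\frac{r(E_k)}{|E_k|}|A|\le r(A)$. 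So $\Phi_{k+1}\le(1+\beta/\rho)\Phi_k$, giving $\Phi_k\le m\,(1+\beta/\rho)^k\le m\,e^{\beta k/\rho}$.

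From $\Phi_k\le m e^{\beta k/\rho}$ one reads off the conclusion. If $e\in E_k$ is the element with $x^k_e=\min_{e'\in E_k}x^k_{e'}$, then $(1+\beta)^{k x^k_e}\le\Phi_k\le m e^{\beta k/\rho}$; taking logs, $k x^k_e\ln(1+\beta)\le\ln m+\beta k/\rho$, so
\begin{equation}
x^k_e\ \ge\ \frac{1}{\rho}\cdot\frac{\beta/\ln(1+\beta)}{1}\ -\ \frac{\ln m}{k\ln(1+\beta)}.
\end{equation}
Using $\ln(1+\beta)\ge\beta-\beta^2/2$ one gets $\beta/\ln(1+\beta)\ge 1/(1-\beta/2)\ge 1+\beta/2$, so $x^k_e\ge\frac{1}{\rho}(1+\beta/2)-\frac{2\ln m}{\beta k}$ (absorbing the $\ln(1+\beta)\ge\beta/2$ bound). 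Now choose $\beta=c\varepsilon$ and recall $k\ge 20\rho\log m/\varepsilon^2$; plugging in, the subtracted term $\frac{2\ln m}{\beta k}\le\frac{2\varepsilon}{20 c\rho}$ is at most a small multiple of $\varepsilon/\rho$, while the gained term is $\frac{\beta}{2\rho}=\frac{c\varepsilon}{2\rho}$; picking the constant $c$ correctly makes $x^k_e\ge\frac{1}{\rho}(1+c'\varepsilon/2)\ge\frac{1}{(1+\varepsilon)\rho}$ for the stated range of $k$ (here I use $1+t\ge 1/(1-t)$ type inequalities and that $\varepsilon\le 1$ to handle the constants; the constant $20$ is what one needs to make all the slack fit). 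Inverting, $(\min_{e\in E_k}x^k_e)^{-1}\le(1+\varepsilon)\rho$, i.e. $(\min_e x^k_e)^{-1}-\rho\le\varepsilon\rho$, as required. I expect the main obstacle to be the first task — proving that no element of $H^*$ is ever pruned — because it is a worst-case (per-element) statement rather than an averaged one; the clean route is to note that the very same potential argument, applied to the sub-packing restricted to $H^*$ before any pruning could have touched it, forces every load in $H^*$ to stay below $2/\rho^-$ once $k\ge 24\rho^+\log m$, so the two parts of the proof are really the same computation run twice, once for the lower envelope and once for the pruning threshold; getting the constants ($24$ vs.\ $20$, the factor $2$ in $2/\rho^-$) to line up consistently is the fiddly part.
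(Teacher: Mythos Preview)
Your potential $\Phi_k=\sum_{e\in E_k}(1+\beta)^{kx^k_e}$ with $\beta>0$ is set up backwards for the main inequality. Each summand is \emph{increasing} in $x^k_e$, so the chain $(1+\beta)^{kx^k_e}\le\Phi_k\le m e^{\beta k/\rho}$ yields only an \emph{upper} bound on $x^k_e$; your displayed conclusion ``$x^k_e\ge\frac{1}{\rho}\cdot\frac{\beta}{\ln(1+\beta)}-\frac{\ln m}{k\ln(1+\beta)}$'' simply does not follow from $kx^k_e\ln(1+\beta)\le\ln m+\beta k/\rho$ (both the direction and the sign on the $\ln m$ term are flipped). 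To lower-bound $\min_e x^k_e$ you must reverse the exponent and track $\sum_e(1+\alpha)^{-kx^k_e}$ (equivalently, as the paper does, $\sum_e(1+\alpha)^{k(1/\rho-x^k_e)}$) with $\alpha>0$.

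There is a second, independent gap in the recursion. The uniform vector $\frac{r(E_k)}{|E_k|}\,\mathbf 1$ is \emph{not} in $P_B(M_k)$ in general: membership would force $|A|/r(A)\le|E_k|/r(E_k)$ for every $A\subseteq E_k$, i.e.\ that $E_k$ itself attains the maximum density, which is nowhere assumed. (And even granting that, one has $r(E_k)/|E_k|\ge 1/\rho$, not $\le$, so the step ``$\le\rho^{-1}\Phi_k$'' would still fail.) The correct comparator is the ideal loads vector $x^*$ of $M_k$, which is always in $P_B(M_k)$: with the reversed potential the weights $\tilde w_e=(1+\alpha)^{-kx^k_e}$ are decreasing in $x^k_e$, so the min-$x^k$ base $B_{k+1}$ is a \emph{max}-$\tilde w$ base, whence $\tilde w(B_{k+1})\ge\langle x^*,\tilde w\rangle\ge\rho^{-1}\sum_e\tilde w_e$ using $x^*_e\ge 1/\rho$. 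That gives the contraction $\tilde\Phi_{k+1}\le\bigl(1-\tfrac{\alpha}{\rho(1+\alpha)}\bigr)\tilde\Phi_k$, and the remaining calculus goes through.

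Your instinct about the pruning claim is correct but with the roles swapped: there one \emph{does} need an upper bound on $x^k_e$ for $e\in H^*$, so your original potential (equivalently $\alpha<0$) is the relevant one --- but it must be summed over $H^*$ only and compared to $x^*|_{H^*}$, where $x^*_e=1/\rho$ holds \emph{with equality}. That equality is precisely what makes the $H^*$-restricted recursion close. In short, you have two potentials, one for each direction of the bound, and you applied each to the wrong half of the theorem.
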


\section{Graphic matroids: 1-respecting min-cuts and a lower bound on the convergence}

\subsection{A packing of $\Theta(\lambda^5 \log m)$ trees contains a tree crossing some min-cut once}

The dynamic min-cut algorithm by Thorup \cite{thorup} uses the following key combinatorial result.
\begin{theorem}[\cite{thorup}, Theorem 9]
    A greedy tree packing with $\Theta(\lambda^7 \log^3 m)$ trees contains a tree crossing some min-cut only once.
\end{theorem}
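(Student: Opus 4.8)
The plan is to combine Thorup's convergence bound~\eqref{eq_thorups_bound} with a structural fact about how the ideal load vector $x^*$ interacts with the family of minimum cuts, and to finish by a pigeonhole argument. Write $\lambda$ for the value of a minimum cut (the same $\lambda$ bounds the size of the packing). Recall that every spanning tree crosses every cut $(S,\bar S)$ at least once, and crosses it \emph{exactly} once precisely when $(S,\bar S)$ is one of its $n-1$ fundamental cuts; hence a tree $T$ fails to $1$-respect any minimum cut iff it crosses every minimum cut at least twice. Consequently, if for some minimum cut $C$ we have $\sum_{e\in C}x^k_e=\tfrac1k\sum_{i\le k}|T_i\cap C|<2$, then not every packed tree can cross $C$ at least twice, so some $T_i$ crosses $C$ exactly once and we are done. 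The goal is therefore to produce a minimum cut whose $x^k$-load is below $2$.

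First I would aggregate over all minimum cuts. Let $\mathcal C$ be the set of minimum cuts, $N=|\mathcal C|$, and $d(e)=|\{C\in\mathcal C:e\in C\}|$, so that $\sum_e d(e)=\sum_{C\in\mathcal C}|C|=N\lambda$. If no packed tree $1$-respects any minimum cut, then $\sum_{e\in T_i}d(e)=\sum_{C\in\mathcal C}|T_i\cap C|\ge 2N$ for each $i$, and averaging over $i\le k$ gives $\sum_e x^k_e d(e)\ge 2N$. On the other hand, since $d\ge 0$ and $\sum_e d(e)=N\lambda$, the bound~\eqref{eq_thorups_bound} yields
\begin{equation}
\sum_e x^k_e d(e)\ \le\ \sum_e x^*_e d(e)+N\lambda\,\|x^k-x^*\|_\infty\ \le\ \sum_e x^*_e d(e)+N\sqrt{\tfrac{6\lambda\log m}{k}}.
\end{equation}
Thus, if one can establish the structural bound $\sum_{C\in\mathcal C}x^*(C)=\sum_e x^*_e d(e)\le(2-\gamma)N$ for a gap $\gamma>0$, then choosing $k>6\lambda\log m/\gamma^2$ produces the contradiction $2N<2N$, so some packed tree must $1$-respect some minimum cut; with $\gamma=\Omega\!\big(1/(\lambda^{3}\log m)\big)$ this is exactly $k=\Theta(\lambda^{7}\log^{3}m)$.

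The crux — and the step I expect to be the main obstacle — is this quantitative structural lemma, namely that the \emph{average} minimum cut is crossed by the ideal fractional tree packing strictly fewer than $2-\gamma$ times, with $\gamma$ inverse-polynomial in $\lambda$ (and $\log m$). Qualitatively it is believable: writing $x^*=\sum_T\mu_T\xi_T$, the existence of spanning trees crossing a fixed minimum cut exactly once (contract its two sides, take a spanning tree of each, add one crossing edge) pushes $x^*(C)$ below $2$, and the principal-partition description of $x^*$ in Theorem~\ref{th_char_x_general} — minimum-cut edges sit in the ``sparse'' parts of the decomposition and so carry comparatively large $x^*$-values — should let one make this quantitative. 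The real difficulty is the \emph{size} of $\gamma$: on near-critical instances such as a long cycle, or a uniformly dense nearly $\lambda$-regular graph, every minimum cut has $2-x^*(C)=\Theta(1/n)$, and aggregating does not help, so a bound independent of $n$ cannot come from convergence alone. One must either dispatch such instances by a separate, direct argument (there a $1$-respecting tree already occurs among the first few trees — e.g.\ a leaf of $T_1$ at a degree-$\lambda$ vertex), or run a recursive contraction that, over $O(\log m)$ scales, reduces to the case $\gamma=\Omega(1/\mathrm{poly}(\lambda))$; this is where the extra powers of $\lambda$ and of $\log m$ in $\Theta(\lambda^{7}\log^{3}m)$ would come from. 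Given the lemma, the convergence input and the pigeonhole step are routine.
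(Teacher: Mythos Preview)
Your reduction to the structural inequality $\sum_{C\in\mathcal C}x^*(C)\le(2-\gamma)N$ with $\gamma=\Omega(1/(\lambda^3\log m))$ is the gap, and you have essentially noticed it yourself: on the $n$-cycle every edge has $x^*_e=(n-1)/n$, so every minimum cut $C$ has $x^*(C)=2(n-1)/n$ and $2-x^*(C)=2/n$; averaging over $\mathcal C$ cannot manufacture an $n$-independent $\gamma$. The same happens on any $\lambda$-regular graph with uniform $x^*$, which is the generic rather than the degenerate situation. Your two escape routes --- dispatching near-critical instances ad hoc, and ``recursive contraction over $O(\log m)$ scales'' --- are not arguments: the first would have to cover a large and not clearly delimited class of graphs, and the second is only a phrase, with no indication of what gets contracted, why the recursion terminates in $O(\log m)$ rounds, or how its outcome ties back to a \emph{single} tree in the \emph{original} greedy packing.

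Thorup's actual route (summarized in the paper just before the proof of Theorem~\ref{th_lambda_5_respect}) is different in kind. Rather than averaging over min-cuts, he works with the level filtration of $x^*$: with $\ell_0=\min_{e\in MC}x^*_e$ and $\ell_i=\ell_0-i\varepsilon/\lambda$, he considers the partitions $\mathcal P_i$ whose parts are the components of $(V,E^*_{<\ell_i})$. Under the hypothesis that every packed tree 2-respects every min-cut he first derives $\ell_0>(1-\varepsilon)\tfrac{2}{\lambda}$ (his Lemma~18), then shows that for every $i\ge 2$ some trivial $\mathcal P_i$-cut is \emph{not} a min-cut (his Lemma~17), and finally finds by pigeonhole an index $a=O(\log m/\alpha)$ at which $|E^*_{\ge\ell_{a+2}}\setminus MC|\le(1+\alpha)\,|E^*_{\ge\ell_a}\setminus MC|$; the contradiction is extracted at that stable level. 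The factors $\lambda^7\log^3 m$ arise from how many levels one must descend and from the choices of $\varepsilon$ and $\alpha$, not from any global averaged gap over $\mathcal C$. Your pigeonhole/convergence scaffolding is fine; the missing idea is precisely this level-set/partition machinery, which replaces the unattainable $\gamma$.
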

We give a stronger bound:
\begin{theorem}
    \label{th_lambda_5_respect}
    A greedy tree packing with $k = \Theta(\lambda^5 \log m)$ trees contains a tree crossing some min-cut only once.
\end{theorem}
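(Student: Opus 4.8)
The plan is to reduce the theorem to a single quantitative statement about \emph{one} well-chosen min-cut, and then to apply Thorup's convergence bound~\eqref{eq_thorups_bound}. Write $x^k(C)=\frac1k\sum_{i=1}^k|T_i\cap C|$ for the average number of times the first $k$ trees cross a cut $C$. The first (and only easy) step is the reduction: it suffices to exhibit \emph{any} min-cut $C$ with $x^k(C)<2$. Indeed, every one of the $k$ trees crosses $C$ at least once, so the number of trees crossing $C$ two or more times is at most $k\bigl(x^k(C)-1\bigr)<k$, and the remaining (strictly positive number of) trees cross $C$ exactly once.

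Next I would pass from $x^k$ to the limit $x^*$. For a min-cut $C=\delta(S)$ we have $|C|\le\lambda$, and since $x^*\in P_B$ with $x^*(E)=n-1$, $x^*(C)=(n-1)-x^*(E(S))-x^*(E(V\setminus S))$. Hence, by~\eqref{eq_thorups_bound}, $x^k(C)\le x^*(C)+|C|\cdot\|x^k-x^*\|_\infty\le x^*(C)+\sqrt{6\lambda\log m/k}$, and taking $k=\Theta(\lambda^5\log m)$ makes the error term $O(1/\lambda^2)$. So the theorem reduces to the structural claim: for a suitable absolute constant $c>0$ there is a min-cut $\delta(S)$ whose combined rank-slack $\bigl(|S|-1-x^*(E(S))\bigr)+\bigl((n-|S|)-1-x^*(E(V\setminus S))\bigr)$ is at most $1-c/\lambda^2$; equivalently, a min-cut whose two sides $G[S]$ and $G[V\setminus S]$ are \emph{almost} tight for the rank inequalities defining $P_B$.

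Establishing this structural claim is the heart of the argument and the main obstacle. Here I would use the densest-subgraph characterization of $x^*$ from Theorem~\ref{th_char_x_general}: $x^*$ is assembled by iteratively peeling off inclusion-maximal densest subgraphs, taking strictly increasing values $\beta_1=1/\alpha<\beta_2<\cdots$ on the resulting layers, and the vertex set of a densest subgraph of any contraction of $G$ is exactly tight. I would combine this with the elementary observation that a min-cut $\delta(S)$ with $|C|=\lambda^*\le\lambda$ forces every vertex of $S$ to have degree $\ge\lambda^*$ in $G$, hence $|E(S)|\ge\lambda^*(|S|-1)/2$ (and symmetrically for $V\setminus S$), so both sides are dense; the remaining work is to show that some such min-cut is forced to lie tightly against the densest-subgraph decomposition on both sides, and to make the bound $1-c/\lambda^2$ quantitative by extracting an explicit $1/\lambda^2$ from the strict monotonicity of the $\beta_j$'s together with the density lower bound. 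This is precisely the place where the improvement over Thorup happens: his argument only yields a gap of order $1/(\lambda^3\log m)$, which via the same two steps forces $\Theta(\lambda^7\log^3 m)$ trees.

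Finally, the structural claim can fail outright for ``cycle-like'' graphs --- for the $n$-cycle every min-cut has combined rank-slack $1-2/n$, which is larger than $1-c/\lambda^2$ once $n$ is large --- so these degenerate configurations must be handled directly. There the point is that if every tree crossed every min-cut at least twice, a chain of nested min-cuts $\delta(S_1)\subsetneq\cdots\subsetneq\delta(S_t)$ would give $2(t-1)\le\sum_{j}x^k(\delta(S_j))=\sum_e x^k_e\cdot\#\{j:e\in\delta(S_j)\}$, which for a ``path-like'' chain (in which each edge lies in at most one $\delta(S_j)$) is at most $x^k(E)=n-1$ --- a contradiction once the chain has length $\ge n/2$, which is exactly the situation the cactus representation of min-cuts produces in the degenerate regime. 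Assembling the dense case (convergence plus the slack lemma) with this treatment of the degenerate case, and fixing the constant in $k=\Theta(\lambda^5\log m)$ so that the $O(1/\lambda^2)$ error beats $c/\lambda^2$, yields Theorem~\ref{th_lambda_5_respect}. I expect the quantitative slack lemma to be the fiddly part.
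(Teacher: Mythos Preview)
Your opening reduction --- find a min-cut $C$ with $x^k(C)<2$ and pass to $x^*$ via Thorup's convergence bound, so that $k=\Theta(\lambda^5\log m)$ buys accuracy $O(1/\lambda^2)$ --- is correct and is exactly the starting point shared with Thorup and with the paper. The gap is everything after that.

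Your ``structural claim'' that some min-cut has $x^*(C)\le 2-c/\lambda^2$ is asserted but not proven, and you yourself flag it as ``the heart of the argument and the main obstacle''. In fact it is \emph{false} as a stand-alone statement: take any edge-transitive $d$-regular graph on $n$ vertices whose only min-cuts are the trivial vertex cuts (e.g.\ the hypercube $Q_d$, $d\ge 3$). Then $x^*$ is uniform and every min-cut has $x^*(C)=d\cdot\tfrac{n-1}{nd/2}=2-\tfrac{2}{n}$, which violates $2-c/\lambda^2$ once $n\gg\lambda^2$. Your fallback for the ``degenerate'' regime --- long chains of nested min-cuts coming from the cactus --- does not cover this: when all min-cuts are vertex cuts, any chain of nested min-cut sides has length at most $2$. (The theorem is of course trivial for such graphs, since any spanning tree has a leaf whose vertex cut it $1$-crosses --- but that observation is orthogonal to your dichotomy.) More generally you never state, let alone prove, a dichotomy that covers all graphs; and the degenerate-case sketch conflates nested \emph{sides} $S_1\subsetneq S_2\subsetneq\cdots$ with nested \emph{cuts} $\delta(S_1)\subsetneq\delta(S_2)\subsetneq\cdots$, which are not the same thing.

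The paper's proof takes a completely different route. Rather than proving anything about $x^*(C)$ for a single cut, it follows Thorup: assume for contradiction that every tree $2$-respects every min-cut, which forces $\ell_0:=\min_{e\in MC}x^*_e>(1-\varepsilon)\tfrac{2}{\lambda}$, and analyse the partitions $\mathcal P_i$ induced by the level sets $\{e:x^*_e\ge\ell_0-i\varepsilon/\lambda\}$. The new ingredient is the ``defect'' $D_i=\sum_{S\in\mathcal P_i}(|\partial S|-\lambda)$. The paper shows that $D_i$ is non-decreasing, proves an upper bound $n_i\le n_0+2D_0$ by a direct density argument inside each part of $\mathcal P_0$, and proves a lower bound $n_{i+1}-n_i>\tfrac{D_i}{\lambda}\bigl(1-O(i\lambda\varepsilon)\bigr)$ from the $2$-respecting hypothesis. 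Summing the lower bound over $i=0,\dots,3\lambda$ with $\varepsilon=\Theta(1/\lambda^2)$ overshoots the upper bound and yields the contradiction. This squeezes the extra $\lambda^2\log^2 m$ out of Thorup's original analysis without ever passing through a statement of the form $x^*(C)\le 2-c/\lambda^2$.
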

We will closely follow some parts of Thorup's proof of his Theorem 9. Just like him, we will assume that every tree in a greedy MST packing crosses all min-cuts at least twice, and show that this leads to a contradiction, if $\|x^k - x^*\|_\infty \leq \frac{\varepsilon}{\lambda}$, for a suitably small value of $\varepsilon$. Thorup gives his bound for $\varepsilon = O\left(\frac{1}{\lambda^3 \log m}\right)$. We will show that having $\varepsilon = O\left(\frac{1}{\lambda^2}\right)$ is sufficient.

Following Thorup's line of proof, let $MC$ be the set of edges in min-cuts, let $\ell_0 := \min\limits_{e \in MC} x_e^*$, and let $\varepsilon$ be such that $\|x^k - x^*\|_\infty \leq \frac{\varepsilon}{\lambda}$. For the sake of contradiciton, assume that every tree in the tree packing crosses every min-cut at least twice. In his Lemma 18, Thorup shows a bound 
\begin{equation}
    \ell_0 > (1 - \varepsilon) \frac{2}{\lambda}.
    \label{eq_ell0_bound}
\end{equation}
For an integer $i \geq 0$, define $\ell_i = \ell_0 - i \varepsilon / \lambda$. For some real number $\ell$, let $E^*_{\geq \ell} := \{e \in E \; | \; x^*_e \geq \ell \}$ and $E^k_{\geq \ell} := \{e \in E \; | \; x^k_e \geq \ell \}$. Deﬁne $\mathcal{P}_i$ to be the partition whose sets are the components of $(V, E^*_{< \ell_i})$. Thorup proves that for any $i \geq 2$, some trivial $\mathcal{P}_i$ cut is not a min-cut (\cite{thorup}, Lemma 17).

Then, Thorup finds a parameter $a$ such that $|E^*_{\geq \ell_{a+2}} \setminus MC| \leq (1 + \alpha) |E^*_{\geq \ell_a} \setminus MC|$ for some small parameter $\alpha$. He upper-bounds $a$ by $O(\log_{1 + \alpha}m) = O\left( \frac{\log m}{\alpha} \right)$. This is where our proof diverges from Thorup's.

\begin{proof}[Proof of Theorem \ref{th_lambda_5_respect}]


Let us assume that $\lambda \geq 2$. If $\lambda = 1$, then the first tree crosses all min-cuts once. Next, we need some definitions.
\begin{itemize}
    \item Let $G_i$ be the graph with the vertex set $V_i := \mathcal{P}_i$ and the edge set $E_i := E^*_{\geq \ell_i}$. Denote $n_i := |V_i|$, $m_i := |E_i|$.
    \item Let $p_i = |\{S \in \mathcal{P}_i \; | \; |\partial S| = \lambda\}|$ and $q_i = |\{S \in \mathcal{P}_i \; | \; |\partial S| > \lambda\}| = n_i - p_i$.
    \item Let $D_i = \sum\limits_{S \in \mathcal{P}_i} (|\partial S| - \lambda)$.
\end{itemize}
We will need the following fact.
\begin{lemma}[Lemma 14 in \cite{thorup}, reformulated and specialized]
    The values of $x^*_e$ are decreasing in the sense that for each $i$, for each $S \in \mathcal{P}_i$, we have $x^*_e < \ell_i$ for any edge $e \in E[S]$.
    \label{lem_x_decreasing}
\end{lemma}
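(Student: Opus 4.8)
The plan is to deduce the lemma from the explicit description of $x^*$ in Theorem~\ref{th_char_x_general} by showing that every sublevel set of $x^*$ is a flat of the matroid. Let $L_1,\dots,L_t$ be the level sets produced by the process in Theorem~\ref{th_char_x_general}, with common values $v_1<v_2<\dots<v_t$ (so $x^*_e=v_j$ for $e\in L_j$), and write $U_s:=L_1\cup\dots\cup L_s$. I claim that each $U_s$ is closed in the graphic matroid $M=M(G)$, i.e.\ $r_M(U_s\cup\{f\})=r_M(U_s)+1$ for every $f\in E\setminus U_s$. Granting this, the lemma follows immediately: fix $i$ and let $v$ be the largest level value that is $<\ell_i$ (if there is none, then $E^*_{<\ell_i}=\varnothing$, every $S\in\mathcal P_i$ is a single vertex, and the statement is vacuous); then $E^*_{<\ell_i}=U_s$ for the corresponding $s$. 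For a graphic matroid, $U_s$ being closed means precisely that for every connected component $S$ of $(V,U_s)$ one has $E[S]\subseteq U_s$; hence every edge $e\in E[S]$ satisfies $x^*_e\le v<\ell_i$, which is exactly what Lemma~\ref{lem_x_decreasing} asserts.

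It remains to prove that $U_s$ is closed, which I would do by induction on $s$. The base case $U_0=\varnothing$ is immediate since $G$, hence $M$, is loopless. For the inductive step, assume $U_{s-1}$ is closed in $M$. Then no $g\in E\setminus U_{s-1}$ lies in $\mathrm{cl}_M(U_{s-1})$, so $M/U_{s-1}$ has no loops. By construction $L_s$ is a maximizer of $|H|/r_{M/U_{s-1}}(H)$ over nonempty $H$ in $M/U_{s-1}$; if some $g\in E\setminus(U_{s-1}\cup L_s)$ satisfied $r_{M/U_{s-1}}(L_s\cup\{g\})=r_{M/U_{s-1}}(L_s)$, then $L_s\cup\{g\}$ would have strictly larger ratio $\frac{|L_s|+1}{r_{M/U_{s-1}}(L_s)}$, contradicting maximality. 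Hence $L_s$ is closed in $M/U_{s-1}$. Unwinding the contracted rank function, using $M/(L_1\cup\dots\cup L_{s-1})=M/U_{s-1}$ and $r_{M/U_{s-1}}(A)=r_M(A\cup U_{s-1})-r_M(U_{s-1})$, this becomes $r_M(U_s\cup\{g\})=r_M(U_s)+1$ for every $g\in E\setminus U_s$ (note every such $g$ lies outside $U_{s-1}\cup L_s$), so $U_s$ is closed.

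The only genuinely delicate point is keeping the contracted matroids $M/U_{s-1}$ loopless, so that the one-line ``ratio'' argument for closedness of $L_s$ applies without fuss; this is exactly what the inductive hypothesis provides, so no extra case analysis is needed. (Alternatively, to sidestep the footnote's convention of treating density as $+\infty$ in the presence of loops, one can observe that a loop in $M/U_{s-1}$ would force the next level value to be $0$, contradicting the monotonicity $0<v_1<\dots<v_t$ guaranteed by Theorem~\ref{th_char_x_general}.) Everything else is routine bookkeeping, in particular the standard translation ``closed set of the graphic matroid $=$ union of the induced edge sets on the components''.
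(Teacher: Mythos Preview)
Your proof is correct. The paper does not actually prove this lemma; it is stated as a reformulation of a result from Thorup's paper~\cite{thorup} and is simply cited. So there is no ``paper's own proof'' to compare against.

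That said, your approach is worth remarking on: rather than appealing to Thorup, you derive the statement entirely from the paper's Theorem~\ref{th_char_x_general}. The key observation---that each sublevel set $U_s=\{e:x^*_e\le v_s\}$ is a flat of the matroid---is exactly right, and your inductive argument is clean: closedness of $U_{s-1}$ makes $M/U_{s-1}$ loopless, and then the density-maximality of $L_s$ forces $L_s$ to be closed in $M/U_{s-1}$, hence $U_s$ is closed in $M$. The translation to the graphic case (a flat is a union of induced edge sets on its components) is standard and gives the lemma immediately. The only implicit assumption is that $G$ has no self-loops, which is harmless in the min-cut setting. This makes the lemma self-contained within the paper's framework, which is arguably preferable to the external citation.
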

Notice that $D_i = \sum\limits_{S \in \mathcal{P}_i \;:\; |\partial S| > \lambda} (|\partial S| - \lambda)$. Also, $D_i = 2 m_i - \lambda n_i$.
\begin{claim}
    The sequence $D_0, D_1, \ldots$ is non-decreasing.
    \label{claim_D_nondecreasing}
\end{claim}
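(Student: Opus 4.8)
The plan is to compare $G_i$ and $G_{i+1}$ directly, using the identity $D_i = 2m_i - \lambda n_i$ together with the structural relationship between the partitions $\mathcal{P}_i$ and $\mathcal{P}_{i+1}$. First I would observe that since $\ell_{i+1} = \ell_i - \varepsilon/\lambda < \ell_i$, we have $E^*_{<\ell_i} \subseteq E^*_{<\ell_{i+1}}$, so each component of $(V, E^*_{<\ell_i})$ is contained in a component of $(V, E^*_{<\ell_{i+1}})$; hence $\mathcal{P}_{i+1}$ is a coarsening of $\mathcal{P}_i$. Equivalently, $G_{i+1}$ is obtained from $G_i$ by (a) deleting the edges in $E^*_{\geq\ell_i}\setminus E^*_{\geq\ell_{i+1}}$, i.e.\ edges $e$ with $\ell_{i+1}\le x^*_e<\ell_i$, and (b) contracting each block of $\mathcal{P}_{i+1}$ down to a single vertex. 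By Lemma~\ref{lem_x_decreasing}, every edge internal to a block $S\in\mathcal{P}_{i+1}$ has $x^*_e<\ell_{i+1}$, so such edges are \emph{not} in $E_{i+1}$; this means the contraction step (b) only ever contracts edges that have \emph{already been deleted} in step (a), and therefore contributes nothing further — $G_{i+1}$ is exactly $G_i$ with the edge set $E^*_{\geq\ell_i}\setminus E^*_{\geq\ell_{i+1}}$ removed and the resulting components identified.

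Next I would track how $2m_i - \lambda n_i$ changes. Removing a single edge from a graph decreases $2m$ by $2$. Identifying the endpoints of a component (i.e.\ replacing a connected set of $t$ vertices by one vertex) decreases $\lambda n$ by $\lambda(t-1)$, hence \emph{increases} $2m-\lambda n$ by $\lambda(t-1)\ge 0$. So the only term that can decrease $D$ is the edge-deletion step: each deleted edge costs $2$. The key accounting point is that each connected component that gets collapsed was held together by at least as many deleted edges as the number of vertices merged minus one. Concretely, let the blocks of $\mathcal{P}_{i+1}$ refine into $t_1, t_2,\dots$ blocks of $\mathcal{P}_i$ respectively; collapsing the $j$-th such block merges $t_j$ super-vertices of $G_i$ into one, increasing $2m-\lambda n$ by $\lambda(t_j-1)$, but this merge is realized by deleting at least $t_j - 1$ edges of $G_i$ (a connected graph on $t_j$ vertices has at least $t_j-1$ edges, and after deletion these $t_j$ vertices must become disconnected inside $G_{i+1}$), each deletion costing $2$. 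Thus the net change from handling the $j$-th block is $\ge \lambda(t_j-1) - 2\cdot(\text{\#edges deleted within it})$; since $\lambda\ge 2$ and the number of deleted edges needed to disconnect $t_j$ super-vertices into singletons is... here is the subtlety.

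The main obstacle, which I would handle carefully, is that the edges deleted in step (a) are \emph{not} only the ones internal to collapsed blocks — some deleted edges cross between different blocks of $\mathcal{P}_{i+1}$, and those contribute $-2$ with no compensating contraction gain. So a naive "each deleted edge is paid for by a contraction" argument fails. The correct route is to bound things globally rather than block-by-block: the number of \emph{inter-block} deleted edges equals the number of edges of $G_{i+1}$'s "would-be" picture that... — more precisely, I expect the clean statement is that $D_i = \sum_{S\in\mathcal P_i}(|\partial S|-\lambda)$ and $D_{i+1} = \sum_{T\in\mathcal P_{i+1}}(|\partial T|-\lambda)$, and one should compare these boundary sums directly. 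For a block $T\in\mathcal P_{i+1}$ that is the union of blocks $S_1,\dots,S_{t}\in\mathcal P_i$, we have $|\partial T| = \sum_{j}|\partial S_j| - 2\cdot(\text{\#}E^*_{\ge\ell_i}\text{-edges between the }S_j) \le \sum_j |\partial S_j|$, but also $|\partial T|\ge\lambda$ and each $|\partial S_j|\ge\lambda$, so $(|\partial T|-\lambda) - \sum_j(|\partial S_j|-\lambda) = (|\partial T| - \sum_j|\partial S_j|) + (t-1)\lambda = (t-1)\lambda - 2e_{\mathrm{int}}$ where $e_{\mathrm{int}}$ is the number of $E_i$-edges strictly between the $S_j$'s inside $T$. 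Summing over all $T$, the total change is $D_{i+1}-D_i = \lambda\sum_T(t_T-1) - 2\sum_T e_{\mathrm{int},T}$. Now $\sum_T(t_T-1) = n_i - n_{i+1}$, and $\sum_T e_{\mathrm{int},T}$ is precisely the number of $E_i$-edges that lie inside some block of $\mathcal P_{i+1}$; by Lemma~\ref{lem_x_decreasing} every such edge has $x^*_e<\ell_{i+1}$, hence is \emph{not} in $E_{i+1}$, so $\sum_T e_{\mathrm{int},T} \le m_i - m_{i+1}$ — and in fact I claim equality is not needed, I only need: each block $T$ with $t_T\ge 2$ must contain at least $t_T-1$ such internal $E_i$-edges (since the $S_j$ are connected via $E_i$-edges within $V$, but all edges of $E^*_{<\ell_i}$ that connect them are... ) — and here the argument must invoke that $T$ is connected in $(V,E^*_{<\ell_{i+1}})$ and relate that to $E_i$-edges; whether $\lambda(t_T-1)\ge 2e_{\mathrm{int},T}$ holds blockwise is exactly the crux. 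If it does not hold blockwise one falls back on the cut-counting: since $G_i$ has min-degree $\ge\lambda$ in the super-vertex sense and every $S_j$ has $|\partial S_j|\ge\lambda$ while the merged $|\partial T|\ge\lambda$, a counting of half-edges gives $2e_{\mathrm{int},T} = \sum_j|\partial S_j| - |\partial T| \le t_T\cdot(\text{something})$; I would push this through using $|\partial S_j| - \lambda\ge 0$ and the fact that at most one of the merged components can be "tight" in a way that... — in short, I expect the inequality $\lambda\sum_T(t_T-1)\ge 2\sum_T e_{\mathrm{int},T}$ to follow from $\lambda\ge 2$ combined with a spanning-subgraph argument (each block $T$ internally connected by at least $t_T-1$ edges, each of which, being internal, is absent from $E_{i+1}$ but present in $E_i$), and I would present that as the heart of the proof.
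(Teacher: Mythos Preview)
Your proposal contains a fundamental error at the very first step: you claim that since $\ell_{i+1} < \ell_i$ we have $E^*_{<\ell_i} \subseteq E^*_{<\ell_{i+1}}$, and hence that $\mathcal{P}_{i+1}$ is a \emph{coarsening} of $\mathcal{P}_i$. The containment goes the other way. If $\ell_{i+1} < \ell_i$ then $\{e : x^*_e < \ell_{i+1}\} \subseteq \{e : x^*_e < \ell_i\}$, so $E^*_{<\ell_{i+1}} \subseteq E^*_{<\ell_i}$. Consequently the graph $(V, E^*_{<\ell_{i+1}})$ has \emph{fewer} edges than $(V, E^*_{<\ell_i})$, its components are finer, and $\mathcal{P}_{i+1}$ is a \emph{refinement} of $\mathcal{P}_i$; both $n_i$ and $m_i$ are non-decreasing in $i$ (as the paper uses elsewhere, e.g.\ in the bound $n_i \le n_0 + 2D_0$ and in the claim $n_{i+1}-n_i > D_i/\lambda\cdot(\ldots)$). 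Your entire picture of ``going from $G_i$ to $G_{i+1}$ by deleting edges and contracting blocks'' is therefore backwards, and all the subsequent accounting (which is already tentative and incomplete) rests on this wrong premise.

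The paper's proof takes a completely different, and much shorter, route: it writes $D_i = (n_i - 1)\bigl(2\frac{m_i}{n_i-1} - \lambda\bigr) - \lambda$ and argues that both factors $n_i - 1$ and $2\frac{m_i}{n_i-1} - \lambda$ are positive and non-decreasing. Positivity of the second factor follows from the minimum degree in $G_i$ being at least $\lambda$. Monotonicity of $n_i$ is the refinement observation above. The key point is monotonicity of $\frac{m_i}{n_i-1}$: the paper observes that $\frac{n_i-1}{m_i}$ equals the average of $x^*_e$ over $e \in E_i$, and as $i$ grows the set $E_i$ only acquires new edges with smaller $x^*$-values (this is Lemma~\ref{lem_x_decreasing}), so the average cannot increase. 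No block-by-block edge counting is needed.
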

\begin{proof}
    Notice that $D_i = (n_i - 1) \left( 2\frac{m_i}{n_i - 1} - \lambda \right) - \lambda$. Now we have that the sequence $D_i$ is non-decreasing if we show the following:
    \begin{enumerate}
        \item $n_i$ is positive
        \item $2\frac{m_i}{n_i - 1} - \lambda$ is positive
        \item $n_i$ is non-decreasing
        \item $2\frac{m_i}{n_i - 1} - \lambda$ is non-decreasing
    \end{enumerate}
    Items 1 and 3 are obvious. Item 2 follows from the fact that the minimum degree in $G_i$ is at least $\lambda$, so $\frac{2m_i}{n_i} \geq \lambda$. 
    Item 4 follows from Lemma \ref{lem_x_decreasing}, since $\frac{m_i}{n_i - 1}$ is the reciprocal of the average $x^*_e$ for $e \in E_i$, and the average $x_e$ does not increase.
\end{proof}

\begin{claim}
If $\varepsilon (i + 2)(\lambda + 1) \le 1$
then $|E[S] \cap E_i| \leq \lambda (|\partial S| - \lambda)$ for any $S \in \mathcal{P}_0$.
\end{claim}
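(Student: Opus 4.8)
The plan is to reduce the claim to a single ``mass'' inequality by combining the fact that $x^*$ lies in the base polytope $P_B$ with the lower bound on $\ell_i$ supplied by \eqref{eq_ell0_bound} and the hypothesis.

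Two elementary observations come first. Since $E_i = E^*_{\ge \ell_i}$, every edge $e\in E[S]\cap E_i$ has $x^*_e\ge \ell_i$, so $\ell_i\,|E[S]\cap E_i|\le x^*(E[S]\cap E_i)$. Moreover, applying Lemma~\ref{lem_x_decreasing} to $\mathcal{P}_i$ shows that the edges inside any single part of $\mathcal{P}_i$ have load $<\ell_i$, hence $E[S]\cap E_i$ is exactly the set of edges of $G$ joining two \emph{distinct} parts of $\mathcal{P}_i$ that both lie inside $S$. For the second observation, substitute $\ell_i=\ell_0-i\varepsilon/\lambda$ into \eqref{eq_ell0_bound}:
\[
\ell_i \;>\; (1-\varepsilon)\tfrac{2}{\lambda}-\tfrac{i\varepsilon}{\lambda} \;=\; \tfrac{2-(i+2)\varepsilon}{\lambda},
\]
and the hypothesis $\varepsilon(i+2)(\lambda+1)\le 1$, i.e.\ $\varepsilon(i+2)\le\tfrac1{\lambda+1}$, gives $\ell_i>\tfrac{2\lambda+1}{\lambda(\lambda+1)}>\tfrac1\lambda$, so $1/\ell_i<\lambda$.

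It now suffices to prove the combinatorial inequality $x^*(E[S]\cap E_i)\le |\partial S|-\lambda$: combined with the two displays this yields $|E[S]\cap E_i|\le x^*(E[S]\cap E_i)/\ell_i<\lambda(|\partial S|-\lambda)$, which is the claim (and in particular forces $E[S]\cap E_i=\varnothing$ when $(S,\bar S)$ is a min-cut). To prove it I would use $x^*\in P_B$, so $x^*(E[S]\cap E_i)\le r(E[S]\cap E_i)$, and then bound the rank. Contracting $\bar S$ to a single vertex $z$ and contracting each part of $\mathcal{P}_i$ lying inside $S$ turns $G$ into a $\lambda$-edge-connected multigraph in which the edges of $E[S]\cap E_i$ are precisely those not incident to $z$, while $z$ has degree $|\partial S|$; the bound on $r(E[S]\cap E_i)$ should then follow from the degree condition (each part of $\mathcal{P}_i$ inside $S$ has at least $\lambda$ incident edges) together with the standing assumption that every tree of the packing crosses every min-cut at least twice --- passing to the contracted spanning trees $T/\mathcal{P}_i$ of $G_i$, this assumption forbids a min-cut part from being a leaf and thereby keeps the number of parts of $\mathcal{P}_i$ inside $S$ small relative to $|\partial S|$.

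I expect this last step to be the main obstacle. The pure matroid bound $x^*(H)\le r(H)$ gives only $x^*(E[S]\cap E_i)\le |S|-1$, which is far too weak when $S$ is large with a small boundary, so the argument has to genuinely exploit that $x^*$ is the limit of a greedy tree packing and not just an arbitrary point of $P_B$: concretely, that each tree contributes only a \emph{forest} to $E[S]\cap E_i$, and that the contradiction hypothesis excludes a large, densely interconnected part $S$ carrying many high-load internal edges while $|\partial S|$ stays near $\lambda$. Once $x^*(E[S]\cap E_i)\le |\partial S|-\lambda$ is in hand, the rest is the arithmetic above.
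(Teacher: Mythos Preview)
Your reduction is set up correctly --- the lower bound $\ell_i>\frac{2\lambda+1}{\lambda(\lambda+1)}$ is exactly what is needed, and the observation $\ell_i\,|E[S]\cap E_i|\le x^*(E[S]\cap E_i)$ is fine --- but the proof has a genuine gap at precisely the place you flag. The inequality $x^*(E[S]\cap E_i)\le |\partial S|-\lambda$ is not proved, and your sketch for it (rank bound plus a degree/leaf argument in the contracted graph) does not go through. The matroid inequality $x^*(H)\le r(H)$ only gives $x^*(E[S]\cap E_i)\le n_S-1$, where $n_S$ is the number of $\mathcal P_i$-parts inside $S$; so you would then need $n_S-1\le |\partial S|-\lambda$. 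Nothing in the hypotheses forces this: from the actual argument one can only deduce $n_S-1<2(|\partial S|-\lambda)$, leaving ample room for $n_S-1>|\partial S|-\lambda$. In other words, trying to bound $x^*(E[S]\cap E_i)$ independently of $m_S:=|E[S]\cap E_i|$ discards the very coupling that makes the claim true.

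The paper's proof does not separate the two quantities. The key extra ingredient you are missing is a double-counting inequality inside $S$: since every proper nonempty subset of $S$ has boundary $\ge\lambda+1$ (by the definition of $\ell_0$), summing degrees in the contracted graph $H_S$ on $n_S$ vertices and $m_S$ edges yields $(\lambda+1)n_S\le 2m_S+|\partial S|$. Combined with the identity $\sum_{e\in E(H_S)}x^*_e=n_S-1$ (so $\ell_i\le(n_S-1)/m_S$), one gets $(\lambda+1)\ell_i\,m_S\le 2m_S+|\partial S|-\lambda-1$, and your own bound $\ell_i>\frac{2\lambda+1}{\lambda(\lambda+1)}$ then rearranges directly to $m_S<\lambda(|\partial S|-\lambda-1)$. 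A short separate argument handles the boundary case $|\partial S|\le\lambda+1$ by showing $E[S]\cap E_i=\varnothing$. So the missing idea is this degree inequality for the parts of $\mathcal P_i$ inside $S$, used \emph{jointly} with the load bound rather than after it.
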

\begin{proof}
We can assume that $E[S]$ is non-empty (otherwise the statement is trivial).
Choose a value $\ell_S\le\ell_0$ (to be defined later), and let
 $H_S$ be a graph with the vertex set $S / E^*_{< \ell_S}$ and the edge set $E[S] \cap E^*_{\geq \ell_S}$. 
 Let $H_S$ have $n_S \geq 2$ vertices and $m_S \geq 1$ edges. By the definition of $\ell_0$, strict subsets of $S$ cannot be min-cuts, and thus, they have the edge boundary of size at least $\lambda + 1$. Then, we can double-count $E(H_S)$ and have the following bound:
\begin{equation}
    (\lambda + 1) n_S \leq 2 m_S + |\partial S|.
    \label{eq_double_counting}
\end{equation}
Two cases are possible.
\begin{itemize}
\item $|\partial S| \le \lambda + 1$. We let $\ell_S := \max\limits_{e \in E[S]} x_e^*$.
    Using the bound \eqref{eq_double_counting}, we can write
    \begin{equation*}
        \ell_S = \frac{n_S - 1}{m_S} \leq \frac{2(n_S - 1)}{(\lambda + 1) n_S - |\partial S|} \leq \frac{2(n_S - 1)}{(\lambda + 1) n_S - (\lambda + 1)} = \frac{2}{\lambda + 1}.
    \end{equation*}
The assumption
$\varepsilon (i + 2)(\lambda + 1) \le 1$ and the bound~\eqref{eq_ell0_bound}
give $\ell_i = \ell_0 - \frac{i \varepsilon}{\lambda} > \frac{2}{\lambda} - (i + 2) \frac{\varepsilon}{\lambda} > \frac{2}{\lambda} - \frac{2}{\lambda (\lambda + 1)} = \frac{2}{\lambda + 1}$.
We showed that the edges $e\in E[S]$ satisfy $x^*_e\le \ell_S \le \frac{2}{\lambda+1}$
while the edges $e\in E_i$ satisfy $x^*_e\ge \ell_i > \frac{2}{\lambda+1}$. Therefore, $E[S] \cap E_i$ is empty.

\item $|\partial S| > \lambda + 1$. We let $\ell_S := \ell_i$, then $E_S=E[S]\cap E_i$. Using the bounds \eqref{eq_ell0_bound} and \eqref{eq_double_counting}, we get 
    \begin{equation*}
(1 - \varepsilon) \frac{2}{\lambda} -  \frac{i \varepsilon} \lambda < \ell_0 - \frac{i \varepsilon} \lambda =\ell_i\leq \min\limits_{e \in E(H_S)} x^*_e \leq \avg_{e \in E(H_S)} x^*_e = \frac{n_S - 1}{m_S} \leq \frac{2 m_S + |\partial S| - \lambda - 1}{(\lambda + 1) m_S}.
    \end{equation*}
    Rearranging and using the assumption $\varepsilon (i + 2)(\lambda + 1) < 1$ gives
    \begin{equation*}
       \frac{|\partial S|-\lambda-1}{m_S} > \frac{\lambda+1}\lambda (2-(i+2)\varepsilon)-2 \ge \frac{\lambda+1}\lambda \left(2-\frac{1}{\lambda+1}\right)-2 = \frac 1\lambda
    \end{equation*}
Therefore, $
        m_S \leq \lambda (|\partial S| - \lambda - 1) < \lambda (|\partial S| - \lambda)
    $,
    as claimed.

\end{itemize}
\end{proof}

\begin{corollary}
    \label{cor_ni_upper_bound}
    If $\varepsilon (i + 2)(\lambda + 1) \leq 1$ then
    \begin{equation}
        n_i \leq n_0 + 2 D_0.
    \end{equation}
\end{corollary}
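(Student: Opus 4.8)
The plan is to bound $n_i$ by analysing each part $S\in\mathcal P_0$ separately. Since $\ell_i=\ell_0-i\varepsilon/\lambda\le\ell_0$, we have $E^*_{<\ell_i}\subseteq E^*_{<\ell_0}$, so the partition $\mathcal P_i$ refines $\mathcal P_0$; for $S\in\mathcal P_0$ let $k_S$ be the number of parts of $\mathcal P_i$ contained in $S$, so that $n_i=\sum_{S\in\mathcal P_0}k_S$ while $n_0=\sum_{S\in\mathcal P_0}1$. Hence it suffices to prove the pointwise estimate $k_S\le 2(|\partial S|-\lambda)+1$ for every $S\in\mathcal P_0$: summing it over $S\in\mathcal P_0$ gives exactly $n_i\le 2D_0+n_0$.

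If $k_S=1$ the estimate holds trivially, since $|\partial S|\ge\lambda$. So suppose $k_S\ge 2$. Then every part $T\in\mathcal P_i$ with $T\subseteq S$ is a \emph{proper} subset of $S$, and --- exactly as noted in the proof of the preceding claim, from the definition of $\ell_0$ --- a proper subset of $S$ is not a min-cut, hence $|\partial T|\ge\lambda+1$. Now I would double-count boundary edges inside $S$: putting $m_i(S):=|E[S]\cap E_i|$, every edge of $E[S]\cap E_i$ runs between two distinct $\mathcal P_i$-parts of $S$ and is counted twice in $\sum_{T\subseteq S}|\partial T|$, every edge of $\partial S$ is counted once, and by Lemma~\ref{lem_x_decreasing} no edge interior to a single $\mathcal P_i$-part belongs to $E_i$. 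Therefore
\[
   (\lambda+1)\,k_S\ \le\ \sum_{T\in\mathcal P_i:\,T\subseteq S}|\partial T|\ =\ 2\,m_i(S)+|\partial S|.
\]

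Finally I would invoke the preceding claim, which under the hypothesis $\varepsilon(i+2)(\lambda+1)\le 1$ gives $m_i(S)\le\lambda(|\partial S|-\lambda)$. Writing $b:=|\partial S|\ge\lambda+1$, we get $(\lambda+1)k_S\le 2\lambda(b-\lambda)+b$, and a one-line check shows $2\lambda(b-\lambda)+b\le(\lambda+1)\bigl(2(b-\lambda)+1\bigr)$ for all $b\ge\lambda+1$ (the difference of the two sides equals $b-\lambda+1\ge 2$). Dividing by $\lambda+1$ yields $k_S\le 2(|\partial S|-\lambda)+1$, which completes the pointwise estimate and hence the corollary. I do not anticipate a real obstacle here; the only point that needs attention is that the factor $\lambda$ produced by the preceding claim must not survive the summation, and it does not, precisely because every part $S\in\mathcal P_0$ that actually gets subdivided has $|\partial S|\ge\lambda+1$ and so already contributes at least one unit to $|\partial S|-\lambda$.
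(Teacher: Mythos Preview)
Your proof is correct, but it reaches the pointwise bound $k_S\le 2(|\partial S|-\lambda)+1$ by a different route than the paper. The paper, for $S\in\mathcal P_0$ with $|\partial S|>\lambda$, invokes the monotonicity of $x^*$ (Lemma~\ref{lem_x_decreasing}) in the form $\frac{n_S-1}{m_S}<\frac{2}{\lambda}$, and combines this with the preceding claim's bound $m_S\le\lambda(|\partial S|-\lambda)$ to get the slightly sharper $n_S\le 2(|\partial S|-\lambda)$; summing over the non-min-cut parts and adding $p_0$ for the min-cut parts gives $n_i\le p_0+2D_0\le n_0+2D_0$. You instead re-run the double-counting argument from the preceding claim at level $\ell_i$, using only that proper subparts have boundary at least $\lambda+1$, and then absorb the extra constant into the $+1$ per part. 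Your route is perhaps a touch more self-contained (it does not need the averaging statement $\frac{n_S-1}{m_S}<\frac{2}{\lambda}$), while the paper's yields a marginally tighter per-part estimate; for the corollary's conclusion the two are equivalent.
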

\begin{proof}
    Consider any $S\in\mathcal{P}_0$. If $|\partial S| = \lambda$ then $E[S] \cap E_i$ is empty and hence $S\in\mathcal{P}_i$.
    Otherwise, if $|\partial S| > \lambda$, we have $m_S \leq \lambda (|\partial S| - \lambda)$ where we use the notation from the proof above.
    Using $\frac{n_S - 1}{m_S} < \frac{2}{\lambda}$ (which is true since $x^*$ decreases when we recurse inside $S$, by Lemma \ref{lem_x_decreasing}), we have 
    \begin{equation}
        n_S \leq 2 (|\partial S| - \lambda).
    \end{equation}
    Therefore, 
    \begin{equation}
        n_i = p_0 + \sum\limits_{S \in \mathcal{P}_0 \;:\; |\partial S| > \lambda} n_S \leq n_0 + 2 D_0.
    \end{equation}
\end{proof}

Suppose, $k$ is such an integer that any tree packing of $N \geq k$ trees satisfies $\|x^{N} - x^*\|_\infty \leq \frac{\varepsilon / 2}{\lambda}$. Let $\mathcal{T}$ be a tree packing of $k$ trees, and let $\mathcal{T}'$ be a tree packing of $k' = 4k$ trees that extends $\mathcal{T}$. Assume that all of the trees cross all min-cuts at least 2 times. We have that the edge load in $\mathcal{T}' \setminus \mathcal{T}$ deviates from the ideal edge load by at most $\frac{\varepsilon}{\lambda}$, relying on the triangle inequality: $\left\| \frac{k' x^{k'} - k x^k}{k' - k} - x^* \right\|_\infty = \left\| \frac{k' x^{k'}}{k' - k} - \frac{k'}{k' - k} x^* - \frac{k x^{k}}{k' - k} + \frac{k}{k' - k} x^*\right\|_\infty = \left\| \frac{k'}{k' - k} (x^{k'} - x^*) - \frac{k}{k' - k}(x^{k'} - x^*) \right\|_\infty \leq  \frac{k'}{k' - k} \|x^{k'} - x^*\|_\infty + \frac{k}{k' - k} \|x^{k'} - x^*\|_\infty \leq \frac{\varepsilon}{\lambda}$.
\begin{claim}
    \begin{equation}
        n_{i + 1} - n_i > \frac{D_i}{\lambda} (1 - (i/2 + 1)(\lambda + 1) \varepsilon).
    \end{equation}
\end{claim}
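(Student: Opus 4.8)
The plan is to reduce the Claim to a lower bound on the size of the "band'' $F_i:=E_{i+1}\setminus E_i=E^*_{[\ell_{i+1},\ell_i)}$, and to obtain that bound from the trees in $\mathcal{T}'\setminus\mathcal{T}$. The first ingredient is a tightness identity. The threshold set $E^*_{<\ell_j}$ is a prefix of the principal–partition level sets of $x^*$, hence a tight set, $x^*(E^*_{<\ell_j})=r(E^*_{<\ell_j})$; summing this over the connected components of $(V,E^*_{<\ell_j})$ — which are exactly the parts of $\mathcal{P}_j$ — yields $\sum_{e\in E[S]}x^*_e=|S|-1$ for every $S\in\mathcal{P}_j$. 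Consequently $n_{j+1}-n_j=\sum_{e\in F_j}x^*_e$, and since every edge of $F_j$ has $x^*_e\in[\ell_{j+1},\ell_j)$ we get $|F_j|\,\ell_{j+1}\le n_{j+1}-n_j<|F_j|\,\ell_j$. Combining this with the estimate $\lambda\ell_{i+1}>2-\tfrac{c_{i+1}\varepsilon}{\lambda+1}$ (which follows from the analogue of \eqref{eq_ell0_bound} for accuracy $\tfrac{\varepsilon/2}{\lambda}$, namely $\ell_0>(1-\tfrac{\varepsilon}{2})\tfrac{2}{\lambda}$, together with $\ell_{i+1}=\ell_0-(i+1)\tfrac{\varepsilon/2}{\lambda}$) reduces the Claim to the single inequality $|F_i|\bigl(2-\tfrac{c_{i+1}\varepsilon}{\lambda+1}\bigr)>D_i\,(1-c_i\varepsilon)$, i.e.\ morally $|F_i|\gtrsim D_i/2$.

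Next I would record two structural facts that make the min-cut hypothesis usable after contraction. (a) No min-cut of $G$ contains an edge lying inside any part $S\in\mathcal{P}_j$: such an edge has $x^*_e<\ell_j\le\ell_0$ and hence is not a min-cut edge. Therefore every min-cut of $G$ is compatible with $\mathcal{P}_j$ and corresponds to a min-cut of $G_j$; in particular every tree, contracted to $G_j$, crosses every min-cut of $G_j$ at least twice, so every min-cut part of $\mathcal{P}_j$ has $T$-degree $\ge 2$ in the contracted tree. (b) For every part $S\in\mathcal{P}_j$ and every nonempty proper subset $U\subsetneq S$ one has $|\partial_G U|\ge\lambda+1$ (the $E^*_{<\ell_j}$-path inside $S$ joining $U$ to $S\setminus U$ supplies a non-min-cut edge in $\partial_G U$); fact (b) is exactly the input used in the preceding Claim and Corollary and also shows that when a part splits, none of its pieces is a min-cut part.

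The heart of the argument — and the step I expect to be the main obstacle — is the lower bound $|F_i|\gtrsim D_i/2$. I would obtain it from the $k'-k=3k$ trees of $\mathcal{T}'\setminus\mathcal{T}$, using that their average load $\overline{\ell}=\tfrac{k'x^{k'}-kx^k}{k'-k}$ satisfies $\|\overline{\ell}-x^*\|_\infty\le\varepsilon/\lambda$. For each such $T$ the contracted image in $G_j$ is connected and spanning, so $t_j(T):=|T\cap E_j|\ge n_j-1$, and the degree count $2t_j(T)=\sum_{S\in\mathcal{P}_j}\deg_T(S)$ with $\deg_T\ge 2$ on the $p_j$ min-cut parts gives also $t_j(T)\ge n_j-q_j/2$. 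Averaging over $T\in\mathcal{T}'\setminus\mathcal{T}$ and using $\sum_{e\in E_j}x^*_e=n_j-1$ pins $\overline{t_j}:=\operatorname{avg}_T t_j(T)$ to $[\,n_j-1,\;n_j-1+\tfrac{\varepsilon}{\lambda}m_j\,]$, whence $\operatorname{avg}_T t_i(T)$ is at most $n_i-1+\tfrac{\varepsilon}{\lambda}m_i$ and there is a tree $T^\star$ with $t_i(T^\star)\le n_i-1+\tfrac{\varepsilon}{\lambda}m_i$; since $|F_i|\ge |T^\star\cap F_i|=t_{i+1}(T^\star)-t_i(T^\star)$ and $t_{i+1}(T^\star)\ge n_{i+1}-q_{i+1}/2$, we get $|F_i|\ge (n_{i+1}-n_i)+1-\tfrac{q_{i+1}}{2}-\tfrac{\varepsilon}{\lambda}m_i$. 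Feeding in $q_{i+1}\le D_{i+1}$, the identity $D_{i+1}-D_i=\sum_{e\in F_i}(2-\lambda x^*_e)\le |F_i|\cdot O(c_i\varepsilon)$, and the bound $m_i=O(\lambda n_i)$ (from $\sum_{E_i}x^*_e=n_i-1$ and $x^*_e\ge\ell_i\approx\tfrac{2}{\lambda}$ together with the Corollary's $n_i\le n_0+2D_0$), one rearranges to $|F_i|\ge D_i/2$ up to the stated $(1-O(c_i\varepsilon))$–type loss. Getting the error terms $O(\tfrac{\varepsilon}{\lambda}m_i)$ genuinely under control is where the regime $\varepsilon=\Theta(1/\lambda^2)$ is forced; and the degenerate case $D_0=0$ does not arise, since then the Corollary forces $\mathcal{P}_2=\mathcal{P}_0$ and hence $q_2=0$, contradicting Thorup's Lemma~17.

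Once the Claim is in hand, the endgame is to combine it with the Corollary: $D_i$ is non-decreasing and (by the remark above) $D_0\ge 1$, so summing the Claim over $i=0,\dots,I$ with $I=\Theta\!\bigl(1/(\lambda\varepsilon)\bigr)$, chosen so that $c_i\varepsilon\le\tfrac12$, gives $n_I-n_0=\Omega\!\bigl(1/(\lambda^2\varepsilon)\bigr)$, while $n_I-n_0\le 2D_0$; comparing these forces $\varepsilon=\Omega(1/\lambda^2)$. Hence a greedy tree packing with $k=\Theta\!\bigl(\lambda\log m/(\varepsilon/\lambda)^2\bigr)=\Theta(\lambda^5\log m)$ trees — which by \eqref{eq_thorups_bound} guarantees $\|x^N-x^*\|_\infty\le\tfrac{\varepsilon/2}{\lambda}$ for all $N\ge k$ with $\varepsilon=\Theta(1/\lambda^2)$ — cannot have all of its trees crossing all min-cuts at least twice, i.e.\ it contains a tree crossing some min-cut exactly once.
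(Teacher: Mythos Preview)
Your proposal has a genuine gap at the ``heart of the argument,'' and it is precisely the step you yourself flagged as the main obstacle.

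You correctly derive the tight-set identity $n_{j+1}-n_j=\sum_{e\in F_j}x^*_e$, and the reduction of the Claim to $|F_i|\gtrsim D_i/2$ is sound. But your tree argument does not deliver this lower bound. What you actually obtain from the single tree $T^\star$ is
\[
|F_i|\;\ge\;|T^\star\cap F_i|\;\ge\;(n_{i+1}-n_i)+1-\tfrac{q_{i+1}}{2}-\tfrac{\varepsilon}{\lambda}m_i,
\]
and after substituting $q_{i+1}\le D_{i+1}\le D_i+O(|F_i|\varepsilon)$ you arrive at an inequality of the shape
\[
|F_i|\bigl(1+O(\varepsilon)\bigr)\;\ge\;(n_{i+1}-n_i)-\tfrac{D_i}{2}-O(\varepsilon).
\]
This is circular: the right-hand side contains $(n_{i+1}-n_i)$, which is exactly the quantity you are trying to bound, and your only independent relation between $|F_i|$ and $(n_{i+1}-n_i)$ is the identity $n_{i+1}-n_i\approx \tfrac{2}{\lambda}|F_i|$. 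Substituting that in yields $|F_i|(1-\tfrac{2}{\lambda})\ge -\tfrac{D_i}{2}+O(1)$, which is vacuous. Nothing prevents both $|F_i|$ and $n_{i+1}-n_i$ from being tiny while $D_i$ is large.

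The paper's proof avoids this by never routing through $|F_i|$. It bounds $\mathbb{E}[|T\cap E_i|]$ (for uniformly random $T\in\mathcal{T}'\setminus\mathcal{T}$) from above and below at the \emph{same} level $i$. The upper bound $|T\cap E_i|\le n_{i+1}-1$ holds deterministically for every tree in the packing, because greedy MST construction processes all edges inside parts of $\mathcal{P}_{i+1}$ (which have $x^{N-1}_e<\ell_i-\tfrac{\varepsilon/2}{\lambda}$) before any edge of $E_i$; this is the ingredient you are missing. The lower bound comes from double-counting $2\mathbb{E}[|T\cap E_i|]=\sum_{S\in\mathcal{P}_i}\mathbb{E}[\deg_T S]$: on min-cut parts one uses $\deg_T S\ge 2$, and on the non-min-cut parts one uses the load estimate $\overline{\ell}_e\ge\ell_i-\tfrac{\varepsilon}{\lambda}$ to get $\mathbb{E}[\deg_T S]\ge (1-(i/2+1)\varepsilon)\tfrac{2}{\lambda}|\partial S|$. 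You only use $\deg_T S\ge 1$ on non-min-cut parts, which is why your lower bound $t_j(T)\ge n_j-q_j/2$ is too weak: it drops the crucial $|\partial S|/\lambda$ contribution that produces the $D_i/\lambda$ term. Applying both bounds to $t_i$ directly gives $n_{i+1}>n_i+D_i/\lambda\cdot(1-(i/2+1)(\lambda+1)\varepsilon)$ without any detour through $|F_i|$.
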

\begin{proof}
    Let $T$ be a uniformly random tree from $\mathcal{T}' \setminus \mathcal{T}$. On the one hand, $|T \cap E_i| \leq n_{i+1} - 1 < n_{i+1}$, because, if we compute $T$ greedily by adding edges with the smallest weight, then the edges from $E_i$ will appear later than the edges in $E[S]$ for any $S \in \mathcal{P}_{i+1}$. Therefore,
    \begin{equation}
        \mathbb{E}[|T \cap E_i|] < n_{i+1}.
        \label{eq_tcape_upper}
    \end{equation}

    On the other hand, double-counting $T \cap E_i$ leads us to 
    \begin{equation}
    \begin{split}
        2\mathbb{E}[|T \cap E_i|] &= \mathbb{E}\left[ \sum\limits_{S \in \mathcal{P}_i} |T \cap \partial S| \right] = 
        \sum\limits_{S \in \mathcal{P}_i \;:\; |\partial S| = \lambda} \mathbb{E}[|T \cap \partial S|] + \sum\limits_{S \in \mathcal{P}_i \;:\; |\partial S| > \lambda} \mathbb{E}[|T \cap \partial S|]  \\
        & \geq 2 p_i + \sum\limits_{S \in \mathcal{P}_i \;:\; |\partial S| > \lambda} (1 - (i/2 + 1) \varepsilon)\frac{2}{\lambda} |\partial S|.
        \label{eq_tcape_lower}
    \end{split}
    \end{equation}
    We used the fact that if $S$ is a mincut, then $|T \cap \partial S| \geq 2$, and the assumption about the loads of $\mathcal{T}' \setminus \mathcal{T}$: that for any edge $e \in E_i$, $\mathbb{E} [|T \cap \{e\}|] = \frac{k' x_e^{k'} - k x_e^k}{k' - k} \geq x_e^* - \frac{\varepsilon}{\lambda} \geq \ell_i - \frac{\varepsilon}{\lambda} = \ell_0 - \frac{(i+1) \varepsilon}{\lambda} > (1 - (i/2 + 1) \varepsilon) \frac{2}{\lambda}$. Putting together \eqref{eq_tcape_lower} and \eqref{eq_tcape_upper}, we get
    \begin{equation}
    \begin{split}
        n_{i+1} &> \mathbb{E}[|T \cap E_i|] \geq p_i + \sum\limits_{S \in \mathcal{P}_i \;:\; |\partial S| > \lambda} (1 - (i/2 + 1) \varepsilon)\frac{1}{\lambda} |\partial S|  \\
        &= p_i + q_i - q_i + \sum\limits_{S \in \mathcal{P}_i \;:\; |\partial S| > \lambda} \frac{|\partial S|}{\lambda} - \sum\limits_{S \in \mathcal{P}_i \;:\; |\partial S| > \lambda} (i/2 + 1) \varepsilon \frac{|\partial S|}{\lambda}  \\
        &= n_i + \frac{1}{\lambda} \sum\limits_{S \in \mathcal{P}_i \;:\; |\partial S| > \lambda} (|\partial S| - \lambda) - \sum\limits_{S \in \mathcal{P}_i \;:\; |\partial S| > \lambda} (i/2 + 1) \varepsilon \frac{|\partial S|}{\lambda}  \\
        &=n_i + \frac{D_i}{\lambda} - \sum\limits_{S \in \mathcal{P}_i \;:\; |\partial S| > \lambda} (i/2 + 1) \varepsilon \frac{|\partial S|}{\lambda},
    \end{split}
    \end{equation}
    which implies 
    \begin{equation}
        n_{i + 1} - n_i > \frac{D_i}{\lambda} - \frac{(i/2 + 1) \varepsilon}{\lambda} \sum\limits_{S \in \mathcal{P}_i \;:\; |\partial S| > \lambda} |\partial S|.
    \end{equation}
    Finally, notice that $\sum\limits_{S \in \mathcal{P}_i \;:\; |\partial S| > \lambda} |\partial S| \leq (\lambda + 1) D_i$, and thus,
    \begin{equation}
        n_{i + 1} - n_i > \frac{D_i}{\lambda} (1 - (i/2 + 1)(\lambda + 1) \varepsilon).
    \end{equation}
\end{proof}

Finally, we are ready to reach the contradiction. Recall that the sequence $D_0, D_1, \ldots, $ is non-decreasing by Claim \ref{claim_D_nondecreasing}. Set $i = 3 \lambda$, and $\varepsilon = \frac{1}{10 \lambda^2}$. Summing the bounds from the previous claim, we have
\begin{equation}
    n_i > n_0 + \sum_{j = 0}^{3 \lambda} \frac{D_j}{\lambda} (1 - (j/2 + 1)(\lambda + 1) \varepsilon) > n_0 + 2 D_0.
\end{equation}
This contradicts Corollary \ref{cor_ni_upper_bound}.

To have the desired value for $\varepsilon = \Theta\left(\frac{1}{\lambda^2}\right)$, we need $\Theta(\lambda \log m / \varepsilon^2) = \Theta(\lambda^5 \log m)$ trees. This concludes the proof.
\end{proof}

\subsection{A tight lower bound for $\|x^k - x^*\|_\infty$}
\newcommand{\batile}{%
  \begin{tikzpicture}[scale=0.5]
    \draw (0,0) -- (0,1);
  \end{tikzpicture}%
}
\newcommand{\bbtile}{%
  \begin{tikzpicture}[scale=0.5]
    \draw (0,0) -- (0,1) -- (1,1);
  \end{tikzpicture}%
}
\newcommand{\bctile}{%
  \begin{tikzpicture}[scale=0.5]
    \draw (1,0) -- (0,0) -- (0,1) -- (1,1);
  \end{tikzpicture}%
}
\newcommand{\bdtile}{%
  \begin{tikzpicture}[scale=0.5]
    \draw (1,0) -- (0,0) -- (0,1) -- (1,1) -- (1,0);
  \end{tikzpicture}%
}
\newcommand{\betile}{%
  \begin{tikzpicture}[scale=0.5]
    \draw (1,0) -- (0,0) -- (0,1) -- (1,1) -- (1,0);
    \draw (1,1) -- (2,1);
  \end{tikzpicture}%
}

\newcommand{\matile}{%
  \begin{tikzpicture}[scale=0.5]
    \draw (1,0) -- (0,0) -- (0,1) -- (1,1) -- (1,0);
    \draw (1,1) -- (2,1);
  \end{tikzpicture}%
}
\newcommand{\mbtile}{%
  \begin{tikzpicture}[scale=0.5]
    \draw (0,0) -- (2,0) -- (2,1) -- (1,1) -- (1,0);
  \end{tikzpicture}%
}
\newcommand{\mctile}{%
  \begin{tikzpicture}[scale=0.5]
    \draw (0,0) -- (1,0) -- (1,1) -- (0,1) -- (2,1);
  \end{tikzpicture}%
}
\newcommand{\mdtile}{%
  \begin{tikzpicture}[scale=0.5]
    \draw (0,0) -- (2,0) -- (1,0) -- (1,1) -- (2,1);
  \end{tikzpicture}%
}
\newcommand{\metile}{%
  \begin{tikzpicture}[scale=0.5]
    \draw (0,0) -- (2,0) -- (2,1) -- (1,1) -- (1,0);
    \draw (2,1) -- (3,1);
  \end{tikzpicture}%
}
\newcommand{\mftile}{%
  \begin{tikzpicture}[scale=0.5]
    \draw (0,0) -- (2,0);
    \draw (0,1) -- (2,1);
    \draw (1,0) -- (1,1);
  \end{tikzpicture}%
}
\newcommand{\mgtile}{%
  \begin{tikzpicture}[scale=0.5]
    \draw (0,0) -- (2,0);
    \draw (1,0) -- (1,1) -- (3,1);
  \end{tikzpicture}%
}
\newcommand{\mhtile}{%
  \begin{tikzpicture}[scale=0.5]
    \draw (0,0) rectangle (1,1);
    \draw (1,0) -- (2,0);
    \draw (1,1) -- (2,1);
  \end{tikzpicture}%
}

\newcommand{\eaatile}{%
  \begin{tikzpicture}[scale=0.5]
    \draw (0,0) rectangle (1,1);
    \draw (1,0) rectangle (2,1);
    \draw (2,0) -- (2.5,0);
    \draw (2,1) -- (2.5,1);
    \draw[dotted, line width=1pt] (2.5,0.5) -- (3,0.5);
  \end{tikzpicture}%
}
\newcommand{\eabtile}{%
  \begin{tikzpicture}[scale=0.5]
    \draw (-1,0) -- (0,0);
    \draw (0,0) rectangle (1,1);
    \draw (1,0) rectangle (2,1);
    \draw (2,0) -- (2.5,0);
    \draw (2,1) -- (2.5,1);
    \draw[dotted, line width=1pt] (2.5,0.5) -- (3,0.5);
\end{tikzpicture}%
}
\newcommand{\eactile}{%
  \begin{tikzpicture}[scale=0.5]
    \draw (-1,0) -- (0,0);
    \draw (-1,1) -- (0,1);
    \draw (0,0) rectangle (1,1);
    \draw (1,0) rectangle (2,1);
    \draw (2,0) -- (2.5,0);
    \draw (2,1) -- (2.5,1);
    \draw[dotted, line width=1pt] (2.5,0.5) -- (3,0.5);
  \end{tikzpicture}%
}
\newcommand{\eadtile}{%
  \begin{tikzpicture}[scale=0.5]
    \draw (-1,1) -- (-1,0) -- (0,0);
    \draw (0,0) rectangle (1,1);
    \draw (1,0) rectangle (2,1);
    \draw (2,0) -- (2.5,0);
    \draw (2,1) -- (2.5,1);
    \draw[dotted, line width=1pt] (2.5,0.5) -- (3,0.5);
  \end{tikzpicture}%
}

\newcommand{\ebatile}{%
  \begin{tikzpicture}[scale=0.5]
    \draw (0,0) -- (3,0);
    \draw (1,0) -- (1,1) -- (3,1);
    \draw[dotted, line width=1pt] (3.5,0) -- (4,0);
    \draw[dotted, line width=1pt] (3.5,1) -- (4,1);
  \end{tikzpicture}%
}
\newcommand{\ebbtile}{%
  \begin{tikzpicture}[scale=0.5]
    \draw (0,0) rectangle (1,1);
    \draw (1,0) -- (3,0);
    \draw (1,1) -- (3,1);
    \draw[dotted, line width=1pt] (3.5,0) -- (4,0);
    \draw[dotted, line width=1pt] (3.5,1) -- (4,1);
  \end{tikzpicture}%
}
\newcommand{\ebctile}{%
  \begin{tikzpicture}[scale=0.5]
    \draw (1,0) -- (1,1);
    \draw (0,0) -- (3,0);
    \draw (0,1) -- (3,1);
    \draw[dotted, line width=1pt] (3.5,0) -- (4,0);
    \draw[dotted, line width=1pt] (3.5,1) -- (4,1);
  \end{tikzpicture}%
}
\newcommand{\ebdtile}{%
  \begin{tikzpicture}[scale=0.5]
    \draw (0,0) rectangle (1,1);
    \draw (-1,0) -- (3,0);
    \draw (1,1) -- (3,1);
    \draw[dotted, line width=1pt] (3.5,0) -- (4,0);
    \draw[dotted, line width=1pt] (3.5,1) -- (4,1);
  \end{tikzpicture}%
}
\newcommand{\ebetile}{%
  \begin{tikzpicture}[scale=0.5]
    \draw (0,1) -- (0,0) -- (3,0);
    \draw (1,0) -- (1,1) -- (3,1);
    \draw[dotted, line width=1pt] (3.5,0) -- (4,0);
    \draw[dotted, line width=1pt] (3.5,1) -- (4,1);
  \end{tikzpicture}%
}

\newcommand{\ecatile}{%
  \begin{tikzpicture}[scale=0.5]
    \draw (0,0) -- (1,0) -- (1,1);
    \draw (0,1) -- (3,1);
    \draw[dotted, line width=1pt] (3.5,1) -- (4,1);
  \end{tikzpicture}%
}
\newcommand{\ecbtile}{%
  \begin{tikzpicture}[scale=0.5]
    \draw (0,0) rectangle (1,1);
    \draw (1,1) -- (3,1);
    \draw[dotted, line width=1pt] (3.5,1) -- (4,1);
  \end{tikzpicture}%
}
\newcommand{\ecctile}{%
  \begin{tikzpicture}[scale=0.5]
    \draw (0,0) rectangle (1,1);
    \draw (1,1) -- (3,1);
    \draw (1,0) -- (2,0);
    \draw[dotted, line width=1pt] (3.5,1) -- (4,1);
  \end{tikzpicture}%
}
\newcommand{\ecdtile}{%
  \begin{tikzpicture}[scale=0.5]
    \draw (1,0) -- (1,1);
    \draw (0,1) -- (3,1);
    \draw (0,0) -- (2,0);
    \draw[dotted, line width=1pt] (3.5,1) -- (4,1);
  \end{tikzpicture}%
}
\newcommand{\ecetile}{%
  \begin{tikzpicture}[scale=0.5]
    \draw (0,0) -- (2,0);
    \draw (1,0) -- (1,1) -- (3,1);
    \draw[dotted, line width=1pt] (3.5,1) -- (4,1);
  \end{tikzpicture}%
}
\newcommand{\ecftile}{%
  \begin{tikzpicture}[scale=0.5]
    \draw (0,0) rectangle (1,1);
    \draw (-1,0) -- (0,0);
    \draw (1,1) -- (3,1);
    \draw[dotted, line width=1pt] (3.5,1) -- (4,1);
  \end{tikzpicture}%
}
\newcommand{\ecgtile}{%
  \begin{tikzpicture}[scale=0.5]
    \draw (0,1) -- (0,0) -- (2,0);
    \draw (1,0) -- (1,1) -- (3,1);
    \draw[dotted, line width=1pt] (3.5,1) -- (4,1);
  \end{tikzpicture}%
}

In this section, we first describe a family of graphs with constant $\lambda$ and MST packings in these graphs with a property $\|x^k - x^*\|_\infty \geq \Omega \left( \sqrt{\frac{1}{k}} \right)$, for $k < O(n^2)$. We then extend the family of graphs to higher values of $\lambda$ to get $\|x^k - x^*\|_\infty \geq \Omega \left( \sqrt{\frac{1}{k \lambda}} \right)$, matching Thorup's upper bound \eqref{eq_thorups_bound} up to an $O(\sqrt{\log m})$ factor. Before, the strongest lower bound was $\|x^k - x^*\|_\infty \geq \Omega\left( \frac{1}{k^{2/3} \lambda^{1/3}} \right)$ (also for polynomially bounded $k$) due to \cite{christiansen}.

First, we will consider the case $\lambda = 2$. Let $G_d$ be a square grid graph with the size $2 \times d$. Clearly, for this graph, $n = 2d$, $m = 3d - 2$, $\lambda = 2$. The ideal loads vector $x^*$ is uniform, since one can verify that $V = \argmax\limits_{S \subseteq V, |S|>1} \frac{|E[S]|}{|S|-1}$. Thus, $x^*_e = \frac{n - 1}{m} = \frac{2d - 1}{3d - 2} = \frac{2}{3} + \frac{2}{9n} + O\left( \frac{1}{n^2} \right)$. 

The minimum spanning tree at some iteration of the MST packing might be not unique. For example, the first tree can be any spanning tree. We will break the ties by fixing an ordering on the edge set, see Figure \ref{fig_ladder_ordering}. If at some iteration several MSTs exist, we will choose the one that also minimizes the total index of the edge ordering.
\begin{figure}[h!]
    \centering
    \centerline{\includegraphics[width=1.5\textwidth]{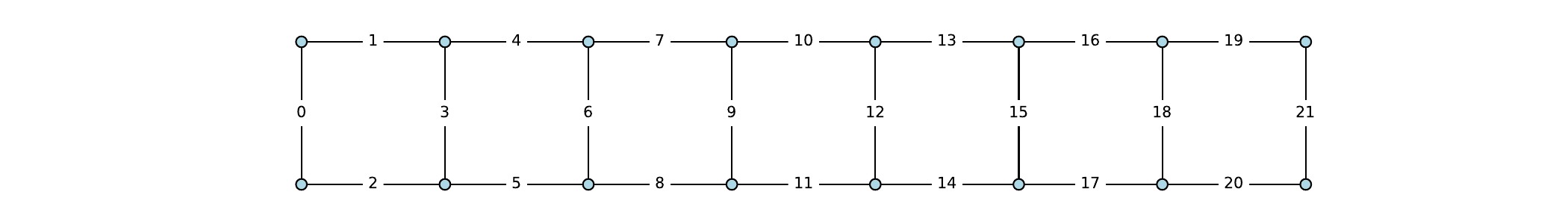}}
    \caption{The graph $G_d$ for $d = 8$. The edge ordering is shown with edge labels.}
    \label{fig_ladder_ordering}
\end{figure}


We will first give the intuition why $\|x^k - x^*\|_\infty = \Theta\left( \sqrt{\frac{1}{k}} \right)$, and then give a proof. The result of the first 3 iterations of the MST packing is shown in Figure \ref{fig_ladder_3iter}. Notice that 3 spanning trees almost ideally double-cover the edges of the graph, but there is one ``extra edge''. When $k$ is divisible by 3, the loads will be almost uniform except for $k / 3$ ``extra'' edges that will accumulate on the left end of the graph.

If one packs $k = 3 i$ trees, the edges in the far-right end of the graph have loads $2 i = \frac{2}{3}k$. In Figure \ref{fig_ladder_iter_squares}, the edge labels show the difference between the edge loads and the level $\frac{2}{3} k$. In other words, it shows how overpacked the edge is relative to the average load $\frac{2}{3}k + O(1/n)$.

\begin{figure}[h!]
    \centering
    \centerline{\includegraphics[width=1.4\textwidth]{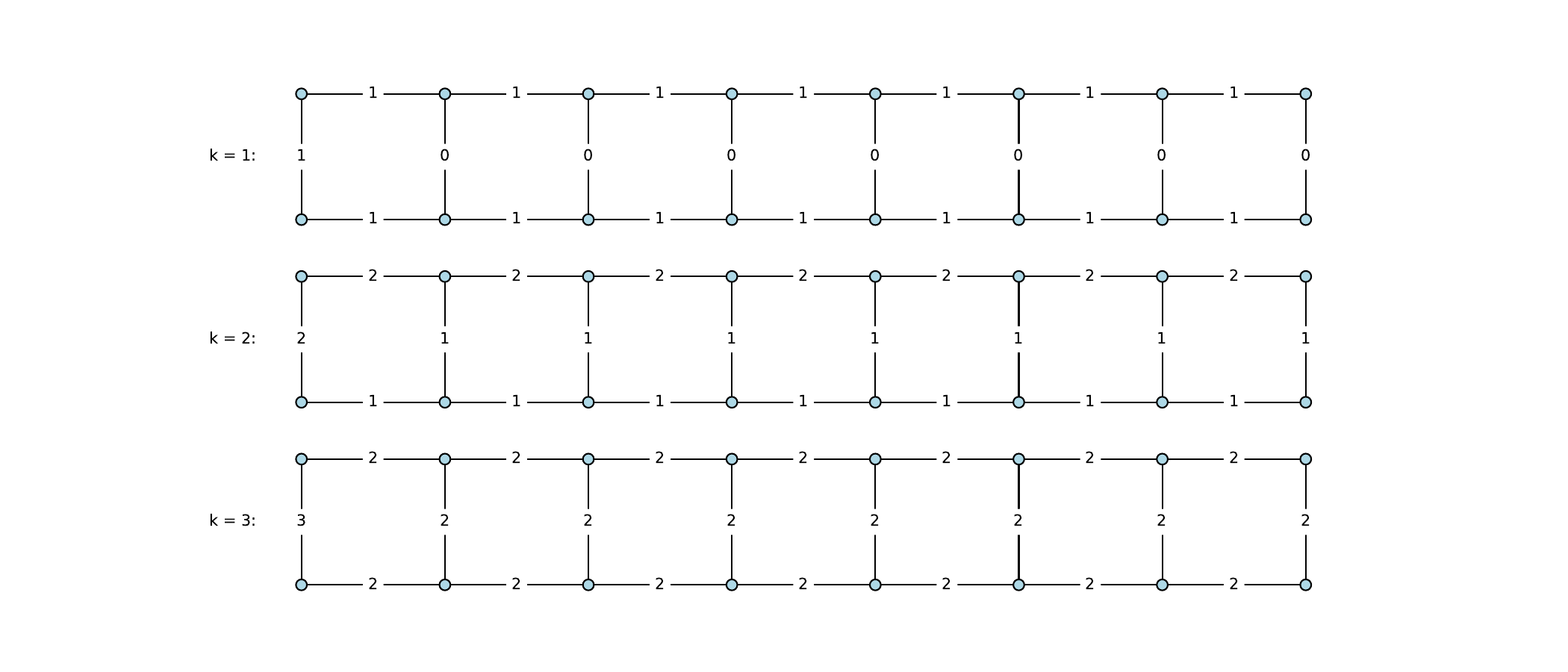}}
    \caption{The first 3 iterations of the tree packing. Edge labels show edge loads.}
    \label{fig_ladder_3iter}
\end{figure}

\begin{figure}[h!]
    \centering
    \centerline{\includegraphics[width=1.4\textwidth]{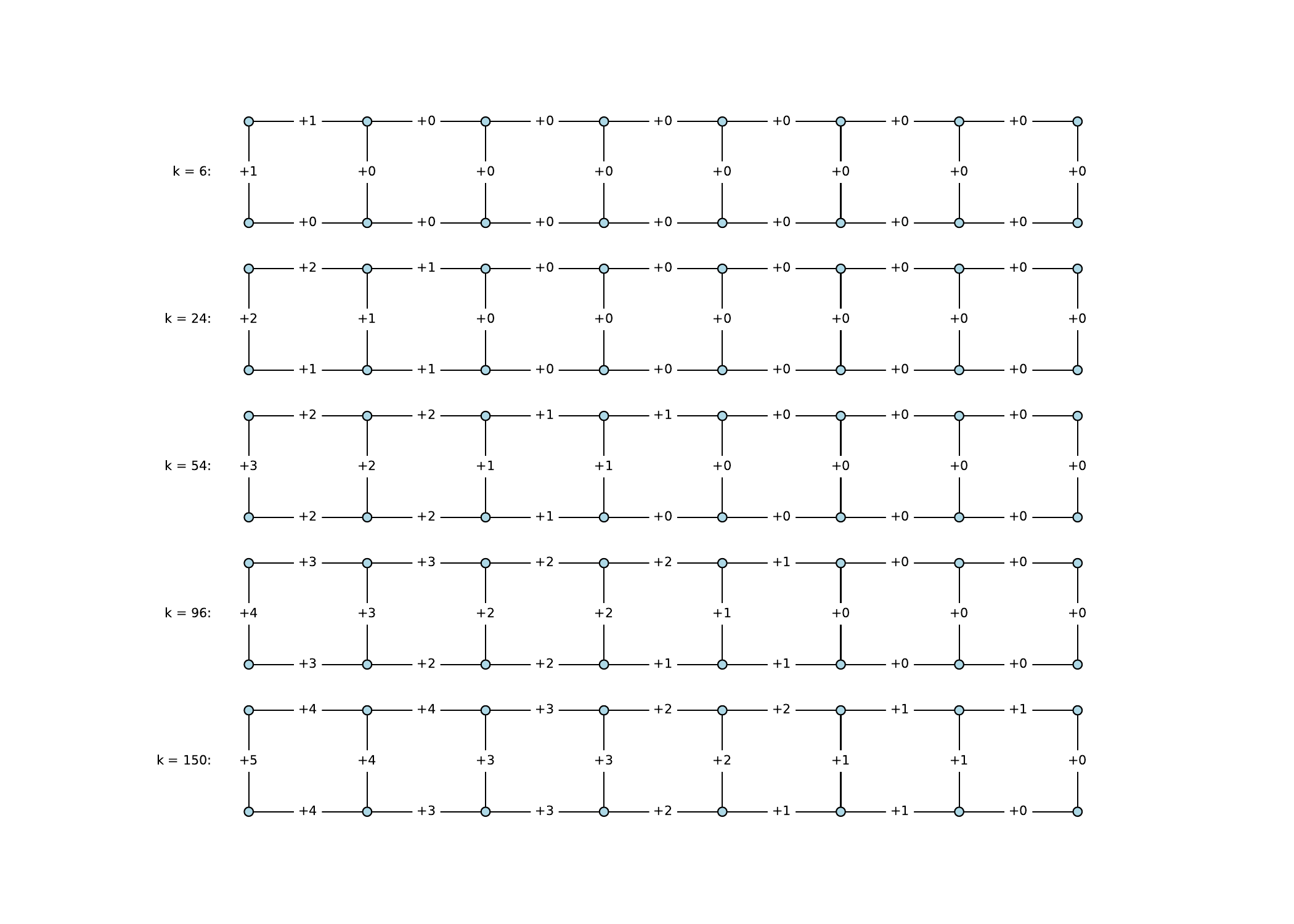}}
    \caption{The iterations with $k = 6 j^2$ for $j = 1 \ldots 5$. The edge labels show the differences between $k x_e^k$ and $2k/3$.}
    \label{fig_ladder_iter_squares}
\end{figure}

As a result, after packing $k = O(j^2)$ trees, $k$ being divisible by 3, we have a triangle-shaped plot of how overpacked the edge depending on the position of the edge in the graph. See Figure \ref{fig_ladder_overpacking} for an example for the graph $G_d$, $d = 100$, and $k = 3000$. The area under this triangle equals the total number of ``extra'' edges. Since an extra edge appears every 3 iterations, the area is $k / 3$. The sides of this triangle are $\Theta(\sqrt{k})$, because the slope is a constant, and the area is $\Theta(k)$. Thus, $\|x^k - x^*\|_\infty = \frac 1k\Theta(\sqrt{k}) + O(1/n) = \Theta\left(\sqrt{\frac 1k}\right)$, when $k < O(n^2)$. 

\begin{figure}[h!]
    \centering
    \centerline{\includegraphics[width=0.7\textwidth]{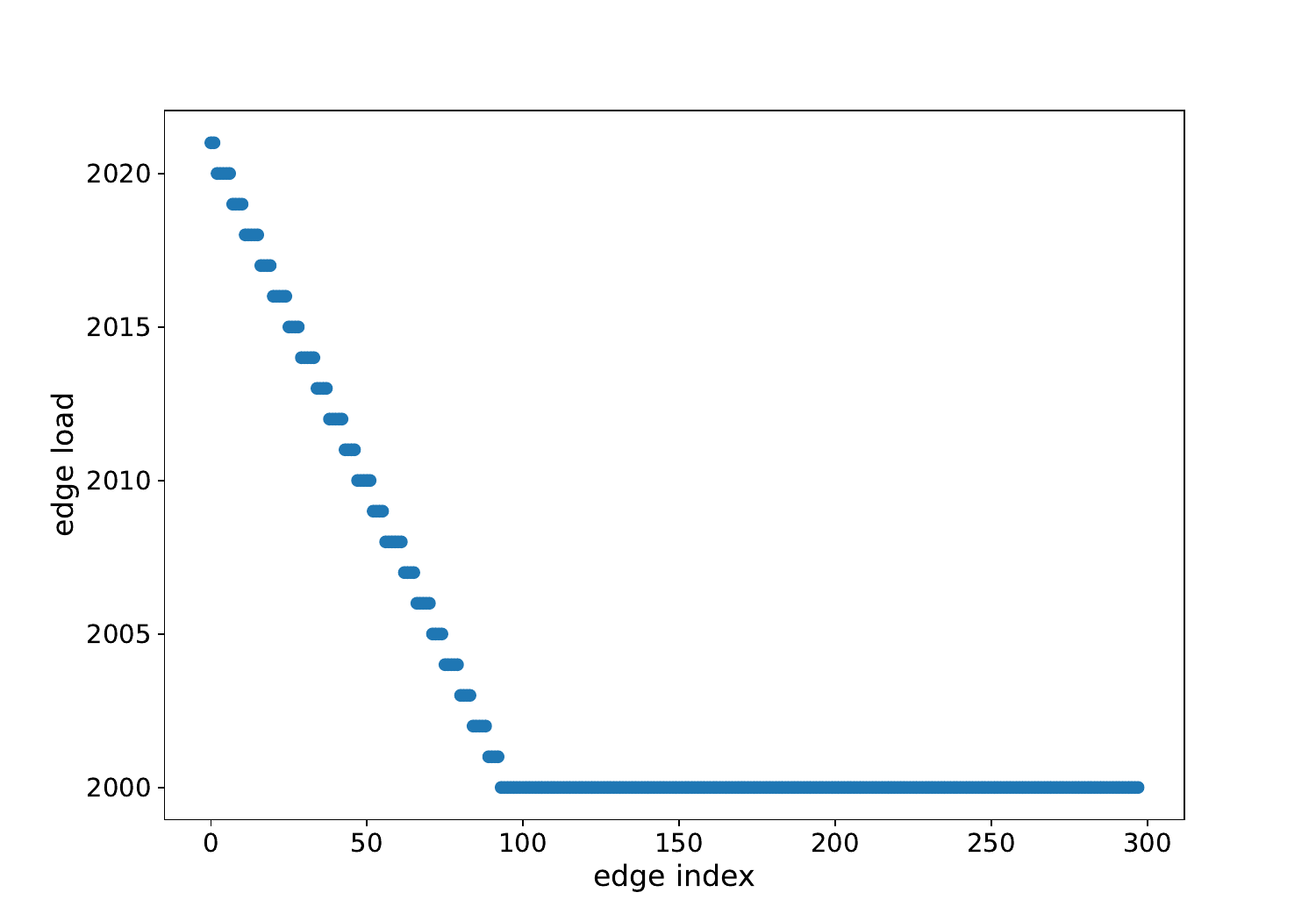}}
    \caption{The edge loads of the edges in the graph $G_d$ for $d = 100$ after packing $3000$ trees.}
    \label{fig_ladder_overpacking}
\end{figure}

We are now ready for the proof. Let $E_i$ be the set of edges covered by exactly $i$ trees.

We will now describe the structure of the edge loads, and prove the description by induction. For the graph $G_d$, $d \geq 8$, and $k$ such that $54 \leq k < O(d^2)$, the subgraphs $E_i$ can be the following:
\begin{enumerate}
    \item empty
    \item \batile, \bbtile, \bctile, \bdtile, \betile -- at the left end of the graph. We will call these ``beginning tiles''.
    \item \matile, \mbtile, \mctile, \mdtile, \metile, \mftile, \mgtile, \mhtile -- not at either end of the graph. We call these ``middle tiles''.
    \item \eaatile, \eabtile, \eactile, \eadtile,
        \ebatile, \ebbtile, \ebctile, \ebdtile, \ebetile,
        \ecatile, \ecbtile, \ecctile, \ecdtile, \ecetile, \ecftile, \ecgtile -- at the end of the graph. We call these ``ending tiles''.
    \item $E_{j-1}$ for $E_{j}$ being the ending tile. If nonempty, this edge set is located at the end of the graph, to the right from where the ending tile begins.
\end{enumerate}

Moreover, the graph is covered by a sequence of tiles, starting from a beginning tile, then containing some number of middle tiles, and ending with an end tile, such that the edge load decreases by 1 in each tile as we go from left to right.

Moreover, the middle tiles \metile, \mftile, \mgtile, \mhtile are never adjacent to each other. This concludes the description of the induction hypothesis.

For example, in Figure \ref{fig_ladder_iter_squares}, for $k = 54$, the state is \batile, \mftile, \matile, \eabtile, and for $k = 96$, the state is \batile, \mctile, \metile, \mdtile, \eaatile. The case $k=54$ also gave us the induction base.

In other words, the edge loads are described by a string, where the first symbol is one of the beginning tiles then we have some number of middle tiles, and the last symbol is one of the ending tiles. Then the edge loads decrease by one per tile. If these tiles do not cover the whole graph, and there are some edges left (these edges are at the right end, adjacent to the ending tile), then the load of these edges is one less than the load of the edges in the ending tile.

Now we need the induction step. Pack one more tree into the graph. First, we show that the beginning tile remains a beginning tile. Suppose, the beginning tile was $E_i$. At the next step, the beginning tile is either $E_i$ or $E_{i+1}$, depending on the beginning tile and the following middle tile. We manually consider all cases, summarized in Table \ref{tab_begin}. In the table, the first column contains the current beginning tile, the first row contains the following middle tile, and the rest of the table gives the new beginning tile at the next step. ``x'' means that these two tiles do not fit, and such a sequence could not happen.

\begin{table}[h]
\centering
\resizebox{0.6\textwidth}{!}{%
\begin{tabular}{|c!{\vrule width 1.5pt}c|c|c|c|c|c|c|c|}
\hline
{\begin{tabular}{c}~\vspace{-3pt}\\~\end{tabular}} & \raisebox{-4pt}\matile & \raisebox{-4pt}\mbtile & \raisebox{-4pt}\mctile & \raisebox{-4pt}\mdtile & \raisebox{-4pt}\metile & \raisebox{-4pt}\mftile & \raisebox{-4pt}\mgtile & \raisebox{-4pt}\mhtile \\
\Xhline{1.5pt}
\rule{0pt}{0.7cm} \batile & x & x & \bdtile & x & x & \betile & x & x \\ \hline
\rule{0pt}{0.7cm} \bbtile & x & \batile & x & \batile & \batile & x & \batile & x \\ \hline
\rule{0pt}{0.7cm} \bctile & \bbtile & x & x & x & x & x & x & \bbtile \\ \hline
\rule{0pt}{0.7cm} \bdtile & x & x & \bbtile & x & x & \bbtile & x & x \\ \hline
\rule{0pt}{0.7cm} \betile & x & \bctile & x & \bctile & \bctile & x & \bctile & x \\ \hline
\end{tabular}
}
\caption{Beginning tiles change rules}
\label{tab_begin}
\end{table}

Next, we show that the middle tiles remain middle tiles. Consider a middle tile $E_i$, and the tile immediately to the left $E_{i+1}$ (could be another middle tile or a beginning tile). At the next step, the new tile $E_{i+1}$ depends only on the current $E_i$ and $E_{i+1}$. We manually consider all cases, summarized in Table \ref{tab_middle}. In the table, the first column contains the current $E_{i+1}$, the first row contains the current $E_i$, and the rest of the table gives the new tile $E_{i+1}$ at the next step. ``x'' means that these two tiles do not fit. One can verify that \{ \matile, \mbtile, \mctile, \mdtile \} is the sink of this change rule, so the other tiles \{\metile, \mftile, \mgtile, \mhtile\} can only live for a constant time after they are born next to the ending tile. In fact, they can only appear either immediately to the left of the ending tile or one tile away from the ending tile. One may also verify that the tiles \{\metile, \mftile, \mgtile, \mhtile\} are never adjacent to each other due to their limited lifetime.

\begin{table}[h]
\centering
\resizebox{0.6\textwidth}{!}{%
\begin{tabular}{|c!{\vrule width 1.5pt}c|c|c|c|c|c|c|c|}
\hline
{\begin{tabular}{c}~\vspace{-3pt}\\~\end{tabular}} & \raisebox{-4pt}\matile & \raisebox{-4pt}\mbtile & \raisebox{-4pt}\mctile & \raisebox{-4pt}\mdtile & \raisebox{-4pt}\metile & \raisebox{-4pt}\mftile & \raisebox{-4pt}\mgtile & \raisebox{-4pt}\mhtile \\
\Xhline{1.5pt}
\rule{0pt}{0.7cm} \matile & x & \mctile & x & \matile & \mhtile & x & \matile & x \\ \hline
\rule{0pt}{0.7cm} \mbtile & x & x & \mbtile & x & x & \metile & x & x \\ \hline
\rule{0pt}{0.7cm} \mctile & x & \mctile & x & \mctile & \mftile & x & \mctile & x \\ \hline
\rule{0pt}{0.7cm} \mdtile & \mdtile & x & x & x & x & x & x & \mgtile \\ \hline
\rule{0pt}{0.7cm} \batile & x & x & \bdtile & x & x & \matile & x & x \\ \hline
\rule{0pt}{0.7cm} \bbtile & x & \mctile & x & \mctile & \mftile & x & \mctile & x \\ \hline
\rule{0pt}{0.7cm} \bctile & \mdtile & x & x & x & x & x & x & \mgtile \\ \hline
\rule{0pt}{0.7cm} \bdtile & x & x & \mbtile & x & x & \metile & x & x \\ \hline
\end{tabular}
}
\caption{Middle tiles change rules}
\label{tab_middle}
\end{table}

Finally, we consider the ending tiles. The next ending tile depends on the current ending tile and the tile immediately to the left. The evolution rules are shown in Table \ref{tab_end}. The first column contains the current tile immediately to the left from the ending tile, the first row contains the current ending tile, and the rest of the table gives the new ending tile, possibly also with the new tile immediately to the left. In the table, ``unreachable'' means that this state of ending tile and the adjacent tile can never be reached. Notice that the evolution of the ending tile is tripartite: if $k \mod 3 = 0$, the ending tile is one of \eaatile, \eabtile, \eactile, \eadtile, if $k \mod 3 = 1$, the ending tile is one of \ebatile, \ebbtile, \ebctile, \ebdtile, \ebetile, and if $k \mod 3 = 2$, the ending tile is one of \ecatile, \ecbtile, \ecctile, \ecdtile, \ecetile, \ecftile, \ecgtile.

\begin{table}[h]
\centering
\resizebox{1.1\textwidth}{!}{%
\begin{tabular}{|c!{\vrule width 2pt}c|c|c|c!{\vrule width 2pt}c|c|c|c|c!{\vrule width 2pt}c|c|c|c|c|c|c|}
\hline
 & \rule{0pt}{0.7cm} \eaatile & \eabtile & \eactile & \eadtile & \ebatile & \ebbtile & \ebctile & \ebdtile & \ebetile & \ecatile & \ecbtile & \ecctile & \ecdtile & \ecetile & \ecftile & \ecgtile \\
\Xhline{2pt}
\rule{0pt}{0.7cm} \matile 
  & x & \ebbtile & x & x & \ecbtile & x & x & \ecctile & x & x & x & x & x & \shortstack{~\\ \matile,\\ \eabtile} & \shortstack{~\\ \mhtile,\\ \eaatile} & x \\ \hline
\rule{0pt}{0.7cm} \mbtile 
  & x & x & \ebdtile & x & x & x & \ecftile & x & x & \shortstack{~\\ \mbtile,\\ \eactile} & x & x & \shortstack{~\\ \metile,\\ \eabtile} & x & x & x \\ \hline
\rule{0pt}{0.7cm} \mctile 
  & x & \ebctile & x & x & \ecatile & x & x & \ecdtile & x & x & x & x & x & \shortstack{~\\ \mctile,\\ \eabtile} & \shortstack{~\\ \mftile,\\ \eaatile} & x \\ \hline
\rule{0pt}{0.7cm} \mdtile 
  & \ebatile & x & x & x & x & \ecetile & x & x & x & x & \shortstack{~\\ \mctile,\\ \eaatile} & \shortstack{~\\ \mgtile,\\ \eadtile} & x & x & x & x \\ \hline
\rule{0pt}{0.7cm} \metile 
  & x & \ebbtile & x & x & \ecbtile & x & x & \ecctile & x & x & x & x & x & \shortstack{~\\ \matile,\\ \eabtile} & \shortstack{~\\ \mhtile,\\ \eaatile} & x \\ \hline
\rule{0pt}{0.7cm} \mftile 
  & \ebatile & x & x & x & x & \ecetile & x & x & x & x & \shortstack{~\\ \mctile,\\ \eaatile} & \shortstack{~\\ \mgtile,\\ \eadtile} & x & x & x & x \\ \hline
\rule{0pt}{0.7cm} \mgtile 
  & x & x & x & \ebdtile & x & x & x & x & \ecftile & x & x & x & x & x & x & \shortstack{~\\ \metile,\\ \eabtile} \\ \hline
\rule{0pt}{0.7cm} \mhtile 
  & \ebetile & x & x & x & x & \ecgtile & x & x & x & x & unreachable & unreachable & x & x & x & x \\
\Xhline{2pt}
\end{tabular}
}
\caption{Ending tiles change rules}
\label{tab_end}
\end{table}

Together, the above three tables show the rules how the string of tiles defining the edge loads at step $k$ is transformed into a sequence of tiles at the next step, thus, completing the induction. Notice that the ending tile gets shorter at some iterations. For this reason, we also need to upper bound $k$ by $O(n^2)$, because at this iteration we will ``run out'' of the ending tile.

Finally, if $k \mod 3 = 0$, we have the ending tile one of \eaatile, \eabtile, \eactile, \eadtile, with the load $2k/3$. The excessive load of the beginning and middle tiles is $\Theta(k)$, and the length of each tile is $O(1)$. This implies that the excessive load of the beginning tile is $\Theta(\sqrt{k})$, and thus, the claim holds.

For higher values of $\lambda$, take graphs $G_d^w$ defined as $G_d$ with every edge duplicated $w$ times. The connectivity of $G_d^w$ is $\lambda = 2w$. Define an ordering of the edges like in $G_d$. Then, if one packs $w j$ trees in $G_d^w$, the packing equals to the packing of $j$ trees in $G_d$ duplicated $w$ times. Then, the difference between the maximal load and the average load is $\Theta(\sqrt{j}) = \Theta(\sqrt{k / \lambda})$. Then, the error in the \emph{relative} loads is $\|x^k - x^*\|_\infty = \Theta(\sqrt{k / \lambda}) / k = \Theta\left( \sqrt{\frac{1}{k \lambda}} \right)$, as claimed.

\section{Bicircular matroids: density and fractional orientations}

Throughout this section, we will be dealing with the bicircular matroid of a graph $G = (V, E)$. The ground set of this matroid is $E$, and the independent sets
are edge sets of all possible pseudoforests, i.e. such subgraphs that every connected component has at most one cycle. A base of the bicircular matroid is 
a maximal pseudoforest $P$.
If $G$ has no acyclic components then $P$
can be characterized as follows: $P$ contains $n=|V|$ edges, and every component of $P$ contains exactly one cycle.
The rank function of 
the bicircular matroid is $r(H) = |V(H)| - a(H)$, where $a(H)$ is the number of acyclic components of $H$, and $V(H)$ is the set of vertices adjacent to some edge in $H$. 

Recall that the density of a graph is defined as $\rho(G)=\max_{\varnothing\ne S\subseteq V}\frac {|E(S)|}{|S|}$,
and the subgraph $G[S]$ induced by a maximizer $S$ of this ratio is a {\em densest subgraph of $G$}.
This is related to the density of the bicircular matroid $M$ as follows.
\begin{observation}
    If $G$ is not a forest then $\rho(G)=\rho(M)$. 
    Otherwise, if $G$ is a forest, then a largest component of $G$ is a densest subgraph.
\end{observation}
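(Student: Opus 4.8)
The plan is to establish the two claims separately, proving the equality $\rho(G)=\rho(M)$ via the two inequalities $\rho(M)\ge\rho(G)$ and $\rho(M)\le\rho(G)$.

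First I would note that $\rho(M)\ge\rho(G)$ holds for every graph with at least one edge. Pick $S$ attaining $\rho(G)=\max_{S}\frac{|E(S)|}{|S|}$; after deleting from $S$ any vertices isolated in $G[S]$ (which only increases the ratio) we have $V(E(S))=S$. Taking $H=E(S)$, the rank formula $r(H)=|V(H)|-a(H)$ gives $r(H)=|S|-a(H)\le |S|$, so $\frac{|H|}{r(H)}=\frac{|E(S)|}{|S|-a(H)}\ge\frac{|E(S)|}{|S|}=\rho(G)$, whence $\rho(M)\ge\rho(G)$.

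For the reverse inequality when $G$ is not a forest, let nonempty $H\subseteq E$ attain $\rho(M)=\max_H\frac{|H|}{r(H)}$. The crux is to show such an $H$ may be chosen with $a(H)=0$, i.e.\ with every component containing a cycle. If $\rho(M)>1$: deleting the edge set $F$ of an acyclic component of a maximizer $H$ yields ratio $\frac{|H|-|F|}{r(H)-|F|}$ (the rank drops by exactly $|F|$), which strictly exceeds $\frac{|H|}{r(H)}$ whenever $\frac{|H|}{r(H)}>1$ and $H\setminus F\neq\varnothing$; and if $H\setminus F=\varnothing$ then $H$ is a tree with ratio $1$, contradicting maximality. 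Hence no maximizer has an acyclic component. If $\rho(M)=1$: since $G$ contains a cycle $C$, the set $H=E(C)$ is already a maximizer with $a(H)=0$. In either case $r(H)=|V(H)|$, and since $H\subseteq E(V(H))$,
\[
\rho(M)=\frac{|H|}{|V(H)|}\le\frac{|E(V(H))|}{|V(H)|}\le\rho(G),
\]
which together with the first inequality gives $\rho(G)=\rho(M)$.

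Finally, for the forest case: every induced subgraph $G[S]$ of a forest is a forest, so $|E(S)|=|S|-c(G[S])$, where $c(\cdot)$ counts connected components (isolated vertices included); thus $\frac{|E(S)|}{|S|}=1-\frac{c(G[S])}{|S|}$. Since $\frac{c(G[S])}{|S|}\ge\frac{1}{t}$ where $t$ is the number of vertices in a largest component of $G$, with equality iff $G[S]$ is connected and as large as possible, the maximum is attained by $S=V(T)$ for a largest component $T$ (a maximal connected induced subgraph of a forest is precisely a component, and $\frac{t-1}{t}$ increases with $t$); hence a largest component is a densest subgraph. I expect the only delicate point to be the reduction to $a(H)=0$ in the non-forest case, since a matroid-density maximizer need not avoid acyclic components when $\rho(M)=1$; this forces the split on whether $\rho(M)>1$, handled above by exhibiting an explicit cycle of $G$ in the borderline case. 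The remaining steps are routine.
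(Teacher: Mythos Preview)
Your proof is correct and rests on the same core idea as the paper's: the bicircular rank formula $r(H)=|V(H)|-a(H)$ together with the observation that the relevant maximizer can be taken with no acyclic components. The paper's proof is a three-line sketch: it starts from a densest-subgraph vertex set $S$, asserts that $G[S]$ has no acyclic components when $G$ contains a cycle, deduces $r(E[S])=|S|$, and then declares it ``straightforward'' that $E[S]$ is also the matroid-density maximizer; the forest claim is not argued at all. You instead prove the two inequalities separately and, for $\rho(M)\le\rho(G)$, work from the \emph{matroid} side, showing a matroid maximizer $H$ can be chosen with $a(H)=0$. This is what forces your case split on $\rho(M)=1$, since on the matroid side an acyclic $H$ is a legitimate maximizer at ratio~$1$; the paper's graph-side phrasing sidesteps that split (once $\rho(G)\ge 1$, a densest $G[S]$ genuinely cannot have an acyclic component, as removing it strictly raises the ratio). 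Both routes are sound; yours is the complete write-up, and you additionally supply the forest-case argument that the paper omits. One tiny imprecision: in the forest case your ``equality iff $G[S]$ is connected and as large as possible'' should read ``iff every component of $G[S]$ has exactly $t$ vertices'', but this does not affect the conclusion that a largest component attains the maximum.
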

\begin{proof}
    Let $S \subseteq V$ be the vertex set of the densest subgraph. If $G$ has at least one cycle, then the densest subgraph has no acyclic components, 
    and therefore, has rank $r(E[S]) = |S|$. It is straightforward to check that $E[S]$ witnesses the maximum of $\frac{|H|}{r(H)}$ over $H \subseteq E$.
\end{proof}
It is easy to design an efficient dynamic algorithm that checks whether $G$ is a forest,
and if yes computes various quantities for it (using, for example, top trees data structure~\cite{top-trees} discussed in Appendix~\ref{appendix_dynamic_pseudoforest}).
Below we will often assume  for simplicity that $G$ is not a forest, and write just $\rho=\rho(G)=\rho(M)$.




\subsection{Dynamic approximate density}

With the above discussion, we can easily get a weaker version of Theorem \ref{th_dynamic_density}.
\begin{lemma}[Warm-up for dynamic density]
    There exists a data structure that maintains an $(1 + \varepsilon)$-approximation to the densest subgraph density $\rho$ for a fully dynamic graph with worst-case update time $O(\varepsilon^{-4} \rho_{\max}^2 \log^3 m)$.
\end{lemma}
\begin{proof}
    Maintain a greedy packing of $k = \Theta \left( \frac{\rho_{\max} \log m}{\varepsilon^2} \right)$ maximal pseudoforests, 
    so that $0 \leq \frac{1}{\min_e x^k_e} - \rho \leq \varepsilon \rho$ by Theorem \ref{th:truncated-greedy}.
    By a standard argument~\cite[Lemma 1]{ThorupKarger:00}, a single graph update can trigger $O(k^2)$ updates in the pseudoforests. Each update in a pseudoforest can be done in $O(\log n)$ worst-case time by Theorem \ref{th:dynamic-pseudoforests} (Appendix \ref{appendix_dynamic_pseudoforest}). Thus, the update is done in time $O(k^2 \log n) = O(\varepsilon^{-4} \rho_{\max}^2 \log^3 m)$.

    For an estimator of $\rho$, output $\frac{1}{\min_e x^k_e}$, unless $G$ is a forest. If $G$ is a forest, output $\frac{N-1}{N}$, where $N$ is the size of the largest component of $G$.
\end{proof}

In the rest of this subsection, we improve the complexity to $O((\rho_{\min}\varepsilon^{-2}+\varepsilon^{-4}) \rho_{\max} \log^3 m)$, shaving off the factor $\rho_{\max}$ from the update time in case $\varepsilon=O(1/\rho_{\max}^2)$.
\begin{theorem}
    There exists a deterministic data structure that maintains an $(1 + \varepsilon)$-approximation to the densest subgraph density $\rho$ for a fully dynamic graph with worst-case update time $O((\rho_{\max}\varepsilon^{-2}+\varepsilon^{-4}) \rho_{\max} \log^3 m)$.
\end{theorem}
To prove this result, we will use an approach which is similar in spirit to that in~\cite{christiansen}
(the latter improved the recourse for greedy packing of spanning trees).
The main idea comes from the following observation: each edge $e$ is present in $kx_e^k$ pseudoforests,
so adding / removing $e$ may trigger at most $O(kx_e^k \cdot k)$ updates in the pseudoforests.
If $x_e^k \sim 1/\rho$ then this improves the bound on the number of updates by a factor of $1/\rho$.

Recall that edges $e$ of the densest subgraph have  $x^*_e=1/\rho$, and for other edges the value $x^*_e$ can be very large, e.g.\ $1$.
We will use greedy packing with pruning that removes edges with large value of $x^*_e$.

Assume that we are given an upper bound $\rho_{\max}$ on the density; the estimation algorithm
described below will be correct for a current graph if $\rho\le \rho_{\max}$.
Define $\rho_i=2^{i-1}$ for $i=1,\ldots,r$ where $r=\lceil\log_2 \rho_{\max} \rceil+2$.
We will independently run $r$ greedy packing algorithms with pruning, where for the $i$-th run
we use interval $[\rho_i,\rho_{i+2}]$ and maintain $\Theta \left( \frac{\rho_i \log m}{\varepsilon^2} \right)$ maximal pseudoforests.

With this rule, at least one of the runs will give a correct estimate of $\rho$.
To determine which one, we will also run standard greedy packing without pruning with
 $\Theta \left( \rho \log m \right)$ pseudoforests (using accuracy $\varepsilon=\Theta(1)$) 
 so that it finds an interval $[\hat \rho,2\hat\rho]$ containing $\rho$. We can then find $i$
 with $[\hat \rho,2\hat\rho]\subseteq [\rho_i,\rho_{i+2}]$, and output the estimate of the $i$-th run.

Let us analyze the complexity of updating the $i$-th run during graph updates.
We claim that each edge $e$ can appear in at most $O((2^i+\frac 1 {\varepsilon^2}) \log m)$ pseudoforests.
Indeed, $O(2^i \log m)$ is the maximum number of pseudoforests constructed at iterations with $k< 24\rho_{i+2}\log m$.
By construction, if $e$ appears in a pseudoforest at iteration $k\ge 24\rho_{i+2}\log m$
then it satisfies $x^e_k \le 2/\rho_{i}$.
Thus, the total number of such appearances is bounded by $\Theta \left( \frac{\rho_i \log m}{\varepsilon^2} \right) \cdot (2/\rho_{i})=O\left(\frac{\log m}{\varepsilon^2}\right)$.
The claim follows.

This leads to at most $O((2^i+\frac 1 {\varepsilon^2}) \log m)\cdot \Theta \left( \frac{\rho_i \log m}{\varepsilon^2} \right)$
updates in pseudoforests, where each update is done in $O(\log n)$ worst-case time.
The total update time for the $i$-run is thus $O((2^{2i}\varepsilon^{-2}+2^i\varepsilon^{-4})\log^3 m)$.
Summing over $i=1,\ldots,r$ gives the time $O((\rho_{\max}\varepsilon^{-2}+\varepsilon^{-4}) \rho_{\max} \log^3 m)$.
Finally, the update time of the run without pruning is $O(\rho_{\max}^2 \log^3 m)$,
since its accuracy is $\varepsilon=\Theta(1)$.

\subsection{Dynamic approximate fractional out-orientation}

For a given pseudoforest $P$, there is a way to orient its edges such that the out-degree of any vertex is at most 1: in each component, orient the cycle in some direction, and orient other edges towards the cycle. In acyclic components, choose an arbitrary root vertex and orient the edges towards the root. We will call this orientation of the edges of $P$ just ``the orientation of $P$''. Given a pseudoforest packing that covers each edge at least once, we can define a fractional orientation of the graph.
\begin{definition}[Orientation induced by a pseudoforest packing]
    Let $P_1, \ldots, P_k$ be maximal pseudoforests such that every edge of the graph is present in at least one $P_i$. The collection $P_1, \ldots, P_k$ \emph{induces a fractional orientation} defined as follows. For an edge $(u,v) \in E$, set 
    \begin{equation}
        d_{u \to v} = \frac{\#\{P_i \; | \; (u,v) \in P_i, \text{ oriented } u \to v \text{ in the orientation of } P_i\}}{\#\{P_i \; | \; (u,v) \in P_i\}}.
    \end{equation}
\end{definition}
In other words, the fractional orientation of an edge $e$ is the average orientation of $e$ in those pseudoforests that contain $e$. Notice that this indeed defines a fractional orientation, since $d_{u \to v} + d_{v \to u} = 1$, and we never get a zero in the denominator since every edge is covered by some $P_i$. Also, there is freedom in defining the orientation for cycles in the pseudoforests. However, it does not affect the out-degrees of any vertices.

Denote the out-degree of a vertex $v$ in a fractional orientation as $\outdeg(u) = \sum\limits_{(u, v) \in E} d_{u \to v}$. We are going to see that the fractional orientation induced by the greedy pseudoforest packing is an approximate minimum-out-degree orientation.

\begin{theorem}
    \label{th_outdeg_guarantee}
    A fractional orientation induced by the greedy minimum weight maximal pseudoforest packing of $k = \Theta \left( \frac{\rho_{\max} \log m}{\varepsilon^2} \right)$ pseudoforests guarantees $\outdeg(u) \leq (1 + \varepsilon) \rho$ for any vertex $u$, if $G$ is not a forest.
\end{theorem}
\begin{proof}
    Consider a packing of $\Theta \left( \frac{\rho_{\max} \log m}{\varepsilon^2} \right)$ pseudoforests, 
    so that $0 \leq \frac{1}{\min_e x^k_e} - \rho \leq \varepsilon \rho$ by Theorem~\ref{th:truncated-greedy}. For any vertex $u$, consider its out-degree.
    \begin{equation}
    \begin{split}
        \outdeg(u) = \sum_{(u, v) \in E} d_{u \to v} = 
        \sum_{(u, v) \in E} \frac{\#\{P_i \; | \; (u,v) \in P_i, \text{ oriented } u \to v \text{ in the orientation of } P_i\}}{\#\{P_i \; | \; (u,v) \in P_i\}} \leq \\
        \leq \left( \max_{(u,v) \in E} \frac{1}{\#\{P_i \; | \; (u,v) \in P_i\}} \right) \sum_{(u, v) \in E} \#\{P_i \; | \; (u,v) \in P_i, \text{ oriented } u \to v \text{ in the orientation of } P_i\}.
    \end{split}    
    \end{equation}
    Since $\#\{P_i \; | \; (u,v) \in P_i\} = k x^k_{(u,v)}$, we have $\max\limits_{(u,v) \in E} \frac{1}{\#\{P_i \; | \; (u,v) \in P_i\}} \leq \frac{1}{k \min_e x^k_e} \leq \frac{(1 + \varepsilon) \rho}{k}$. Also, since the out-degree of $u$ in each $P_i$ is at most one, we have $\sum\limits_{(u, v) \in E} \#\{P_i \; | \; (u,v) \in P_i, \text{ oriented } u \to v \text{ in the orientation of } P_i\} \leq k$. Putting it together into the previous equation yields
    \begin{equation}
        \outdeg(u) \leq \frac{(1 + \varepsilon) \rho}{k} \cdot k = (1 + \varepsilon) \rho,
    \end{equation}
    as claimed.
\end{proof}

For our dynamic fractional orientation, we are going to use the orientation induced by a greedy pseudoforest packing. 
For a single dynamic pseudoforest $P$ and an edge $e \in P$, one can query an orientation of $e$ in $P$ in $O(\log n)$ time. 
Indeed, if $e = (u, v)$, one can delete $e$ from the pseudoforest and query the connectivity of $u$ and $v$ to the special edge 
in the cycle that the component of the pseudoforest remembers. 
If only one of the two vertices stays connected to the special edge on the cycle, orient $e$ towards this vertex. 
If both $u$ and $v$ stay connected to the edge in the cycle, then $e$ was in the cycle. 
In this case, choose an orientation for the special edge in the cycle, and orient $e$ consistently.

Now we can prove Theorem \ref{th_dynamic_orientation}.
\begin{proof}[Proof of Theorem \ref{th_dynamic_orientation}]
    Maintain a packing of $k = \Theta \left( \frac{\rho_{\max} \log m}{\varepsilon^2} \right)$ pseudoforests. 
    The naive update time is $O(k^2 \log n) = O(\varepsilon^{-4} \rho_{\max}^2 \log^3 m)$. The desired orientation is the orientation induced by the packing.

    To answer the query for an edge $e$, go through the pseudoforests containing $e$ and determine orientations of $e$ in these pseudoforests in time $O(\log n)$ per pseudoforest. Output the average such orientation, as in the definition of the induced orientation. The query takes $O(k \log n) = O(\varepsilon^{-2} \rho_{\max} \log^2 m)$ time. 
\end{proof}

\section{Other observations}
\label{sec_other_properties}

\subsection{Theoretical observations}

\begin{lemma}
    After $k$ iterations of the greedy base packing, we have
    \begin{equation}
        \label{eq_2norm_squared_conv}
        \|x^k\|_2^2 - \|x^*\|_2^2 \leq 2 r\frac{\log(k+1)}{k},
    \end{equation}
    where $r:=rk(E)$ is the rank of the matroid $M$.
    Therefore, 
    \begin{equation}
    \label{eq_2norm_conv}
        \|x^k\|_2 - \|x^*\|_2 \leq \sqrt{m} \cdot \frac{\log(k+1)}{k},
    \end{equation}
    where $m := |E|$.
\end{lemma}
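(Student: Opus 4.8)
The plan is to treat the greedy packing as the Frank--Wolfe averaging process for $f(x)=\|x\|_2^2$ over $P_B$ and extract an explicit recursion for the optimality gap $g_k := \|x^k\|_2^2 - \|x^*\|_2^2$, which is nonnegative since the earlier lemma identifies $x^*=\argmin_{x\in P_B}\|x\|_2$. Writing the averaging step as $k x^k = (k-1) x^{k-1} + v^k$, where $v^k=\xi_{B_k}$ is a minimum-weight base for the weights $x^{k-1}$ (and $x^0=\mathbf 0$), I would square both sides to get
\[
  k^2\|x^k\|_2^2 = (k-1)^2\|x^{k-1}\|_2^2 + 2(k-1)\langle x^{k-1}, v^k\rangle + r ,
\]
using $\|v^k\|_2^2 = |B_k| = r$. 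Substituting $\|x^j\|_2^2 = g_j + \|x^*\|_2^2$ converts this into a relation between $k^2 g_k$ and $(k-1)^2 g_{k-1}$ together with the cross term and a constant $r-\|x^*\|_2^2$ coming from $\|v^k\|_2^2$ and the mismatch $(k-1)^2-k^2$ in front of $\|x^*\|_2^2$.

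The core of the argument is to control the cross term by two elementary facts. First, since $v^k$ minimizes the linear function $\langle x^{k-1},\cdot\rangle$ over the bases, hence over all of $P_B\ni x^*$, we have $\langle x^{k-1}, v^k\rangle \le \langle x^{k-1}, x^*\rangle$. Second, expanding $\|x^{k-1}-x^*\|_2^2\ge 0$ gives $2\langle x^{k-1}, x^*\rangle \le \|x^{k-1}\|_2^2 + \|x^*\|_2^2$. Chaining these, $2(k-1)\langle x^{k-1},v^k\rangle \le (k-1)(g_{k-1}+2\|x^*\|_2^2)$, and plugging back in, the $\|x^*\|_2^2$ terms cancel and I expect to arrive at
\[
  k^2 g_k \le k(k-1)\, g_{k-1} + \bigl(r - \|x^*\|_2^2\bigr) .
\]
Dividing by $k$ and setting $b_k := k g_k$ yields $b_k \le b_{k-1} + (r-\|x^*\|_2^2)/k$ with $b_0=0$, hence $b_k \le (r-\|x^*\|_2^2)\,H_k$ where $H_k=\sum_{j\le k}1/j \le 1+\ln k$. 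Since $0\le\|x^*\|_2^2\le\|x^*\|_1\cdot\|x^*\|_\infty\le r$ (using $x^*\in P_B\subseteq[0,1]^E$ and $x^*(E)=r$), this gives $g_k \le r(1+\ln k)/k$; the elementary estimate $1+\ln k \le 2\log(k+1)$ (equivalent to $ek\le(k+1)^2$, which holds for all $k\ge 1$) then upgrades it to the claimed bound~\eqref{eq_2norm_squared_conv}.

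For the non-squared version I would factor the difference of squares,
\[
  \|x^k\|_2 - \|x^*\|_2 = \frac{\|x^k\|_2^2-\|x^*\|_2^2}{\|x^k\|_2+\|x^*\|_2},
\]
which is nonnegative, and bound the denominator below by $2\|x^*\|_2 \ge 2r/\sqrt m$ using Cauchy--Schwarz with $\|x^*\|_1 = x^*(E) = r$. Combining with~\eqref{eq_2norm_squared_conv} gives $\|x^k\|_2 - \|x^*\|_2 \le \frac{2r\log(k+1)/k}{2r/\sqrt m} = \sqrt m\,\frac{\log(k+1)}{k}$, which is~\eqref{eq_2norm_conv}.

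There is no genuine conceptual obstacle; the only delicate point is the bookkeeping in the second step — combining the two upper bounds on $\langle x^{k-1},v^k\rangle$ in exactly the way that annihilates the $\|x^*\|_2^2$ contributions and leaves a clean $1/k$ error term, rather than something summing to a worse logarithm or to a constant times $k$. A minor cosmetic issue is the logarithm base and the constant $2$: the inequality $1+\ln k\le 2\log(k+1)$ is what forces the factor $2$, and one could instead record the slightly sharper $r(1+\ln k)/k$ and skip that step.
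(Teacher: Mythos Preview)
Your proof is correct and follows essentially the same route as the paper. Both derive the recursion $g_{k+1}\le \tfrac{k}{k+1}g_k + \tfrac{r-\|x^*\|_2^2}{(k+1)^2}$ (the paper then drops the $-\|x^*\|_2^2$); the only cosmetic differences are that the paper bounds the cross term via Cauchy--Schwarz followed by concavity of the square root (which collapses to the same AM--GM inequality you get from $\|x^{k-1}-x^*\|_2^2\ge 0$), and that the paper verifies the final bound by direct induction whereas you telescope $b_k=kg_k$ through harmonic numbers---arguably the cleaner bookkeeping, and it yields the sharper intermediate constant $r-\|x^*\|_2^2$ that you already noted. The derivation of~\eqref{eq_2norm_conv} is identical to the paper's.
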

\begin{proof}
    Define a function $f(x) = \|x\|_2^2$, and let $f^k := f(x^k)$. Also, let $f^* = f(x^*)$.

    We will bound $f^{k+1}$ in terms of $f^k$. Let $v^k$ be the indicator vector of the the base packed at the step $k$. Recall that $x^{k+1} = \frac{k}{k+1} x^k + \frac{1}{k+1} v^k$.
    \begin{equation}
        f^{k+1} = \left\|\frac{k}{k+1} x^k + \frac{1}{k+1} v^k\right\|_2^2 = 
        \frac{k^2}{(k+1)^2} \|x^k\|_2^2 + \frac{1}{(k+1)^2} \|v_k\|_2^2 + \frac{2k}{(k+1)^2} \left< x^k, v^k \right>.
    \end{equation}
    We will use the fact that $\|v_k\|_2^2 = r := rk(E)$ (in case of the MST packing, $\|v^k\|_2^2 = n - 1$). Also, since $v^k$ minimizes the value of $\left< x^k, v^k \right>$, we have $\left< x^k, v^k \right> \leq \left< x^k, x^* \right> \leq \|x^k\|_2 \|x^*\|_2 = \sqrt{f^k f^*}$. Therefore,
    \begin{equation}
        f^{k+1} \leq 
        \frac{k^2}{(k+1)^2} f^k + \frac{r}{(k+1)^2} + \frac{2k}{(k+1)^2} \sqrt{f^k f^*}.
    \end{equation}
    Define $\Delta f^k := f^k - f^*$. Subtracting $f^*$ from both sides of the previous inequality, we get
    \begin{equation}
    \label{eq_next_delta}
    \begin{split}
        &f^{k+1} - f^* = \Delta f^{k+1} \leq \\
        &\leq \frac{k^2}{(k+1)^2} f^k + \frac{r}{(k+1)^2} + \frac{2k}{(k+1)^2} \sqrt{f^k f^*} - f^* = \\
        &= \frac{k^2}{(k+1)^2} f^k - \frac{k^2 f^*}{(k+1)^2} + \frac{r}{(k+1)^2} - \frac{f^*}{(k+1)^2} + \frac{2k}{(k+1)^2} \sqrt{f^k f^*} - \frac{2k f^*}{(k+1)^2} = \\
        &= \frac{k^2}{(k+1)^2} \Delta f^k + \frac{r - f^*}{(k+1)^2} + \frac{2k}{(k+1)^2} \left(\sqrt{(f^* + \Delta f^k) f^*} - f^* \right) \leq  \\
        &\leq \frac{k^2}{(k+1)^2} \Delta f^k + \frac{r - f^*}{(k+1)^2} + \frac{k}{(k+1)^2} \Delta f^k  \\
        &\leq \frac{k}{k+1} \Delta f^k + \frac{r}{(k+1)^2}.
    \end{split}
    \end{equation}

    Now we will prove \eqref{eq_2norm_squared_conv} by induction.
    Base case: for $k = 1$, $x^1$ is an indicator vector of some base, so $\|x^1\|_2^2 - \|x^*\|_2^2 = r - f^* \leq 2r \frac{\log(2)}{1}$.
    For the step case, using \eqref{eq_next_delta}, we get 
    \begin{equation}
        \Delta f^{k+1} \leq 2 r\frac{\log(k+1)}{k+1} + \frac{r}{(k+1)^2}.
    \end{equation}
    On the other hand, 
    \begin{equation}
        2 r\frac{\log(k+2)}{k+1} \geq \frac{2r}{k+1} \left(\log(k+1) + \frac{1}{k+1} - \frac{1}{(k+1)^2}\right) \geq 2 r\frac{\log(k+1)}{k+1} + \frac{r}{(k+1)^2} \geq \Delta f^{k+1},
    \end{equation}
    concluding the proof of \eqref{eq_2norm_squared_conv}.

    For the proof of \eqref{eq_2norm_conv}, notice that $\|x^*\|_2^2 \geq \frac{r^2}{m}$, since $\sum_e x^*_e = r$. Therefore, $\|x^*\|_2 \geq \frac{r}{\sqrt{m}}$. Also, $\|x^k\|_2^2 - \|x^*\|_2^2 = (\|x^k\|_2 - \|x^*\|_2) (\|x^k\|_2 + \|x^*\|_2) \geq 2 \|x^*\|_2 (\|x^k\|_2 - \|x^*\|_2) \geq \frac{2 r}{\sqrt{m}} (\|x^k\|_2 - \|x^*\|_2)$. Then we get
    \begin{equation}
        \|x^k\|_2 - \|x^*\|_2 \leq \frac{\sqrt{m}}{2r} \cdot 2 r\frac{\log(k+1)}{k} = \sqrt{m} \cdot \frac{\log(k+1)}{k}.
    \end{equation}
\end{proof}

\begin{corollary}
    \label{cor_quanrud_like}
    After $k$ iterations of the greedy base packing, we have
    \begin{equation}
        \|x^k - x^*\|_2 \leq \sqrt{2 r\frac{\log(k+1)}{k}},
    \end{equation}
    where $r:=rk(E)$ is the rank of the matroid $M$.
\end{corollary}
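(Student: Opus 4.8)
The plan is to combine the just-proved bound $\|x^k\|_2^2 - \|x^*\|_2^2 \le 2r\frac{\log(k+1)}{k}$ with the variational characterization of $x^*$ as the Euclidean projection of the origin onto the base polytope $P_B$ (established in the first lemma of Section~2.2). The key consequence of that characterization is the standard obtuse-angle inequality for projections: since $x^* = \argmin_{x \in P_B}\|x\|_2$ and $P_B$ is convex, for every $x \in P_B$ we have $\langle x - x^*,\, 0 - x^*\rangle \le 0$, i.e. $\langle x^*, x\rangle \ge \|x^*\|_2^2$. Applying this with $x = x^k \in P_B$ gives $\langle x^k, x^*\rangle \ge \|x^*\|_2^2$.

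The main computation is then a one-line expansion:
\begin{equation*}
\|x^k - x^*\|_2^2 = \|x^k\|_2^2 - 2\langle x^k, x^*\rangle + \|x^*\|_2^2 \le \|x^k\|_2^2 - 2\|x^*\|_2^2 + \|x^*\|_2^2 = \|x^k\|_2^2 - \|x^*\|_2^2.
\end{equation*}
Now substitute the bound $\|x^k\|_2^2 - \|x^*\|_2^2 \le 2r\frac{\log(k+1)}{k}$ from~\eqref{eq_2norm_squared_conv} and take square roots to conclude $\|x^k - x^*\|_2 \le \sqrt{2r\frac{\log(k+1)}{k}}$.

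There is essentially no genuine obstacle here — the corollary is a direct consequence of the preceding lemma plus a geometric fact about projections. The only point requiring a sentence of justification is that $x^k \in P_B$ (which is clear, since each $x^k$ is a convex combination of indicator vectors of bases) and that the projection inequality holds in the stated form; both are immediate. If one wished to avoid invoking the projection characterization explicitly, an equivalent route is to use that $x^*$ minimizes $\|x\|_2^2$ over the convex set $P_B$ together with first-order optimality, but the projection phrasing is cleanest and is exactly what the earlier lemma already provides.
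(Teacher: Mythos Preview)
Your proof is correct and essentially identical to the paper's: both use the first-order optimality of $x^*$ (equivalently, the projection inequality) to get $\langle x^k, x^*\rangle \ge \|x^*\|_2^2$, expand $\|x^k - x^*\|_2^2$ to bound it by $\|x^k\|_2^2 - \|x^*\|_2^2$, and then invoke~\eqref{eq_2norm_squared_conv}.
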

\begin{proof}
    First, let us show that the inequality $\left< x, x^* \right> \geq \left< x^*, x^* \right> = \|x^*\|_2^2$ holds for any $x \in P_B$. It is going to follow from the convexity of the base polytope $P_B$ and the fact that $x^*$ minimizes $\|x\|_2^2$ over $P_B$.

    Fix any point $x \in P_B$. By convexity of $P_B$, each point of the form $t x^* + (1-t) x$ is in $P_B$ for $t \in [0, 1]$. Consider the expression $\|t x^* + (1-t) x\|_2^2$ as a function of $t$: $\|t x^* + (1-t) x\|_2^2 = t^2 \|x^*\|_2^2 + (1-t)^2 \|x\|_2^2 + 2 t(1-t) \left< x^*, x \right>$. It is a quadratic function with minimum over $t \in [0, 1]$ attained at $0$, since $x^*$ minimizes the second norm over $P_B$. Therefore, the derivative at zero must be non-negative: $-2 \|x\|_2^2 + 2 \left< x^*, x \right> \geq 0$. This implies $\left< x^*, x \right> \geq \|x\|_2^2$, as claimed.
    
    Using $x^k$ as $x$ in the established inequality, we have $\left< x^k, x^* \right> \geq \left< x^*, x^* \right> = \|x^*\|_2^2$. With this and the above lemma, we can write
    \begin{equation}
        \|x^k - x^*\|_2^2 = \|x^k\|_2^2 + \|x^*\|_2^2 - 2\left< x^k, x^* \right> \leq \|x^k\|_2^2 - \|x^*\|_2^2 \leq 2 r\frac{\log(k+1)}{k}.
    \end{equation}
    Therefore, $\|x^k - x^*\|_2 \leq \sqrt{2 r\frac{\log(k+1)}{k}}$, as claimed.
\end{proof}

In other words, to guarantee $\|x^k - x^*\|_2 \leq \varepsilon$, one needs $k = \Theta\left( \frac{r \log(r / \varepsilon)}{\varepsilon^2} \right)$ iterations.

\begin{lemma}
    For a fixed matroid $M$ and a fixed integer $p \geq 2$, after $k$ iterations of the greedy base packing, we have
    \begin{equation}
        \label{eq_pnorm_squared_conv}
        \|x^k\|_p^p - \|x^*\|_p^p \leq \frac{p^2}{2} \|x^*\|_{p-1}^{p-1} \frac{\log k}{k} (1 + o(1)),
    \end{equation}
    where $o(1)$ is $o(1)$ as $k \to \infty$. Consequently,
    \begin{equation}
        \label{eq_pnorm_conv}
        \|x^k\|_p - \|x^*\|_p \leq \frac{p}{2} \cdot m^{1/p} \frac{\log k}{k}  (1 + o(1)).
    \end{equation}

\end{lemma}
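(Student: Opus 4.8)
The plan is to run the same Frank--Wolfe argument as in the $\ell_2$ case, now with objective $f(x)=\|x\|_p^p=\sum_e x_e^p$. Since $\nabla f(x)_e=p\,x_e^{p-1}$ is a strictly increasing function of $x_e\ge 0$, the same reasoning as in the Observation above shows that each $B_k$ is a minimum-weight base with respect to $\nabla f(x^{k-1})$, so greedy base packing is exactly the Frank--Wolfe process for $f$ over $P_B$ with step $\eta_k=\tfrac1{k+1}$; by Theorem~\ref{th_char_x_projection} its limit $x^*$ is $\argmin_{x\in P_B}f(x)$, hence $\Delta^k:=\|x^k\|_p^p-\|x^*\|_p^p=f(x^k)-f(x^*)\ge 0$ and $\Delta^k\to 0$.

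Next I would set up a one-step recursion. Writing $\varphi(t)=t^p$ and applying the one-dimensional Taylor expansion coordinatewise,
\[
\varphi(x^{k+1}_e)-\varphi(x^k_e)-\varphi'(x^k_e)(x^{k+1}_e-x^k_e)=\tfrac12 p(p-1)(\xi^k_e)^{\,p-2}(x^{k+1}_e-x^k_e)^2
\]
for some $\xi^k_e$ between $x^k_e$ and $x^{k+1}_e$. Summing over $e$, substituting $x^{k+1}_e-x^k_e=\tfrac1{k+1}(v^k_e-x^k_e)$, and using $\langle\nabla f(x^k),v^k-x^k\rangle\le\langle\nabla f(x^k),x^*-x^k\rangle\le f(x^*)-f(x^k)$ (Frank--Wolfe optimality of $v^k$ followed by convexity of $f$), one gets
\[
\Delta^{k+1}\le\frac{k}{k+1}\Delta^k+\frac{S_k}{2(k+1)^2},\qquad S_k:=p(p-1)\sum_e(\xi^k_e)^{\,p-2}(v^k_e-x^k_e)^2 .
\]
Multiplying by $k+1$ and telescoping yields $K\Delta^K\le\Delta^1+\tfrac12\sum_{k=1}^{K-1}\tfrac{S_k}{k+1}$.

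It remains to estimate $\sum_k\tfrac{S_k}{k+1}$ asymptotically. Fix a coordinate $e$. Since $x^k\to x^*$ (Theorem~\ref{th_char_x_projection}), $\xi^k_e\to x^*_e$, so it suffices to analyze $\sum_{k\le K}\tfrac1{k+1}(v^k_e-x^k_e)^2$. Expanding $(v^k_e-x^k_e)^2=v^k_e-2v^k_e x^k_e+(x^k_e)^2$ (using $v^k_e\in\{0,1\}$), the $(x^k_e)^2$ term is dispatched by the Ces\`aro fact that $a_k\to a$ implies $\sum_{k\le K}a_k/k=a\log K+o(\log K)$, the $v^k_e x^k_e$ term reduces to the $v^k_e$ term since $x^k_e\to x^*_e$, and the $v^k_e$ term is evaluated by summation by parts together with the fact that the partial sums $\sum_{i\le k}v^i_e$ equal $(x^*_e+o(1))k$ (which itself follows from $x^k\to x^*$); the outcome is $\sum_{k\le K}\tfrac1{k+1}(v^k_e-x^k_e)^2=x^*_e(1-x^*_e)\log K+o(\log K)$. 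Summing over the finitely many $e$ and using $(\xi^k_e)^{p-2}\to(x^*_e)^{p-2}$ gives $\sum_{k=1}^{K-1}\tfrac{S_k}{k+1}=p(p-1)\big(\|x^*\|_{p-1}^{p-1}-\|x^*\|_p^p\big)\log K+o(\log K)$, whence $\Delta^K\le\tfrac{p(p-1)}{2}\big(\|x^*\|_{p-1}^{p-1}-\|x^*\|_p^p\big)\tfrac{\log K}{K}(1+o(1))\le\tfrac{p^2}{2}\|x^*\|_{p-1}^{p-1}\tfrac{\log K}{K}(1+o(1))$, which is~\eqref{eq_pnorm_squared_conv}. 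For~\eqref{eq_pnorm_conv} I would use that convexity of $t\mapsto t^p$ gives $a^p-b^p\ge p\,b^{p-1}(a-b)$ for $a\ge b\ge 0$; applying this with $a=\|x^k\|_p\ge b=\|x^*\|_p$ yields $\|x^k\|_p-\|x^*\|_p\le\Delta^k/(p\|x^*\|_p^{p-1})$, and the norm-comparison inequality $\|x^*\|_{p-1}^{p-1}\le m^{1/p}\|x^*\|_p^{p-1}$ (H\"older) turns the resulting factor $\|x^*\|_{p-1}^{p-1}/\|x^*\|_p^{p-1}$ into $m^{1/p}$.

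The main obstacle is the exact second-moment asymptotic $\sum_{k\le K}\tfrac1{k+1}(v^k_e-x^k_e)^2\sim x^*_e(1-x^*_e)\log K$: the easy bounds $(v^k_e-x^k_e)^2\le|v^k_e-x^k_e|$ and $(v^k_e-x^k_e)^2\le 1$ only yield a constant factor $2$ (respectively a $\|x^*\|_{p-2}^{p-2}$, which is too large) in place of $x^*_e(1-x^*_e)$, and that slack breaks the final bound~\eqref{eq_pnorm_conv}, so one genuinely has to carry out the summation-by-parts computation and lean on the input $x^k\to x^*$.
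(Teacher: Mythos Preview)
Your proof is correct and follows a genuinely different route from the paper's. The paper expands $\|x^{k+1}\|_p^p$ via the binomial theorem (using that $p$ is an integer), obtaining a recurrence that couples $\|x^k\|_i^i$ for all $i\le p$; it then bounds each inner product $\langle (x^k)^i,v^k\rangle$ by $\|x^k\|_{i+1}^i\|x^*\|_{i+1}$ via H\"older and extracts the leading asymptotic from the dominant $i=p-2$ term of the resulting sum. You instead combine the second-order Taylor remainder with the standard Frank--Wolfe gap inequality $\langle\nabla f(x^k),v^k-x^k\rangle\le -\Delta^k$, which gives the single clean recurrence $(k{+}1)\Delta^{k+1}\le k\Delta^k+\tfrac{S_k}{2(k+1)}$ and pushes all the work into evaluating the Ces\`aro-weighted ``variance'' $\sum_{k\le K}\tfrac{(v^k_e-x^k_e)^2}{k+1}$ via summation by parts. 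Your computation of this sum is correct (the summation-by-parts step for $\sum_k v^k_e/(k{+}1)$ indeed yields $x^*_e\log K+o(\log K)$ since $\sum_{i\le k}v^i_e=kx^k_e=(x^*_e+o(1))k$), and it delivers the slightly sharper intermediate constant $\tfrac{p(p-1)}{2}(\|x^*\|_{p-1}^{p-1}-\|x^*\|_p^p)$ before relaxation; as a bonus your argument does not use that $p$ is an integer. Both proofs lean on the qualitative input $x^k\to x^*$ in essential ways---the paper to declare $\Delta f_{i+1}^k=o(1)$ for lower $i$, you to evaluate the Ces\`aro sums and to control $(\xi^k_e)^{p-2}$. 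The derivation of~\eqref{eq_pnorm_conv} from~\eqref{eq_pnorm_squared_conv} is essentially the same in both arguments.
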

\begin{proof}
    Define functions $f_p(x) = \|x\|_p^p$, and let $f_p^k := f_p(x^k)$. Also, let $f_p^* = f_p(x^*)$.
    We will bound $f_p^{k+1}$ in terms of $f_q^k$ for $q \in \{1, \ldots, p\}$. Let $v^k$ be the indicator vector of the base packed at the step $k$. Recall that $x^{k+1} = \frac{k}{k+1} x^k + \frac{1}{k+1} v^k$.
    \begin{equation}
    \begin{split}
        f_p^{k+1} &= \left\|\frac{k}{k+1} x^k + \frac{1}{k+1} v^k\right\|_p^p = 
        \sum_e \left( \frac{k}{k+1} x^k_e + \frac{1}{k+1} v^k_e \right)^p  \\
        &= \sum_e \left( \frac{k}{k+1} x^k_e \right)^p + \sum_{i=0}^{p-1} \sum_e \frac{k^i}{(k+1)^p} \binom{p}{i} (x^k_e)^i (v^k_e)^{p-i}.
    \end{split}
    \end{equation}
    Notice that since $v^k$ is a $0/1$ vector, we have $(v^k_e)^{p-i} = v^k_e$ whenever $p-i > 0$. Using this, we continue
    \begin{equation}
    \begin{split}
        f_p^{k+1} = 
        \left( \frac{k}{k+1} \right)^p f_p^k + \frac{1}{(k+1)^p} \sum_{i=0}^{p-1} k^i \binom{p}{i} \sum_e (x^k_e)^i v^k_e.
    \end{split}
    \end{equation}
    The sum $\sum_e (x^k_e)^i v^k_e$ is just the dot product $\left< (x^k)^i, v^k \right>$. Since $v^k$ minimizes the value of this dot product, we have $\left< (x^k)^i, v^k \right> \leq \left< (x^k)^i, x^* \right>$. Using H{\"o}lder's inequality, we also have $\left< (x^k)^i, x^* \right> \leq \|x^k\|_{i+1}^i \|x^*\|_{i+1}$. Then,
    \begin{equation}
    \begin{split}
        f_p^{k+1} \leq
        \left( \frac{k}{k+1} \right)^p f_p^k + \frac{1}{(k+1)^p} \sum_{i=0}^{p-1} k^i \binom{p}{i} \|x^k\|_{i+1}^i \|x^*\|_{i+1}.
    \end{split}
    \end{equation}
    Now, let us pass to $\Delta f_p^k := f_p^k - f_p^*$.
    \begin{equation}
    \begin{split}
        \Delta f_p^{k+1} &\leq
        \left( \frac{k}{k+1} \right)^p f_p^k + \frac{1}{(k+1)^p} \sum_{i=0}^{p-1} k^i \binom{p}{i} \|x^k\|_{i+1}^i \|x^*\|_{i+1} - f_p^*  \\
        &= \left( \frac{k}{k+1} \right)^p \Delta f_p^k + \frac{1}{(k+1)^p} \sum_{i=0}^{p-1} k^i \binom{p}{i} (\|x^k\|_{i+1}^i \|x^*\|_{i+1} - f_p^*).
    \end{split}
    \end{equation}

    We will bound $\|x^k\|_{i+1}^i \|x^*\|_{i+1}$ from above. $\|x^k\|_{i+1}^i \|x^*\|_{i+1} = (f_{i+1}^* + \Delta f_{i+1}^k)^{\frac{i}{i+1}} (f_{i+1}^*)^\frac{1}{i+1} = f_{i+1}^* \left( 1 + \frac{\Delta f_{i+1}^k}{f_{i+1}^*} \right)^\frac{i}{i+1} \leq f_{i+1}^* + \frac{i}{i+1} \Delta f_{i+1}^k$. Here we used the fact that the function $t^\frac{i}{i+1}$ is concave as a function of $t$. Thus,
    \begin{equation}
    \begin{split}
        \Delta f_p^{k+1} &\leq
         \left( \frac{k}{k+1} \right)^p \Delta f_p^k + \frac{1}{(k+1)^p} \sum_{i=0}^{p-1} k^i \binom{p}{i} \left(f_{i+1}^* - f_p^* + \frac{i}{i+1} \Delta f_{i+1}^k \right)  \\
         &\leq \frac{k^{p-1} (k + p - 1)}{(k+1)^p} \Delta f_p^k + \frac{1}{(k+1)^p} \sum_{i=0}^{p-2} k^i \binom{p}{i} \left(f_{i+1}^* + \frac{i}{i+1} \Delta f_{i+1}^k \right).
    \end{split}
    \end{equation}

    From now on, we will pass to studying the asymptotics of the bounds. We will imagine $p$ fixed, and $k \to \infty$. The remainder of the proof is a technical study of the above equation, treated as a recurrence for $\Delta f_p^k$. First, if $k$ tends to infinity, then $f_{i+1}^* + \frac{i}{i+1} \Delta f_{i+1}^k = f_{i+1}^* (1+o(1))$. Second, the sum that we have is dominated by the last term (the term that contains $k^{p-2}$). So, for large $k$, we have the second term in right-hand side approximated as follows:
    \begin{equation}
        \frac{1}{(k+1)^p} \sum_{i=0}^{p-2} k^i \binom{p}{i} \left(f_{i+1}^* + \frac{i}{i+1} \Delta f_{i+1}^k \right) = \frac{p(p-1)}{2} \cdot \frac{k^{p-2}}{(k+1)^p} f_{p-1}^* (1 + o(1)),
    \end{equation}
    where $o(1)$ stands for $o(1)$ as $k \to \infty$.

    Then, for large enough $k$, we have
    \begin{equation}
    \begin{split}
        \Delta f_p^{k+1} \leq
         \frac{k^{p-1} (k + p - 1)}{(k+1)^p} \Delta f_p^k + \frac{p^2}{2} \cdot \frac{k^{p-2}}{(k+1)^p} f_{p-1}^*,
    \end{split}
    \end{equation}
    for $k \geq K$, where $K$ is some constant.

    Denote $C_p = \frac{p^2}{2} \cdot f_{p-1}^*$. Using this notation, we have
    \begin{equation}
    \begin{split}
        \Delta f_p^{k+1} \leq
         \frac{k^{p-1} (k + p - 1)}{(k+1)^p} \Delta f_p^k + C_p \frac{k^{p-2}}{(k+1)^p}
    \end{split}
    \end{equation}
    for $k \geq K$.

    Introduce a variable $y^k = k \Delta f_p^k$. Now we have
    \begin{equation}
    \begin{split}
        y^{k+1} \leq \left( \frac{k}{k+1} \right)^{p-2}
         \frac{(k + p - 1)}{k+1} y^k + C_p \frac{k^{p-2}}{(k+1)^{p-1}}.
    \end{split}
    \end{equation}

    Denote $\alpha_k := \left( \frac{k}{k+1} \right)^{p-2} \frac{(k + p - 1)}{k+1}$, $\beta_k := C_p \frac{k^{p-2}}{(k+1)^{p-1}}$. Now the previous inequality becomes 
    \begin{equation}
        y^{k+1} \leq \alpha_k y^k + \beta_k
    \end{equation}
    for $k \geq K$.

    We will show that $0 \leq \alpha_k \leq 1$. Nonnegativity is obvious. For proving $\alpha_k \leq 1$, we will have two steps. First, for fixed $p$, we have $\alpha_k$ non-decreasing as a function of $k$, due to $\frac{d}{dk} \alpha_k = (p - 1)(p - 2) \frac{k^{p - 3}}{(k+1)^p} \geq 0$. Second, $\alpha_k$ is asymptotically at most 1 for large $k$, because we can have a Taylor expansion: $\alpha_k = 1 - \frac{p^2 - 3p + 2}{2} \frac{1}{k^2} +O\left( \frac{1}{k^3} \right)$. Together these two facts imply $\alpha_k \leq 1$. Then, at a cost of making our inequality a bit weaker, it becomes simpler: for $k \geq K$,
    \begin{equation}
        y^{k+1} \leq y^k + \beta_k.
    \end{equation}
    Summing up these bounds, we get
    \begin{equation}
        y^k \leq y^K + \sum_{i = K}^{k-1} \beta_i.
    \end{equation}  
    Let us bound $\beta_k$: $\beta_k = C_p \frac{k^{p-2}}{(k+1)^{p-1}} = \frac{C_p}{k+1} \left( \frac{k}{k+1} \right)^{p-2} \leq \frac{C_p}{k + 1}$.
    Now we have a bound
    \begin{equation}
    \begin{split}
        y^k &\leq y^K + \sum_{i = K}^{k-1} \beta_i \leq y^K + \sum_{i = K}^{k-1} \frac{C_p}{i+1} = y^K + C_p \log k (1 + o(1)) = C_p \log k (1 + o(1)) \\
        &= \frac{p^2}{2} \cdot f_{p-1}^* \log k (1+o(1)) = \frac{p^2}{2} \cdot \|x^*\|_{p-1}^{p-1} \log k (1+o(1)).
    \end{split}
    \end{equation}
    This concludes the first claim of the lemma.

    To show \eqref{eq_pnorm_conv}, notice that $\|x^k\|_p^p = (\|x^*\|_p^p + p (\|x^k\|_p - \|x^*\|_p) \|x^*\|_p^{p-1}) (1+o(1))$. Then,
    \begin{equation}
        \|x^k\|_p - \|x^*\|_p = \frac{\|x^k\|_p^p - \|x^*\|_p^p}{p \|x^*\|_p^{p-1}} (1+o(1)) \leq p \left(\frac{\|x^*\|_{p-1}}{\|x^*\|_{p}}\right)^{p-1} \frac{\log k}{k} (1+o(1)).
    \end{equation}
    Finally, one can bound $\left(\frac{\|x^*\|_{p-1}}{\|x^*\|_{p}}\right)^{p-1} \leq \frac{m}{m^{1 - 1/p}} = m^{1/p}$. Plugging it in the previous equation yields \eqref{eq_pnorm_conv}.
\end{proof}

For both above lemmas, we used inequalities of the form $\left< x^k, v^k \right> \leq \left< x^k, x^* \right>$ and $\left< (x^k)^i, v^k \right> \leq \left< (x^k)^i, x^* \right>$, which we had due to the fact that $v^k$ minimized functions $\left< x^k, v \right>$ and $\left< (x^k)^i, v \right>$ over the base polytope $P_B$. We would like to point out that in case $x^*$ is in the relative interior of $P_B$, and (after passing to the affine subspace spanned by $P_B$) there is a ball of radius $R$ around $x^*$ that is contained in $P_B$, then we can use stronger inequalities instead: 
\begin{equation}
    \left< x^k, v^k \right> \leq \left< x^k, x^* - R \frac{x^k}{\|x^k\|_2} \right> = \left< x^k, x^* \right> - R \|x^k\|_2,
\end{equation}
and similarly,
\begin{equation}
    \left< (x^k)^i, v^k \right> \leq \left< (x^k)^i, x^* \right> - R \|(x^k)^i\|_2.
\end{equation}
This would lead to stronger convergence bounds of $\|x^k\|_2^2 - \|x^*\|_2^2 = O\left( \frac{1}{k^2} \right)$ and $\|x^k\|_p^p - \|x^*\|_p^p = O\left( \frac{1}{k^2} \right)$.

Fix a graph $G$. The corresponding graphic matroid $\mathcal{G}$ has the collection of independent sets that are acyclic edge sets, i.e. forests. The $k$-fold union of $\mathcal{G}$ with itself, $\mathcal{G}^k$, is also a matroid. Its independent sets are the edge sets partitionable into $k$ forests. If $G$ contains $k$ edge-disjoint spanning trees, then the bases of $\mathcal{G}^k$ are precisely edge sets that are unions of $k$ spanning trees. Next, we relate the limit of the base packing in $\mathcal{G}$ and $\mathcal{G}^k$.

Let $P_{kST}$ be the convex hull of the indicator vectors of all possible edge-disjoint packings of $k$ spanning trees in the graph $G$. If $k$ edge-disjoint spanning trees do not exist, let $P_{kST}$ be empty.
\begin{lemma}
    Suppose, $P_{kST}$ is not empty. Let $\widetilde{x}^* = \arg \min\limits_{x \in P_{kST}} \|x\|_2^2$. Then $\widetilde{x}^* = k x^*$.
\end{lemma}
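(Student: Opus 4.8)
The plan is to recognize $P_{kST}$ as the base polytope of the $k$-fold matroid union $\mathcal{G}^k$, show directly that the scaled vector $kx^*$ lies inside it, and then finish by a scaling argument: since $x\mapsto kx$ turns the unique minimizer $x^*$ of $\|x\|_2^2$ over $P_B$ into the unique minimizer $kx^*$ of $\|x\|_2^2$ over $kP_B$, and $P_{kST}\subseteq kP_B$ contains $kx^*$, strict convexity forces $\widetilde x^*=kx^*$.

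First I would set up the polytope description. By hypothesis $G$ has $k$ edge-disjoint spanning trees (so in particular $G$ is connected and $r(E)=n-1$), hence, as noted above, the bases of $\mathcal{G}^k$ are exactly the unions of $k$ edge-disjoint spanning trees, i.e.\ $P_{kST}$ is the base polytope of $\mathcal{G}^k$. By the standard matroid-union rank formula this gives
\[
P_{kST}=\{x\in\mathbb{R}^E_{\ge 0}\ :\ x(E)=k(n-1)\ \text{ and }\ x(H)\le r_k(H)\ \ \forall H\subseteq E\},\qquad r_k(H)=\min_{F\subseteq H}\bigl(|H\setminus F|+k\,r(F)\bigr),
\]
where $r$ is the rank function of $\mathcal{G}$. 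Two easy observations: because $P_{kST}\neq\varnothing$ we must have $r_k(E)=k(n-1)$; and $P_{kST}\subseteq kP_B$, since every vertex $\xi_{T_1}+\dots+\xi_{T_k}$ of $P_{kST}$ satisfies $\tfrac1k(\xi_{T_1}+\dots+\xi_{T_k})\in P_B$ by convexity of $P_B$.

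The crux is the coordinate bound $x^*_e\le\tfrac1k$ for every $e\in E$. I would get this from Theorem~\ref{th_char_x_projection}: the function $\varphi(t)=(\max\{t-\tfrac1k,\,0\})^2$ is convex, so $x^*$ minimizes $\sum_e\varphi(x_e)$ over $P_B$. Picking any $y\in P_{kST}$ (nonempty by hypothesis), we have $y\in[0,1]^E$ and $\tfrac1k y\in P_B$ (as $P_{kST}\subseteq kP_B$), hence $\sum_e\varphi(\tfrac1k y_e)=0$; therefore $\sum_e\varphi(x^*_e)\le 0$, which forces $x^*_e\le\tfrac1k$ for all $e$. (One can alternatively read this off Theorem~\ref{th_char_x_general}, whose final step yields $\max_e x^*_e=\tfrac{r(E)-r(F_0)}{|E\setminus F_0|}$ for the set $F_0$ of previously-contracted elements, combined with $r_k(E)=k(n-1)$, i.e.\ $|E\setminus F_0|+k\,r(F_0)\ge k(n-1)$.) With this bound, I check $kx^*\in P_{kST}$: nonnegativity and $kx^*(E)=k(n-1)=r_k(E)$ are immediate from $x^*\in P_B$, and for any $H\subseteq E$ and any $F\subseteq H$,
\[
kx^*(H)=kx^*(F)+kx^*(H\setminus F)\le k\,r(F)+|H\setminus F|,
\]
using $x^*(F)\le r(F)$ (since $x^*\in P_B$) and $x^*_e\le\tfrac1k$; minimizing the right-hand side over $F$ gives $kx^*(H)\le r_k(H)$. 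Hence $kx^*\in P_{kST}\subseteq kP_B$, and since $kx^*$ is the unique minimizer of $\|x\|_2^2$ over $kP_B$ (because $\|kx\|_2^2=k^2\|x\|_2^2$ and $x^*$ is the unique minimizer over $P_B$), it is a fortiori the unique minimizer over the subset $P_{kST}$; that is, $\widetilde x^*=kx^*$.

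The main obstacle is the coordinate bound $x^*_e\le 1/k$; the rest is routine bookkeeping with the matroid-union rank function. It is precisely this bound that uses the hypothesis $P_{kST}\neq\varnothing$: that hypothesis is equivalent to $r_k(E)=k(n-1)$, i.e.\ to the Nash--Williams/Tutte inequality $|E\setminus F|\ge k\bigl(r(E)-r(F)\bigr)$ for all $F$, which is exactly what forces the most balanced fractional base $x^*$ (most balanced by Theorem~\ref{th_char_x_projection}) to fit under the cap $1/k$ so that $kx^*$ becomes feasible for $P_{kST}$.
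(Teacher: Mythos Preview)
Your proof is correct and follows the same overall strategy as the paper: show $kx^*\in P_{kST}\subseteq kP_B$ and conclude by uniqueness of the norm minimizer over $kP_B$. The one substantive difference is in how you establish the coordinate bound $x^*_e\le 1/k$. The paper invokes (without proof there) that $\max_e x^*_e$ equals the fractional tree-packing value $\phi=\max_{\mathcal{P}}\frac{|\mathcal{P}|-1}{|E(\mathcal{P})|}$ and then applies Nash--Williams to get $\phi\le 1/k$ from the existence of $k$ edge-disjoint spanning trees. You instead derive the bound internally from Theorem~\ref{th_char_x_projection}, choosing the convex function $\varphi(t)=(\max\{t-1/k,0\})^2$ and noting that any $y\in P_{kST}$ gives a witness $y/k\in P_B$ with $\sum_e\varphi(y_e/k)=0$. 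Your route is more self-contained to the paper's machinery and avoids naming the external identity $\max_e x^*_e=\phi$; the paper's route is a couple of lines shorter but leans on that identity and on Nash--Williams. The remainder of your argument (the matroid-union rank description of $P_{kST}$ and the check $kx^*(H)\le r_k(H)$) is just an explicit unpacking of the paper's one-line claim $P_{kST}=kP_{ST}\cap\{x\le 1\}$.
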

\begin{proof}
    We will show that $k x^* \in P_{kST}$. Notice that
    \begin{equation}
        P_{kST} = k P_{ST} \cap \{x \leq 1\}.
    \end{equation}
    It is enough to show that $k x^*_e \leq 1$ for any edge $e$. Recall that $\max\limits_{e \in E} x^*_e = \phi$ is the optimal packing value of $G$, $\phi = \max\limits_{\mathcal{P}: \text{ partition of } V} \frac{|\mathcal{P}| - 1}{|E(\mathcal{P})|}$. Since we assume that $P_{kST}$ is not empty, there exist $k$ edge-disjoint spanning trees in $G$. By the classical Nash-Williams theorem, $k$ edge-disjoint spanning trees in $G$ exist if and only if for any partition of the vertex set $\mathcal{P}$, $|E(\mathcal{P})| \geq |\mathcal{P}| - 1$. Then, $\phi \leq \frac{1}{k}$. Then, $kx^*_e \leq k \phi \leq 1$. So, $k x^* \leq 1$, and $k x^* \in P_{kST}$. Then, $\|\widetilde{x}^* / k\|_2 \leq \|x^*\|_2$, and $k x^*$ minimizes the second norm over $P_{kST}$. Then $\widetilde{x}^* = k x^*$, as claimed.
\end{proof}

\begin{corollary}
    Suppose, $G$ contains $k$ edge-disjoint spanning trees. Let $\widetilde{x}^*$ be the limit of the edge loads in the greedy packing of bases of $\mathcal{G}^k$. Then $\widetilde{x}^* = k x^*$, where $x^*$ is the usual vector of ideal edge loads.
\end{corollary}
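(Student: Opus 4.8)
The plan is to combine the two results immediately preceding this corollary. The opening lemma of this section (``there exists a limit $x^* = \lim_k x^k$, equal to the projection of zero onto the base polytope'') applies verbatim to \emph{any} matroid, in particular to $\mathcal{G}^k$ in place of $\mathcal{G}$. Applied to $\mathcal{G}^k$, it tells us that the greedy base packing of $\mathcal{G}^k$ converges and that its limit $\widetilde{x}^*$ is the point of the base polytope $P_B(\mathcal{G}^k)$ minimizing the Euclidean norm. So the whole corollary reduces to (i) identifying $P_B(\mathcal{G}^k)$ with $P_{kST}$, and then (ii) invoking the preceding lemma, which states $\arg\min_{x \in P_{kST}} \|x\|_2^2 = k x^*$.

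For step (i), I would argue that under the hypothesis that $G$ contains $k$ edge-disjoint spanning trees, the bases of $\mathcal{G}^k$ are exactly the edge sets that are unions of $k$ edge-disjoint spanning trees. Indeed, the independent sets of $\mathcal{G}^k$ are the edge sets partitionable into $k$ forests, and a base is a maximal such set. Since $k$ edge-disjoint spanning trees exist, the rank of $\mathcal{G}^k$ is $k(n-1)$, so every base has exactly $k(n-1)$ edges and decomposes into $k$ forests; each forest has at most $n-1$ edges, and since they sum to $k(n-1)$, each has exactly $n-1$ edges and hence is a spanning tree. Conversely, any union of $k$ edge-disjoint spanning trees is independent of size $k(n-1)$, hence a base. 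Therefore $P_B(\mathcal{G}^k)$, being the convex hull of the indicator vectors of bases of $\mathcal{G}^k$, coincides by definition with $P_{kST}$.

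Step (ii) is then just a chain of equalities: $\widetilde{x}^* = \arg\min_{x \in P_B(\mathcal{G}^k)} \|x\|_2 = \arg\min_{x \in P_{kST}} \|x\|_2 = k x^*$, where the first equality is the limit-characterization lemma applied to $\mathcal{G}^k$, the second is the identification just established, and the third is the preceding lemma. This completes the argument.

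I do not expect a genuine obstacle: all the substance lives in the two cited lemmas. The single point needing a careful sentence is $P_B(\mathcal{G}^k) = P_{kST}$ — namely that the hypothesis forces every base of the union matroid to be an honest disjoint union of $k$ \emph{spanning} trees rather than merely $k$ forests. This is precisely where the edge-disjoint-spanning-trees assumption enters, and it mirrors the Nash--Williams computation already used inside the proof of the preceding lemma (there it was used to show $k\phi \le 1$; here it is used to pin down the rank of $\mathcal{G}^k$).
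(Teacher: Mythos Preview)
Your proposal is correct and follows precisely the intended route: the paper states the corollary without proof, and its implicit argument is exactly the chain you wrote---apply the limit-characterization lemma to $\mathcal{G}^k$, identify $P_B(\mathcal{G}^k)$ with $P_{kST}$ (the paper asserts this identification in the paragraph preceding the lemma), and invoke the preceding lemma. Your justification of $P_B(\mathcal{G}^k)=P_{kST}$ via the rank count is in fact more detailed than what the paper provides.
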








For a set of vertices $S \subseteq V$, let $d(S) := E[S] / \binom{|S|}{2}$. This is yet another notion of density, not to be confused with the density we were dealing with earlier. Also, let $d := d(V)$.

A graph $G$ is called $\varepsilon$-\emph{quasirandom} if for any subset of vertices $S \subseteq V$ such that $|S| \geq \varepsilon n$, we have $|d(S) - d| \leq \varepsilon$.

\begin{lemma}
    If $G$ is an $\varepsilon$-quasirandom graph, and $\varepsilon < d$. Then almost all edges have the same value of $x_e^*$, in the sense that the fraction of these edges is at least $1 - \frac{3 \varepsilon}{d} + O(n^{-1})$.
\end{lemma}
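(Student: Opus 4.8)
The plan is to apply the peeling characterization of $x^*$ from Theorem~\ref{th_char_x_general} to the graphic matroid $\mathcal{G}$ of $G$. Its first step picks an inclusion-maximal edge set $H_1$ maximizing $|H|/r(H)$, assigns every $e\in H_1$ the common value $x^*_e=r(H_1)/|H_1|$, and (by the strict monotonicity in that theorem) assigns every other edge a strictly larger value. So it suffices to show that $H_1$ contains all but a $3\varepsilon/d+O(n^{-1})$ fraction of the edges. By \eqref{eq:intro_xmin_rho:graphic}, $|H_1|/r(H_1)=\rho(\mathcal{G})=\alpha(G)=\max_{|S|>1}|E[S]|/(|S|-1)$, and plugging in $S=V$ gives the one inequality that drives everything: $\alpha(G)\ge |E|/(n-1)=dn/2$, using $|E|=d\binom{n}{2}$.

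First I would show that $H_1$ is vertex-induced and connected, i.e.\ $H_1=E[S_1]$ with $S_1:=V(H_1)$ and $G[S_1]$ connected. A priori $H_1$ is a disjoint union of connected pieces; none can be ``non-induced'', since adding a chord inside a connected component increases $|H_1|$ without changing $r(H_1)$, contradicting maximality of the ratio. Hence each piece is $G[C_i]$ for a vertex set $C_i$, and since $\alpha(G)=|H_1|/r(H_1)=\frac{\sum_i|E[C_i]|}{\sum_i(|C_i|-1)}$ is the mediant of the individual ratios $|E[C_i]|/(|C_i|-1)\le\alpha(G)$, each piece satisfies $|E[C_i]|/(|C_i|-1)=\alpha(G)$. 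Now a dichotomy: if $|C_i|<\varepsilon n$ then $\alpha(G)=|E[C_i]|/(|C_i|-1)\le\binom{|C_i|}{2}/(|C_i|-1)=|C_i|/2<\varepsilon n/2<dn/2$, contradicting $\alpha(G)\ge dn/2$; hence $|C_i|\ge\varepsilon n$, so $\varepsilon$-quasirandomness applies and gives $|E[C_i]|\le(d+\varepsilon)\binom{|C_i|}{2}$, whence $\alpha(G)\le(d+\varepsilon)|C_i|/2$ and $|C_i|\ge\frac{dn}{d+\varepsilon}\ge(1-\varepsilon/d)n$. Since the $C_i$ are pairwise disjoint and (we may assume $\varepsilon/d<1/2$, else the target bound is vacuous) each has more than $n/2$ vertices, there is exactly one piece.

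With $H_1=E[S_1]$, $G[S_1]$ connected, $s_1:=|S_1|$ and $r(H_1)=s_1-1$, the same dichotomy applied to $S_1$ gives $s_1\ge\frac{dn}{d+\varepsilon}\ge(1-\varepsilon/d)n$. Therefore
\[
\frac{|H_1|}{|E|}=\frac{\alpha(G)\,(s_1-1)}{d\binom{n}{2}}\ge\frac{(dn/2)(s_1-1)}{dn(n-1)/2}=\frac{s_1-1}{n-1}\ge\frac{\tfrac{d}{d+\varepsilon}\,n-1}{n-1}\ge 1-\frac{\varepsilon}{d}-\frac{1}{n-1},
\]
which is even stronger than the claimed $1-3\varepsilon/d+O(n^{-1})$. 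The value shared by all these edges is $x^*_e=1/\alpha(G)\in[\,2/((d+\varepsilon)n),\,2/(dn)\,]$.

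The routine parts are the two invocations of quasirandomness and the arithmetic in the last display. The only step requiring real thought is the structural claim that the first peel $H_1$ is a single induced subgraph rather than a union of several pieces each of density $\alpha(G)$ — and this is exactly where the hypothesis $\varepsilon<d$ does essential work, since for a general (non-quasirandom) graph the densest part can indeed split, and then the conclusion would fail.
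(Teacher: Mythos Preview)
Your proof is correct and follows the same overall strategy as the paper: show that the vertex set realizing the fractional arboricity has size at least $\varepsilon n$ (via the dichotomy you run, which is also the paper's step), then apply the quasirandomness upper bound $d(S)\le d+\varepsilon$ to conclude that this set has at least $(1-\varepsilon/d)n$ vertices, and finally lower-bound the number of edges inside it.

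Two differences are worth flagging. First, you spend effort proving that the first peel $H_1$ is a single connected induced piece; the paper sidesteps this by working directly with a vertex set $S^*$ maximizing $|E[S]|/(|S|-1)$ and only needing that $E[S^*]\subseteq\{e:x^*_e=x^*_{\min}\}$, not equality with $H_1$. Your structural argument is correct but not strictly required. Second --- and here you genuinely improve on the paper --- in the final edge count the paper invokes quasirandomness a second time, lower-bounding $|E[S^*]|\ge(d-\varepsilon)\binom{|S^*|}{2}$, which leads to the factor $(1-\varepsilon/d)^3\ge 1-3\varepsilon/d$. You instead use the exact identity $|E[S_1]|=\alpha(G)(s_1-1)$ together with $\alpha(G)\ge dn/2$, obtaining $(s_1-1)/(n-1)\ge 1-\varepsilon/d-1/(n-1)$, a strictly sharper constant.
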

\begin{proof}
    Consider the minimum value of $x^*$, $x^*_{min} := \min\limits_{e \in E} x^*_e$. This value is attained at the edges inside some vertex set $S^* \subseteq V$, such that $E[S^*] / (|S^*| - 1)$ is maximized over all vertex subsets $S^*$. Moreover, 
    \begin{equation}
        x^*_{min} = \frac{|S^*| - 1}{E[S^*]}.
    \end{equation}
    We claim that $|S^*| \geq \varepsilon n$. If we assume the contrary, then 
    \begin{equation}
        \frac{|S^*| - 1}{E[S^*]} \geq \frac{|S^*| - 1}{(|S^*| - 1) |S^*| / 2} = \frac{2}{|S^*|} > \frac{2}{\varepsilon n}.
    \end{equation}
    However, consider the value $(|S| - 1) / E[S]$ for $S = V$:
    \begin{equation}
        \frac{|V| - 1}{E[V]} = \frac{n - 1}{m} = \frac{2}{d n} < \frac{2}{\varepsilon n} < \frac{|S^*| - 1}{E[S^*]},
    \end{equation}
    which contradicts the optimality of $S^*$.

    Next, we will use the bound on $d(S^*)$:
    \begin{equation}
        x^*_{min} = \frac{|S^*| - 1}{E[S^*]} = \frac{2}{d(S^*) |S^*|} \geq \frac{2}{(d + \varepsilon) |S^*|}.
    \end{equation}
    Then we can bound the sum of $x^*_e$ over all edges:
    \begin{equation}
        n - 1 = \sum_{e \in E} x^*_e \geq m \cdot x^*_{min} \geq \frac{2m}{(d + \varepsilon) |S^*|}.
    \end{equation}
    \begin{equation}
        |S^*| \geq \frac{2m}{(d + \varepsilon) (n - 1)} = n \frac{d}{d + \varepsilon} \geq n \left( 1 - \frac{\varepsilon}{d} \right).
    \end{equation}

    So, $S^*$ includes almost all vertices if $\varepsilon$ is small. Every edge inside $S^*$ has the same value of $x^*_e = x^*_{\max}$. Finally, we can bound the fraction of edges in $S^*$, which will conclude the proof.
    \begin{equation}
    \begin{split}
        E[S^*] / m &= \frac{d(S^*)|S^*| (|S^*| - 1)}{2 m} \geq \frac{(d - \varepsilon)|S^*| (|S^*| - 1)}{2 m}  \\
        &\geq \frac{n(n-1)}{2m} \cdot (d - \varepsilon) \left( 1 - \frac{\varepsilon}{d} \right)^2 + O(n^{-1}) \\
        &= \left( 1 - \frac{\varepsilon}{d} \right)^3 + O(n^{-1}) \geq 1 - \frac{3 \varepsilon}{d} + O(n^{-1}).
    \end{split}
    \end{equation}
\end{proof}

As shown above, in quasirandom graphs, for almost all the edges, the values $x^*_e$ coincide. We believe that the following might be true.
\begin{conjecture}
    For any $\varepsilon > 0$, there exists an integer $M(\varepsilon)$ such that for any graph $G$ on $n$ vertices, there exists a set $X \subset \mathbb{R}$, $|X| \leq M(\varepsilon)$, and for all but $\varepsilon n^2$ edges, we have $x^*_e \in X$.
\end{conjecture}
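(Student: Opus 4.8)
\textit{Proof proposal.} The natural starting point is the hierarchical characterization of $x^*$ in Theorem~\ref{th_char_x_general}. Applied to the graphic matroid of $G$, it produces a sequence of ``levels'' $H_1,H_2,\dots,H_L$ that partition $E$, with $x^*_e=v_j$ for every $e\in H_j$, where $v_1<v_2<\cdots<v_L$ and $v_j=r(H_j)/|H_j|$ (the ranks here being taken in the successive contractions). Writing $\rho_j=1/v_j$ for the density of level $j$ we get $|H_j|=\rho_j\,r(H_j)$, the densities $\rho_j\ge 1$ are pairwise distinct and decreasing, and $\sum_j r(H_j)=r(E)\le n-1$ by telescoping. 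So the right choice is to let $X$ be a size-$M(\varepsilon)$ subset of $\{v_1,\dots,v_L\}$, and the conjecture reduces to the following purely combinatorial claim: one may discard at most $M(\varepsilon)$ of the levels so that the union of the remaining levels has at most $\varepsilon n^2$ edges.

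The ``large'' levels are trivial to handle: since $\sum_j|H_j|=m\le\binom n2$, at most $2/\varepsilon$ of the levels satisfy $|H_j|\ge\frac{\varepsilon}{2}n^2$, and we discard those. The genuine difficulty is that the surviving levels, though each smaller than $\frac{\varepsilon}{2}n^2$, may a priori carry up to $\Theta(n^2)$ edges in total --- for instance many levels of size $\Theta(n)$ arising from bundles of parallel edges created by earlier contractions --- and no bound derived solely from $\sum_j r(H_j)\le n$, the monotonicity of the $v_j$, and the distinctness of the $\rho_j$ seems to exclude this. I expect this to be the main obstacle.

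To defeat it I would invoke the Szemer\'edi regularity lemma. Fix $\varepsilon$, pick a regularity parameter $\varepsilon'=\varepsilon'(\varepsilon)$ small, and take an equitable $\varepsilon'$-regular partition $V=V_1\cup\dots\cup V_K$ with $K\le K(\varepsilon')$. After deleting the $O(\varepsilon')n^2$ edges that lie inside a part, across an irregular pair, or across a pair of density $<\varepsilon'$, the goal is to prove that $x^*$ is \emph{constant on the edges of each surviving regular pair $(V_a,V_b)$, up to a further total of $O(\varepsilon')n^2$ exceptional edges}. The heuristic is that a regular pair has no atypically dense sub-part, so each set $S_j$ occurring in the hierarchy of Theorem~\ref{th_char_x_general} either contains almost all of both $V_a$ and $V_b$ or almost none of one of them; hence the bundle $E(V_a,V_b)$ is ``resolved'' at a single level. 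If this can be made to hold simultaneously for all $\binom K2$ pairs with $O(\varepsilon')n^2$ total loss, then outside an $O(\varepsilon')n^2$ set the values $x^*_e$ take at most $\binom K2$ distinct values, so $M(\varepsilon):=\binom{K(\varepsilon')}{2}$ works once $\varepsilon'$ is chosen with $O(\varepsilon')=\frac{\varepsilon}{2}$.

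The delicate point in carrying this out is that the hierarchy proceeds by repeated contraction, and each contraction turns the graph into a multigraph on fewer vertices, degrading the regular partition; one cannot simply apply regularity once and read off the conclusion. A careful induction seems necessary --- tracking, level by level, that the ``active'' contracted structure is describable by $O_\varepsilon(1)$ density classes while ensuring the parameters do not blow up across the (possibly $\Theta(n)$ many) levels. An alternative, perhaps cleaner, route is via graph limits: assume the statement fails for a fixed $\varepsilon$ along a sequence $G_n$, pass to a convergent subsequence with graphon limit $W$, show that the minimum-$\ell_2$-norm characterization of $x^*$ from Theorem~\ref{th_char_x_projection} has a graphon analogue $x^*_W\in L^2([0,1]^2)$ to which the normalized $x^*(G_n)$ converge, and then prove via the graphon version of the density hierarchy that $x^*_W$ coincides with a step function off a set of measure $\varepsilon$ (using that levels of positive measure, being of distinct densities bounded below, are boundedly many). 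Either way, the crux is controlling the interaction between the iterative densest-subgraph hierarchy and the coarse (regular, or graphon) approximation of $G$.
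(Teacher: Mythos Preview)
The statement you are attempting to prove is a \emph{conjecture}, and the paper explicitly does not prove it. Immediately after stating it, the authors write: ``We believe so because of the Szemer\'edi's regularity lemma. After considering the regularity partition of the graph, in each regular pair, $x^*$ might be the same for most of the edges, since an $\varepsilon$-regular pair is, in some sense, a bipartite quasirandom graph. This conjecture deviates significantly from this paper's topic, so we did not try to prove it.'' So there is no paper proof to compare against; only this one-sentence heuristic.

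Your proposal is entirely in line with that heuristic: you invoke the regularity lemma and aim to show that each dense regular pair is resolved at a single level of the hierarchy. You go further than the paper by correctly isolating the main obstacle --- that the hierarchy proceeds by repeated contraction, so the regular partition is not preserved across levels and one cannot simply read off the conclusion --- and by proposing two routes around it (careful induction, or passing to a graphon limit). But neither route is actually carried out, and you yourself flag the crux as unresolved: the step ``each set $S_j$ in the hierarchy either contains almost all of both $V_a,V_b$ or almost none of one of them'' is the whole conjecture in disguise, since establishing it requires precisely the control over how the densest-subgraph iteration interacts with the regularity classes that you do not yet have. So what you have written is a coherent and well-informed proof \emph{plan} that matches the paper's intuition and identifies the right bottleneck, but it is not a proof, and the paper offers nothing more.
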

We believe so because of the Szemeredi's regularity lemma. After considering the regularity partition of the graph, in each regular pair, $x^*$ might be the same for most of the edges, since an $\varepsilon$-regular pair is, in some sense, a bipartite quasirandom graph. This conjecture deviates significantly from this paper's topic, so we did not try to prove it.

\subsection{Experimental observations for tree packings}

In this section, we will show some experimental observations concerning the greedy MST packings. We simulate the greedy MST packings for different graphs and give comments on the observed dependencies of the errors $\|x^k - x^*\|_\infty$, $\|x^k\|_2 - \|x^*\|_2$ and $\|x^k\|_{10} - \|x^*\|_{10}$ on $k$. The drop rate of these errors can show a variety of qualitatively different behaviors for different graphs, so we believe that it is nice to show these observations.

Figure \ref{fig_thumbnail_conv} contains data for a greedy MST packing for an example graph. On the top-right you can see the scatter plot of $\|x^k - x^*\|_\infty$ versus iteration number $k$. Thorup's upper bound is also plotted. It is visible that there is a gap in the behavior of the Thorup's bound $O\left(\frac{1}{\sqrt{k}}\right)$ and the real error $\|x^k - x^*\|_\infty$, that drops like $O\left(\frac{1}{k}\right)$. The errors in the second norm and the $p$-norm for $p = 10$ seem to drop like $\sim \frac{1}{k}$, which matches our bounds \eqref{eq_2norm_conv} and \eqref{eq_pnorm_conv} up to a $(\log k)$-factor.

The behavior of the errors can be more complex. In Figure \ref{fig_random_long}, we show the simulation results for a random graph $G(n, p)$, $n=50$, $p=0.2$. One can see that both $\|x^k - x^*\|_2$ and $\|x^k - x^*\|_{10}$ start decreasing like $\sim \frac{1}{k^2}$, but then experience a break and transfer to a regime $\sim \frac{1}{k}$. The break seems to happen when the number of iteration $k$ is much higher than the dimension of $P_{ST}$. Meanwhile, the error in the infinity-norm is still close to $\frac{1}{k}$.

While in the first two examples the errors decreased more or less monotonically, it can happen that the error plots oscillate (while still having a downward trend of $\sim \frac{1}{k^2}$, at least initially), like in Figure \ref{fig_random_dense}. Empirically, the period of oscillations in these cases is approximately $\lambda$. The error $\|x^k - x^*\|_\infty$, while oscillating, still does not deviate far from $\frac{1}{k}$.

In our final example, we consider a square grid graph (Figure \ref{fig_grid}). While $k$ is much less than $n$, the $\|x^k - x^*\|_\infty$ decreases like $\sim \frac{1}{\sqrt{k}}$ instead of $\sim \frac{1}{k}$, like in the previous examples. However, the trend experiences a break, and if $k \gg n$, then $\|x^k - x^*\|_\infty$ starts decreasing like $\sim \frac{1}{k}$ again. The plots of $\|x^k - x^*\|_2$ and $\|x^k - x^*\|_{10}$ also experience breaks and transition from $\sim \frac{1}{k}$ behavior to $\sim \frac{1}{k^2}$. Notice that in Figure \ref{fig_random_long}, we have also seen breaks in $\|x^k - x^*\|_2$ and $\|x^k - x^*\|_{10}$, but then they were the other way around: from $\sim \frac{1}{k^2}$ to $\sim \frac{1}{k}$.

\appendix

\section{Proof of Theorem~\ref{th_char_x_general}}\label{proof:th_char_x_general}
    The proof has three parts. First, we will show that the described $x^*$ is feasible. Second, we will show that the value of $x^*$ assigned to elements through the iterations, increases. This will be used in the third part, where we will recall that $x^*$ is the (unique) minimizer of $\|x\|_2^2$, and show that $\left< \nabla \|x\|_2^2 \vert_{x = x^*}, y - x^* \right> \geq 0$ for all $y \in P_B$, thus proving the optimality of $x^*$.

    \underline{Part 1: feasibility.} We will show that $x^*$ is a convex combination of indicator vectors of some bases. This would prove feasibility. The proof is by induction. After some number of iterations, let $C \subseteq E$ be the set of all elements that were contracted (so, $x^*$ is defined only on $C$ so far). We are going to show the existence of a collection of bases of the restriction matroid $M|C$, such that $x^*|_C$ is a convex combination of the indicator vectors of these bases.
    
    Base case
    : consider the first $H$-set found in step \ref{step_1_x*char}. For the restriction matroid $M|H$, the uniform $x^*_e = \frac{r(H)}{|H|}$ is feasible. Indeed, it does not violate any constraints $x^*(H') \leq r_{M|H}(H') = r_M(H')$ for any $H' \subseteq H$, otherwise, $H'$ would be discovered instead of $H$ in step \ref{step_1_x*char}. Therefore, $x^*|_H$ is a convex combination of some bases of $M|H$.

    Step case: suppose, $C$ is the set of edges that have been contracted so far. Let $H$ be the next set found in step \ref{step_1_x*char}, and let $x_H := \frac{r(H)}{|H|}$ be the value of $x^*_e$ for any element $e \in H$. Notice that here $r(.)$ is the rank function for the contraction matroid $M / C$. By the same argument as above, the newly defined $x^*|_H$ is a convex combination of indicator vectors of some bases $B_1, \ldots, B_k$ of $(M/C)|H = (M|(C \cup H)) / C$. Suppose, $x^*|_H = \lambda_1 I_{B_1} + \ldots + \lambda_k I_{B_k}$ is this convex combination, where $I_{B_i}$ is the indicator vector of $B_i$. 
    
    By the induction assumption, there exist bases $B'_1, \ldots, B'_{k'}$ of $M|C$, such that $x^*|_C$ is their convex combination: $x^*|_C = \lambda'_1 I_{B'_1} + \ldots + \lambda'_k I_{B'_{k'}}$. We are going to ``stitch'' the two families of bases together to get a new family of $kk'$ bases of $M|(C \cup H)$. Define the sets $\widetilde{B}_{ij} := B_i \cup B'_j$ for all indices $i \in \{1, \ldots, k\}$, $j \in \{1, \ldots, k'\}$. First, $\widetilde{B}_{ij}$ are bases of $M|(C \cup H)$. Second, $x^*|_{C \cup H}$ is their convex combination with coefficients $\lambda_i \lambda'_j$. Indeed, for any element $e \in C$,
    \begin{equation}
        \sum_{i = 1}^k \sum_{j = 1}^{k'} \lambda_i \lambda'_j I_{\widetilde{B}_{ij}}(e) = 
        \sum_{i = 1}^k \lambda_i \sum_{j = 1}^{k'} \lambda'_j I_{B'_{j}}(e) = x^*_e,
    \end{equation}
    and similarly for $e \in H$. The induction step is done.

    \underline{Part 2: monotonicity.} We will show that, throughout the iterations, the value $x^*_e$ assigned to the elements increases. This will help us in the optimality analysis.

    Suppose, at some iteration, we found the set $H_1$ in step \ref{step_1_x*char}, and $H_2$ is the set found at the next iteration. Denote $|H_1| = a$, $r(H_1) = b$, $|H_2| = c$, $r(H_2) = d$, where the rank function $r(.)$ is \emph{at the moment of discovery of} $H_1$. In this notation, the value of $x^*$ in $H_1$ is $\frac{b}{a}$, and the value of $x^*$ in $H_2$ is $\frac{r(H_1 \cup H_2) - b}{c}$. 

    Since the set $H_1 \cup H_2$ was not discovered instead of $H_1$, we have the inequality $\frac{a}{b} > \frac{a + c}{r(H_1 \cup H_2)}$.
    This inequality implies $\frac{a}{b} > \frac{a + c - a}{r(H_1 \cup H_2) - c}$, and thus, $\frac{b}{a} < \frac{r(H_1 \cup H_2) - b}{c}$, which is exactly the desired monotonicity of $x^*$.

    \underline{Part 3: optimality.}
    Since $\nabla \|x\|_2^2 = 2x$, it is enough to show that $\left< x^*, y - x^* \right> \geq 0$ for all $y \in P_B$. It is enough to consider $y$ being a vertex of $P_B$. So, we need to prove that the weight of the minimum weight base under weights $x^*$ equals $\left< x^*, x^* \right> = \|x^*\|_2^2$.
    
    The minimum weight base $B$ can be found by greedily adding elements ordered by the weight to $P$, while $P$ keeps being an independent set. In the first $H$-set, $P$ contains $r(H)$ elements with total weight $r(H) \cdot \frac{r(H)}{|H|} = \sum\limits_{e \in H} (x_e^*)^2$. After some number of steps, let $C$ be the set of edges that have been contracted so far. Suppose, $H$ is the next set discovered at step \ref{step_1_x*char}. $B$ contains $r(H)$ elements in $H$, thus, the weight of $B|_H$ is $r(H) \cdot \frac{r(H)}{|H|} = \sum\limits_{e \in H} (x_e^*)^2$. Summing all these equalities, we get that the total weight of $B$ is $\sum\limits_{e} (x_e^*)^2 = \|x^*\|_2^2$. This concludes the proof.

\section{Proof of Theorem \ref{th:truncated-greedy}}\label{sec:th:truncated-greedy}
    Define $H=\{e\in E\::\:x^*_e=\frac 1 \rho\}$
       where $x^\ast$ is the optimal loads vector for $M$.
By Theorem~\ref{th_char_x_general}, $H$ is the maximal set in $H\in\argmax\limits_{\varnothing\ne H\subseteq E} \frac{|H|}{r(H)}$,
and  $|H|=\rho \cdot r(H)$.
    For a value $\alpha\in\mathbb R$ and integer $k\ge 1$ define 
\begin{subequations} 
\begin{eqnarray}
    A^\alpha_{k} & = & \sum\limits_{e\in E_k} (1+\alpha)^{k\left(\frac{1}{\rho} - x^k_e \right)} \\
    B^\alpha_{k} & = & \sum\limits_{e\in H\cap E_k} (1+\alpha)^{k\left( \frac 1 \rho - x^k_e \right)} 
\end{eqnarray}
\end{subequations} 
    We will prove the following by induction on $k=0,1,2,\ldots$.
\begin{subequations} \label{eq:truncated-greedy}
\begin{eqnarray}
    A^\alpha_{k} &\le& m(1+\alpha)^{k/\rho}\left(1-\tfrac{\alpha}{\rho(1+\alpha)}\right)^k \qquad\quad \forall \alpha > -1 \label{eq:truncated-greedy:a} \\
    B^\alpha_{k} &\le& m(1+\alpha)^{k/\rho}\left(1-\tfrac{\alpha}{\rho(1+\alpha)}\right)^k \qquad\quad \forall \alpha \in (-1, 0) \label{eq:truncated-greedy:b} \\
    H &\subseteq & E_k \label{eq:truncated-greedy:c} 
\end{eqnarray}
\end{subequations}
Before proving this, we make the following observations for a fixed $k\ge 1$.
\begin{itemize}
\item
By Theorem~\ref{th_char_x_general}, \eqref{eq:truncated-greedy:c} implies that $\rho(M_k)=\rho$.
It also  implies that $0 \leq \frac{1}{\min_{e\in E_k} x^k_e} - \rho$. Indeed, let $v^k\in\{0,1\}^{E_k}$ be the indicator
vector of the base $B_k$ at step $k$. Since $v^k$ lies in the base polytope, we must have $v^k(H)\le |r(H)|$.
$x^k|_H$ is a convex combination of vectors $v^i|_H$ for $i\in[k]$, hence $x^k(H)\le r(H)=\frac 1 \rho |H|$ and so $\min_{e\in H}x^k_e \le \frac 1 \rho$.
\item Eq.~\eqref{eq:truncated-greedy:a} and~\eqref{eq:truncated-greedy:b} imply that
\begin{subequations}
\begin{align}
    \frac{1}{x^k_e} &\le  \rho (1+\varepsilon) \qquad && \mbox{if } e\in E_k, \varepsilon \in(0, 1] \mbox{ and } k \geq \frac{20 \rho \log m}{\varepsilon^2} \label{eq:truncated-greedy:A} \\
    x^k_e & \le \frac{1+\varepsilon} \rho \qquad && \mbox{if } e\in E_k\cap H, \varepsilon \in(0, 1] \mbox{ and } k \geq \frac{6 \rho \log m}{\varepsilon^2} \label{eq:truncated-greedy:B}
\end{align}
\end{subequations}
Indeed, consider an edge $e \in E_k$ for the first claim or an edge $e\in E_k\cap H$ for the second claim.
The definitions of $A^\alpha_k,B^\alpha_k$ and~\eqref{eq:truncated-greedy:a},\eqref{eq:truncated-greedy:b}
imply that  
$$
(1+\alpha)^{k\left(\frac{1}{\rho} - x^k_e \right)} \le m(1+\alpha)^{k/\rho}\left(1-\tfrac{\alpha}{\rho(1+\alpha)}\right)^k
$$
or equivalently
\begin{equation}\label{eq:truncated-greedy:X}
-x^k_e \cdot \log (1+\alpha) \le \frac {\log m} k + \log \left(1-\tfrac{\alpha}{\rho(1+\alpha)}\right)
\end{equation}
Suppose that~\eqref{eq:truncated-greedy:A} is false, i.e.\ 
$x_e^k < \frac{1}{\rho(1 + \varepsilon)}$. Plugging this into~\eqref{eq:truncated-greedy:X} with $\alpha\in(0,\varepsilon)$ gives
\begin{eqnarray*}
\frac{1}{\rho(1 + \varepsilon)} 
> x_e^k 
&\!\!\!\ge\!\!\!&
-\frac {\log m}{k\log (1+\alpha)}     - \frac {\log \left(1-\tfrac{\alpha}{\rho(1+\alpha)}\right)}{\log (1+\alpha)}
\\ &\!\!\!\ge\!\!\!&
-\frac {\log m}{k(\alpha-\alpha^2)}     + \frac {\alpha/(\rho(1+\alpha))}{\alpha} = \frac{1}{\rho(1+\alpha)} - \frac{\log m}{k\alpha(1-\alpha)}
\end{eqnarray*}
where we used inequalities $\log(1 + t) \leq t$ for $t > -1$ and $\log(1 + t) \geq t - \frac{t^2}{2}$ for $t \geq 0$.
We conclude that
$
k<\frac{(1+\alpha)(1+\varepsilon)}{\alpha(1-\alpha)(\varepsilon-\alpha)}\cdot\rho\log m
$.
Setting $\alpha=\varepsilon/4$ and recalling that $\varepsilon\in(0,1]$ gives $k<\frac{20 \rho\log m}{\varepsilon^2}$, contradicting the choice of~$k$.

Now suppose that~\eqref{eq:truncated-greedy:B} is false, i.e.\ $x_e^k > \frac{1}{\rho(1 - \varepsilon)}$. 
Plugging this into~\eqref{eq:truncated-greedy:X} with $\alpha=-\beta$, $\beta>0$ gives
\begin{eqnarray*}
\frac{1+\varepsilon}{\rho}
< x_e^k 
&\!\!\!\le\!\!\!&
\frac {\log m}{-k\log (1-\beta)}   + \frac {\log \left(1+\tfrac{\beta}{\rho(1-\beta)}\right)}{-\log (1-\beta)}
\\ &\!\!\!\le\!\!\!&
\frac {\log m}{k\beta}     + \frac {\beta/(\rho(1-\beta))}{\beta} = \frac{1}{\rho(1-\beta)} + \frac{\log m}{k\beta}
\end{eqnarray*}
We conclude that
$
k<\frac{1-\beta}{\beta (\varepsilon-\beta-\varepsilon\beta)} \rho\log m
$.
Setting $\beta=\varepsilon/4$ and recalling that $\varepsilon\in(0,1]$ gives $k<\frac{6 \rho\log m}{\varepsilon^2}$, contradicting the choice of~$k$.

\end{itemize}

We now proceed with the induction argument. Checking the base case $k=0$ is straightforward. Suppose that~\eqref{eq:truncated-greedy} holds for $k\ge 0$; let us prove it for $k+1$.
Let $v^{}$ be the indicator vector of the next base $B_{k+1}$ in the packing. Then, $x^{k+1} = \frac{kx^k + v^{}}{k+1}$. 
Let $x^\ast$ be the optimal loads vector in $M_{k}$. 
Eq.~\eqref{eq:truncated-greedy:c} of the induction hypothesis implies that $\rho(M_k)=\rho$ and hence $x^*_e \ge 1/\rho$ for any $e\in E_k$.
We can write
    \begin{eqnarray*}
        A_{k+1} & \!\!\!=\!\!\!& \sum\limits_{e\in E_{k+1}} (1+\alpha)^{(k+1)\left(\frac{1}{\rho} - \frac{k x_e^k + v_e^{}}{k+1} \right)} 
        \le (1 + \alpha)^{(k + 1)/\rho} \sum\limits_{e\in E_k} (1+\alpha)^{-k x_e^k} (1 + \alpha)^{ - v_e^{}}
  \\   & \!\!\!{{\stackrel{\mbox{\tiny (a)}}{=}}}\!\!\!& (1 + \alpha)^{(k + 1)/\rho} \left( \sum\limits_{e\in E_k} (1+\alpha)^{-k x_e^k} - \sum\limits_{e\in E_k} v_e^{} \frac{\alpha}{1 + \alpha} (1+\alpha)^{-k x_e^k} \right)
   \\  & \!\!\!{{\stackrel{\mbox{\tiny (b)}}{\le}}}\!\!\!& (1 + \alpha)^{(k + 1)/\rho} \left( \sum\limits_{e\in E_k} (1+\alpha)^{-k x_e^k} - \sum\limits_{e\in E_k} x_e^* \frac{\alpha}{1 + \alpha} (1+\alpha)^{-k x_e^k} \right)
   \\  & \!\!\!{{\stackrel{\mbox{\tiny (c)}}{\le}}}\!\!\!& (1 + \alpha)^{(k + 1)/\rho} \left( \sum\limits_{e\in E_k} (1+\alpha)^{-k x_e^k} - \sum\limits_{e\in E_k} \frac 1 \rho \frac{\alpha}{1 + \alpha} (1+\alpha)^{-k x_e^k} \right)
    \\ & \!\!\!=\!\!\! & A_{k} (1 + \alpha)^{1/\rho} \left( 1 - \frac{\alpha}{\rho(1 + \alpha)} \right)
    \; {{\stackrel{\mbox{\tiny (d)}}{\le}}} \;\; m(1+\alpha)^{{k+1}/\rho}\left(1-\tfrac{\alpha}{\rho(1+\alpha)}\right)^{k+1}
    \end{eqnarray*}
    Here in (a) we used the fact that $v^{}$ is a $0/1$ vector, and hence $(1+\alpha)^{-v_e^{}} = 1 - v_e^{} \frac{\alpha}{1 + \alpha}$. 
    The inequality in (b) follows from the following observation:
   the ordering of the edges with respect to the weights $w_e:=-\frac{\alpha}{1 + \alpha} (1+\alpha)^{-k x_e^k}$ coincides with the ordering of the edges with respect to the weights $x_e^k$,
   and hence $v^{}$ minimizes the value of $\sum\limits_e w_e v_e^{} $ over the base polytope. 
   Finally, in (d) we used eq.~\ref{eq:truncated-greedy:a} of the induction hypothesis.
   We proved eq.~\eqref{eq:truncated-greedy:a} of the induction step.

   To prove eq.~\eqref{eq:truncated-greedy:b}, we use the same derivation as above but replace
   $A^\alpha_i$ with $B^\alpha_i$, and summations $\sum_{e\in E_k}$ and $\sum_{e\in E_{k+1}}$ with respectively $\sum_{e\in H}$ and $\sum_{e\in H\cap E_{k+1}}$.
   (Note that $H\cap E_k=H$ by the induction hypothesis).
   In (c) we now have an equality, since $x^*_e=1/\rho$ for all $e\in H$.
   Let us justify the inequality in (b).
    Recall that $x^*$ is a convex combination of some collection of bases. Consider any base $B$ from this collection.
   It suffices to show that we have $\sum_{e\in H} w_e v_e=\sum_{e\in B^{k+1}\cap H} w_e  \le \sum_{e\in B\cap H} w_e$.
   The structure of $x^*$ established in Theorem~\ref{th_char_x_general} implies that $B\cap H$ is a base in $M|H$. Set $B^{k+1}\cap H$ can be extended to a minimum-weight base $B'\supseteq B^{k+1}\cap H$ of $M|H$,
   with $\sum_{e\in B'} w_e  \le \sum_{e\in B\cap H} w_e$. Recall that for eq.~\eqref{eq:truncated-greedy:b} we are assuming that $\alpha<0$, and hence all edge weights $w_e$ are nonnegative. The claim follows.

   It remains to prove that $H\subseteq E_{k+1}$. Since $\rho\in[\rho^-,\rho^+]$ by assumption, each removed edge $e\in E_k$ satisfies $x^k_e>2/\rho^-\ge 2/\rho$ and $k\ge 24 \rho^+\log m\ge 24 \rho\log m$. 
   Property~\eqref{eq:truncated-greedy:B} with $\varepsilon=1$
   implies that $x^\ast_e>\frac 1\rho$, and thus such edge is not in $H$.

\section{Dynamic minimum-weight pseudoforest}
\label{appendix_dynamic_pseudoforest}

In this section we prove the following result.
\begin{theorem}\label{th:dynamic-pseudoforests}
There exists an algorithm for maintaining a minimum-weight maximal pseudoforest in a dynamically changing
graph that uses $O(\log n)$ time per update (edge insertion or deletion).
\end{theorem}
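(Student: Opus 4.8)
The plan is to maintain the minimum-weight maximal pseudoforest $P$ as a \emph{base} of the bicircular matroid $M=M(G)$, and to process each edge insertion or deletion by a constant number of elementary matroid exchanges, each implementable in $O(\log n)$ worst-case time with a top-trees structure. First the reduction. On inserting $e=(u,v)$: if $P\cup\{e\}$ is again a pseudoforest, add $e$ (a short greedy argument shows that the set obtained is again the minimum-weight base, even when $w(e)$ is large); otherwise $P\cup\{e\}$ has a unique bicircular circuit $C_e$ and we replace the maximum-weight edge $f\in C_e$ by $e$, doing nothing if $f=e$. On deleting $e\notin P$, $P$ is unchanged (restricting the matroid keeps the old optimum feasible and optimal). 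On deleting $e\in P$, exactly one component $K'$ of $P$ loses its cycle; the \emph{fundamental cocircuit} of $e$ is, for the bicircular matroid, precisely $\{e\}$ together with the set of $G$-edges incident to $V(K')$ that are not in $P$, so the new base is $(P\setminus\{e\})\cup\{f\}$ for $f$ the minimum-weight such edge, or $P\setminus\{e\}$ itself if no such edge exists (the case $e$ a coloop, i.e.\ $V(K')$ becomes a tree component of $G$). Correctness of all rules follows from the matroid exchange axioms together with the greedy (sort-by-weight) characterization of the minimum-weight base; the only mildly delicate point is the decremental rule, for which one checks that the cheapest element completing $P\setminus\{e\}$ to a base is globally optimal by comparing the greedy runs on $M$ and on $M\setminus e$ (they agree up to $w(e)$ and diverge at exactly one later element).

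For the data structure, delete one designated edge from the cycle of every cyclic component of $P$; the resulting forest $F$ is stored in a top-trees structure, and each component records its designated cycle edge (or the fact that it is acyclic). The top tree maintains, over each tree, the maximum-weight edge on any exposed path (used to compute circuits), and the minimum over all vertices of a tree of a per-vertex quantity $\nu(v)$, defined as the weight of the cheapest $G$-edge incident to $v$ that is currently \emph{not} in $P$; the $G$-edges incident to each vertex are kept in a balanced BST keyed by weight, so $\nu(v)$ is recomputed in $O(\log n)$ after an $O(1)$ change to $P$ or to $G$ near $v$. The feature that makes pseudoforests genuinely easier than spanning trees is the \emph{shape} of the replacement set: when $e\in P$ is deleted it is a set of $G$-edges \emph{incident to the subtree $V(K')$}, so the replacement edge is $\argmin_{v\in V(K')}\nu(v)$, a subtree-min query answerable in $O(\log n)$; for the graphic matroid the analogous set is a \emph{cut}, which is not a subtree aggregate, and this is exactly what forces the $\log^{O(1)} n$ blow-up in dynamic minimum spanning forest.

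Assembling the update: on an insertion, a case analysis on whether $u,v$ lie in the same tree of $F$ and whether the incident $P$-components are cyclic or acyclic tells us whether to link $e$, to mark $e$ as a new designated cycle edge, or to locate $C_e$; in the latter case $C_e$ is a ``theta'' or ``dumbbell'' built from a few tree paths of $F$ together with $e$ and one or two designated cycle edges, so its maximum-weight edge $f$ is found with $O(1)$ \texttt{expose}/path-max calls, and the swap $P\leftarrow (P\cup\{e\})\setminus\{f\}$ is realized by $O(1)$ \texttt{cut}/\texttt{link} operations (including the sub-cases where removing $f$ splits a tree of $F$ in two). On a deletion of $e\in P$, $K'$ is read off $F$ after an $O(1)$ local adjustment (cutting $e$ and, if $e$ lay on a cycle together with its designated edge, re-linking that designated edge so that $K'$ is a single tree), a subtree-min query returns $f$ (or reports a coloop), and we re-link accordingly. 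In all cases the work is $O(1)$ top-tree operations, $O(1)$ BST operations, and $O(1)$ updates of designated cycle edges and of affected $\nu$-values, hence $O(\log n)$ worst-case per update.

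I expect the main obstacle to be organizational rather than conceptual: keeping $F$, the designated cycle edges, the per-vertex BSTs, and the two top-tree aggregates mutually consistent across every sub-case of a swap --- in particular the sub-cases where a swap splits one tree of $F$ into two or merges two trees of $F$, and the bookkeeping of which side of a cut becomes the newly-acyclic component. Each individual such step is a routine top-tree manipulation, but the full case table must be written out carefully to be sure nothing is missed; once that is done, the $O(\log n)$ worst-case bound is immediate.
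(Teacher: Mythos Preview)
Your proposal is correct and follows essentially the same approach as the paper. Both arguments store the pseudoforest as a forest (obtained by deleting one designated cycle edge per cyclic component) inside a top-trees structure, handle insertions by locating the bicircular circuit via path-max queries through the designated edges, and handle deletions by a subtree-min over per-vertex heaps of non-$P$ incident edges; the paper phrases the exchange rules combinatorially rather than in matroid language and uses two parallel top-tree structures (one for path-max, one augmented with leaf nodes $\bar v$ for the subtree-min), but these are cosmetic differences.
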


We will rely on the top trees data structure for dynamic forests of Alstrup-Holm-De Lichtenberg-Thorup~\cite{top-trees}  that supports the following operations.
\begin{itemize}
\item ${\bf tree}(v)$: returns a pointer to the tree containing $v$.
\item ${\bf link}(u,v, w)$: connects nodes $u,v$ of distinct trees with an edge $(u,v)$ of weight $w$.
\item ${\bf cut}(e)$: removes existing edge $e$.
\item ${\bf findmax}(u,v)$: returns an edge of maximum weight on the path between nodes $u,v$ in the same tree.
\item ${\bf findmin}(v)$: returns an edge of minimum weight in the tree containing node $v$.
\end{itemize}
Each one of them can be implemented in $O(\log n)$ worst-case time.
(For operation ${\bf findmax}(u,v)$, see~\cite[Theorem 2.2]{top-trees}.
For operation ${\bf findmin}(v)$, see~\cite[Theorem 2.4]{top-trees}. Operation ${\bf tree}(v)$ follows easily from the implementation,
and can also be simulated using ${\bf findmin}(v)$: we can treat the minimum-weight edge in a given tree as the unique identifier
of this tree after making all edge weights unique.) Note that one can check whether nodes $u,v$ belong to the same tree in $O(\log n)$
time by comparing pointers ${\tt tree}(u)$ and ${\tt tree}(v)$.

We now proceed with the proof of Theorem~\ref{th:dynamic-pseudoforests}.
Below we describe how updating graph $G=(V,E,w)$ affects a minimum-weight pseudoforest $P$.
For a subset of edges $F\subseteq E$ we denote $V(F)$ to be the set of nodes incident to at least one edge in $F$.
We also let $P'$ be the pseudoforest after the update.
\begin{itemize}
\item \underline{Inserting edge $e=(u,v)$ with weight $w(e)$.}
There are at most two paths in $P$ between $u$ and~$v$. Let $Q$ be of them (or let $Q=\varnothing$ if $u,v$ are in different components).
For $w\in\{u,v\}$ let $C_w$ be the cycle of the component to which $w$ belongs (or $C_w=\varnothing$ if this component is acyclic).
Define $F=Q\cup C_u \cup C_v$.
Clearly, $P+e-f$ is a pseudoforest for an edge $f\in E-e$ if and only if $f\in F$.
Therefore, we can take $f\in\arg\max_{f\in F} w(f)$ and then set $P'=P+e-f$ if $w(e)< w(f)$, or $P'=P$ otherwise.
\item \underline{Removing edge $e$ from $G$.} If $e\notin P$ then $P'=P$.
Otherwise let $C$ be the component of $P$ containing $e$. Define set $A\subseteq C-e$ as follows. \\
(1) If $e$ belongs to a cycle in $C$ then let $A=C$. \\
(2) Otherwise removing $e$ from $C$ splits $C$ into two components.
At least one of them must be acyclic; let $A$ be such component. 
(Note, if both components are acyclic then $V(C)$ does not have incident edges in $E-P$ due to maximality of $P$). \\
Let $F$ be the set of edges that are incident to a node in $V(A)$ and are not present in $P$.
Clearly, $P-e+f$ is a pseudoforest for an edge $f\in E-e$ if and only if $f\in F$.
Therefore, if $F$ is empty then $P'=P-e$, otherwise take $f\in\arg\min_{f\in F} w(f)$ and then set $P'=P-e+f$.
\end{itemize}

To implement these operations efficiently, we will store the following information.
For each component $C$ of $P$ we will maintain some edge $e_C$ on the cycle in $C$;
if $C$ is acyclic then we let $e_C=\varnothing$. The tree $T_C=C-e_C$
will be stored in a top tree data structure $\calD$, with edge weights as in the original graph.
The edge $e_C$ will be stored at the pointer corresponding to this tree.
In addition, we will maintain a separate top tree data structure $\calD'$
that will store trees $T_C$ as above augmented with edges $(v,\bar v)$ for all nodes $v$,
where $\bar v$ is a unique copy of $v$. At each edge $(v,\bar v)$ we will store a priority
queue of edges incident to $v$ in $G$ and not present in $P$, with edge weights as the key.
The minimum of these weights will be the weight of $(v,\bar v)$ in $\calD'$, while the weight of other edges in $\calD'$ will be $+\infty$.

We claim that after each graph modification
 described above data structures $\calD$ and $\calD'$ can be updated via a constant number of calls to operations on dynamic trees.
Consider, for example, the case when edge $(u,v)$ is added to $G$ but not to $P$.
We then remove edge $(u,\bar u)$ from $\calD'$, insert $(u,v)$ into the priority queue at $(u,\bar u)$,
and then insert $(u,\bar u)$ back into $\calD'$ with the new weight. The same is done for $v$.
All other operations are similarly straightforward.

Finding $\arg\max_{f\in F}w(f)$ during edge insertion can be done by querying $\calD$,
while finding $\arg\min_{f\in F}w(f)$ during edge deletion can be done by querying $\calD'$.
This concludes the proof of Theorem~\ref{th:dynamic-pseudoforests}.

\section*{Acknowledgements}

Pavel Arkhipov is grateful to Monika Henzinger, who originally encouraged him to think about tree packings.

\bibliographystyle{plain}
\bibliography{bibliography}

\pagebreak

\begin{figure}[h!]
    \centering
    \vspace{-2.5cm}
    \centerline{\includegraphics[width=1.05\textwidth]{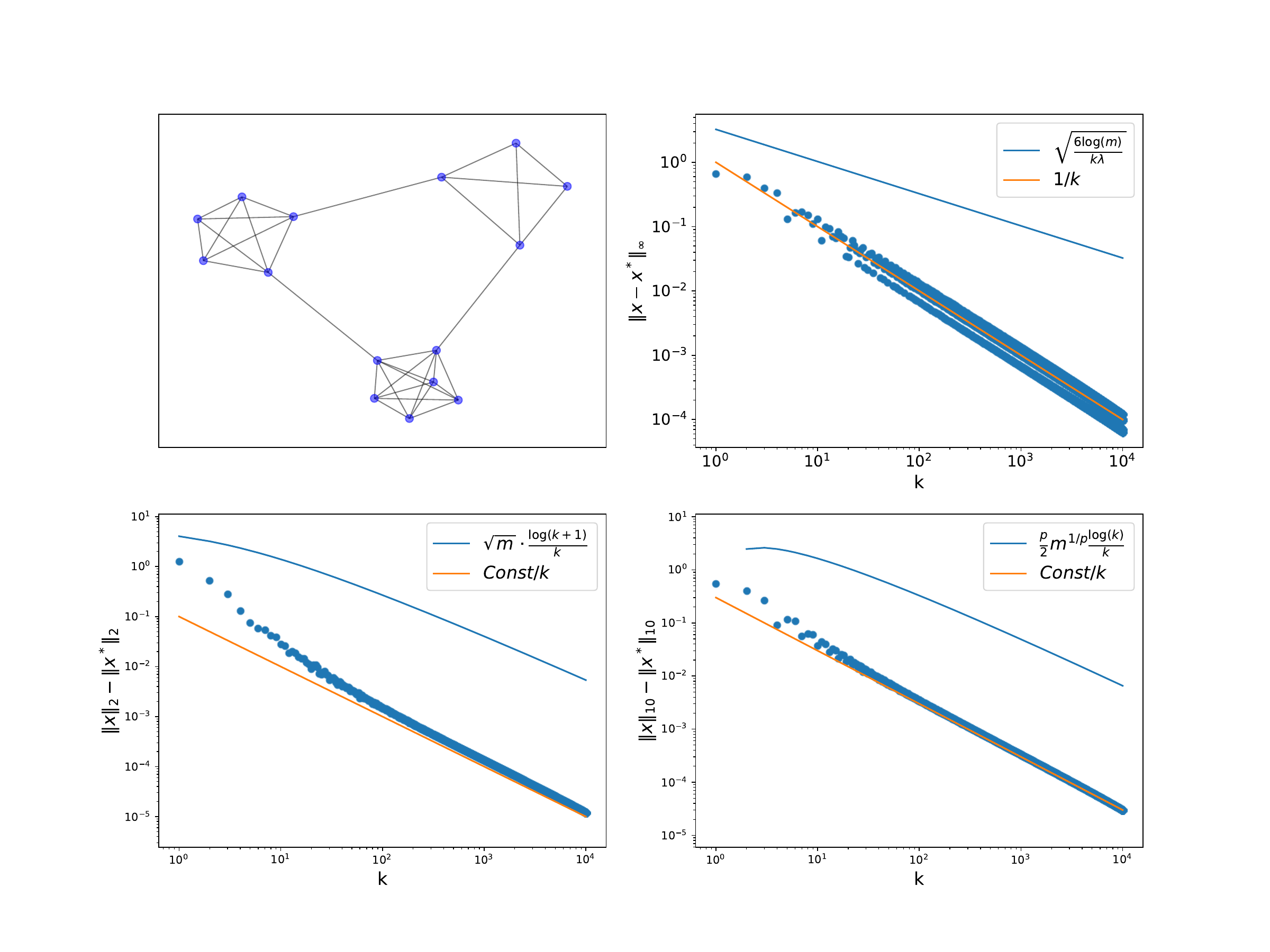}}
    \vspace{-1cm}
    \caption{Convergence plots of the MST packing for an example graph.}
    \label{fig_thumbnail_conv}
    \centerline{\includegraphics[width=1.05\textwidth]{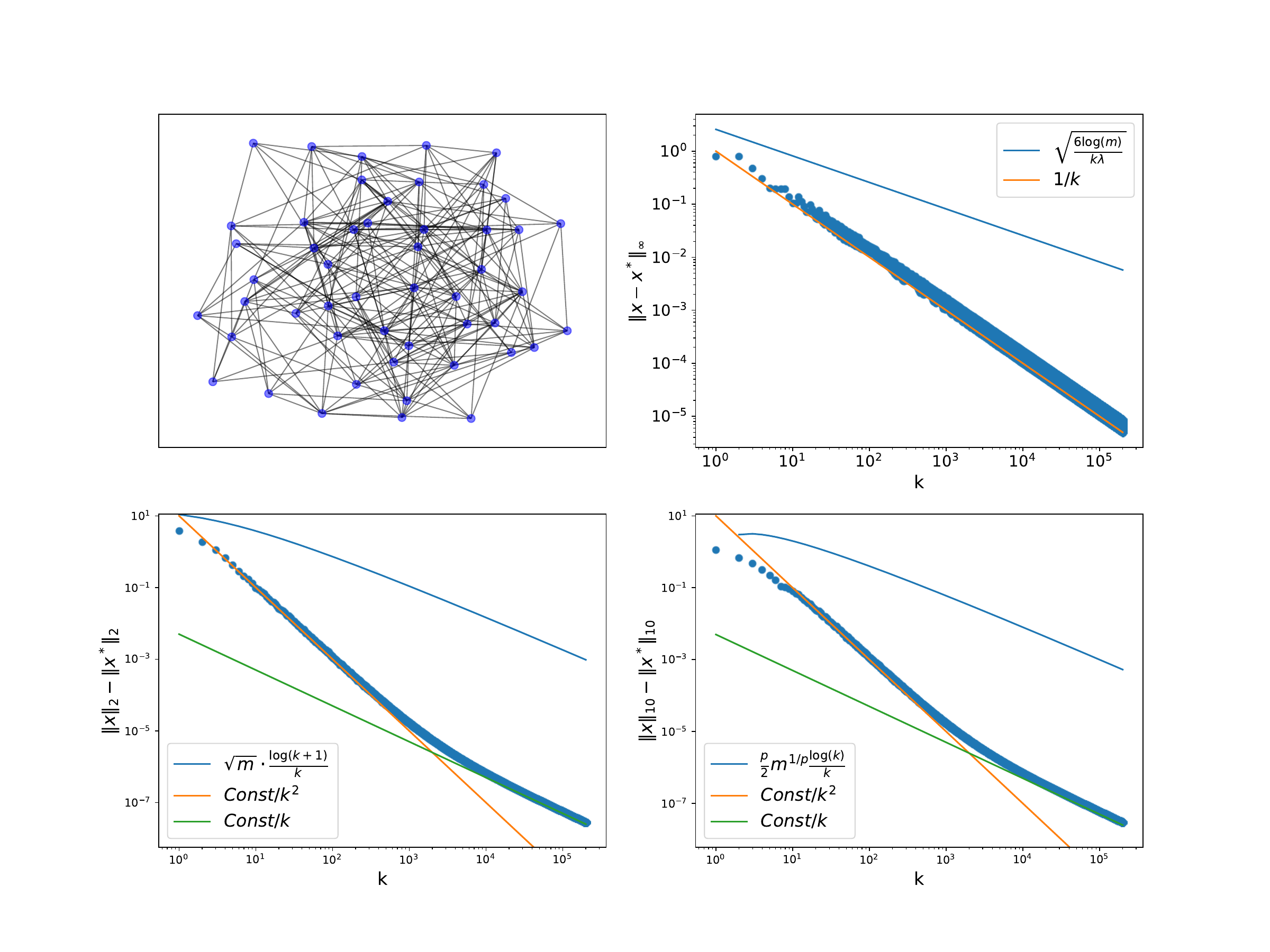}}
    \vspace{-1cm}
    \caption{Convergence plots of the MST packing for a random graph $G(n, p)$, $n=50$, $p=0.2$.}
    \label{fig_random_long}
\end{figure}

\begin{figure}[h!]
    \centering
    \vspace{-2.5cm}
    \centerline{\includegraphics[width=1.05\textwidth]{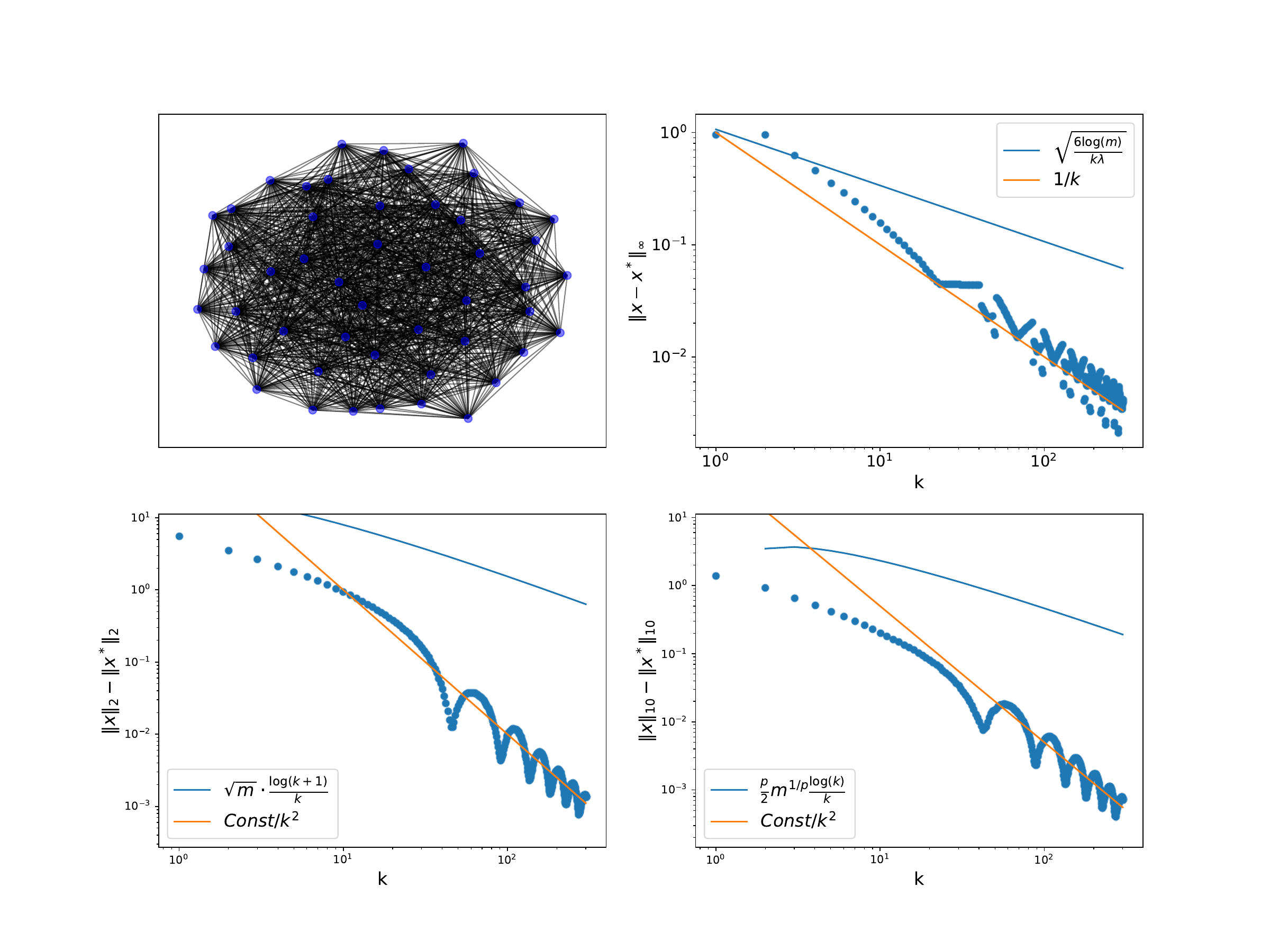}}
    \vspace{-1cm}
    \caption{Convergence plots of the MST packing for a random graph $G(n, p)$, $n=50$, $p=0.9$.}    \label{fig_random_dense}
    \centerline{\includegraphics[width=1.05\textwidth]{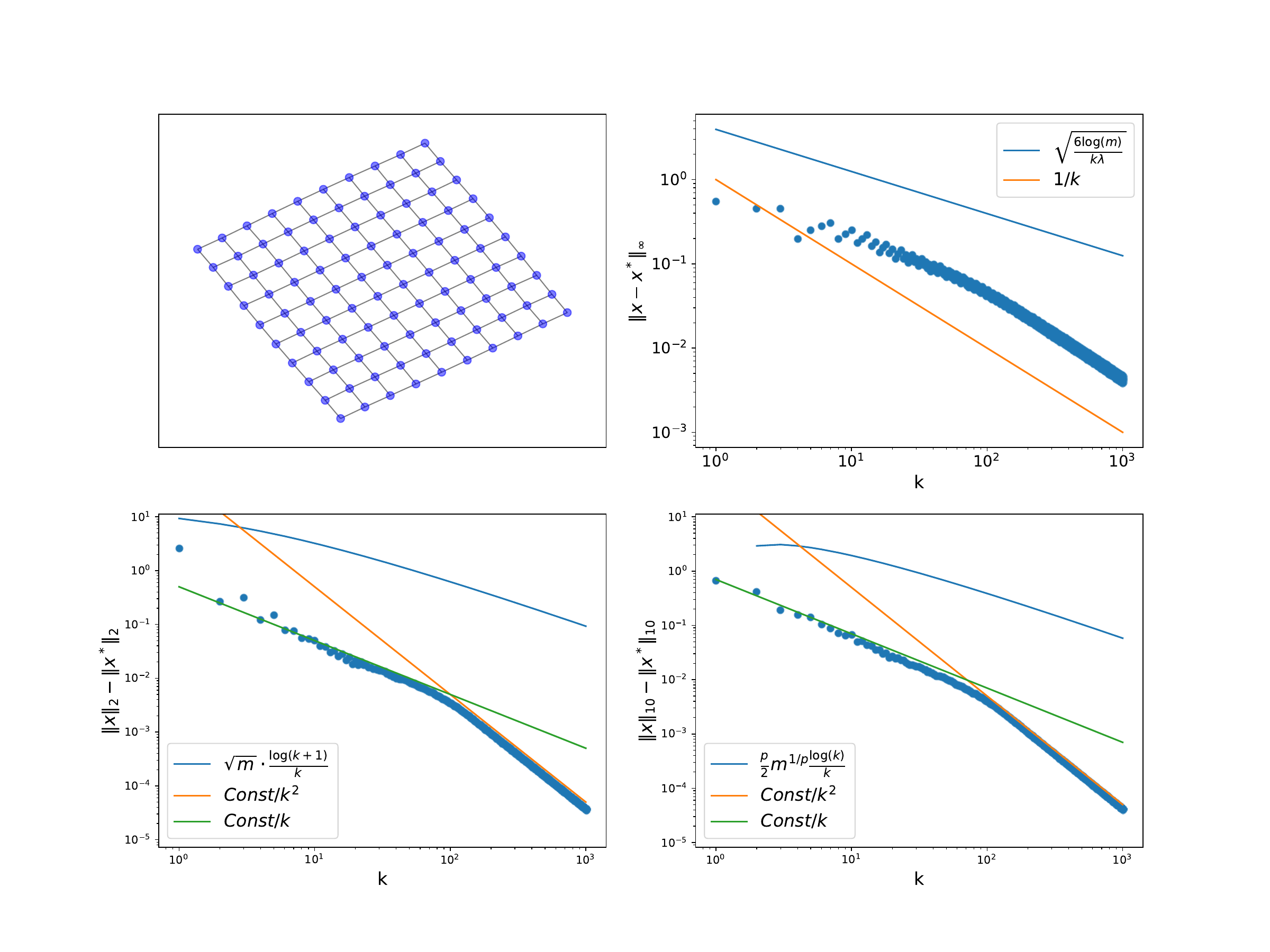}}
    \vspace{-1cm}
    \caption{Convergence plots of the MST packing for a square grid graph.}
    \label{fig_grid}
\end{figure}

\end{document}